\newif\ifarxiv
\newif\ifjournal
\def\expspace#1{{\,#1}}
\def\expspace#1{#1}
\definecolor{labelkey}{rgb}{0,0,.75}
\definecolor{MyGreen}{rgb}{0,.6,.2}
\definecolor{MyDarkBlue}{rgb}{.1,.1,.75}
\date{July 6, 2014}
\title{Initial Data in General Relativity Described by Expansion, Conformal Deformation and Drift}
\author{David Maxwell}
\DeclareMathOperator{\ck}{\bf L}
\renewcommand{\div}{\mathop{\rm div}\nolimits}
\newcommand{\Div}{\mathop{\rm Div}\nolimits}
\DeclareMathOperator{\Lap}{\Delta}
\DeclareMathOperator{\Vol}{\rm Vol}
\DeclareMathOperator{\tr}{\rm tr}
\DeclareMathOperator{\Lie}{\mathrm{Lie}}
\DeclareMathOperator{\id}{\rm id}
\DeclareMathOperator{\extd}{\mathbf d}
\renewcommand{\Im}{\mathop{\mathrm{Im}}\nolimits}
\DeclareMathOperator{\Ker}{\mathrm{Ker}}
\DeclareMathOperator{\Drift}{\mathrm{Drift}}
\def\ip<#1,#2>{\left<#1,#2\right>}
\let\ol\overline
\def\tensor{\otimes}
\newcommand{\ra}{\rightarrow}
\newcommand{\dimk}{a}
\newcommand{\Reals}{\mathbb{R}}
\newcommand{\calD}{\mathcal{D}}
\newcommand{\calE}{\mathcal{E}}
\newcommand{\calF}{\mathcal{F}}
\newcommand{\calG}{\mathcal{G}}
\newcommand{\calK}{\mathcal{K}}
\newcommand{\calL}{\mathcal{L}}
\newcommand{\calM}{\mathcal{M}}
\newcommand{\calQ}{\mathcal{Q}}
\newcommand{\calC}{\mathcal{C}}
\newcommand{\calV}{\mathcal{V}}
\newcommand{\calX}{\mathcal{X}}
\newcommand{\bfsigma}{\boldsymbol{\sigma}}
\def\dc<#1,#2,#3>{\{#1;\;#2,#3\}}
\def\define#1{{\bf #1}}
\newcommand{\stokeslap}{\mathcal{L}}
\begin{document}
\newtheorem{theorem}{Theorem}[section]
\newtheorem{conjecture}[theorem]{Conjecture}
\newtheorem{problem}[theorem]{Problem}
\newtheorem{proposition}[theorem]{Proposition}
\newtheorem{corollary}[theorem]{Corollary}
\newtheorem{lemma}[theorem]{Lemma}
\theoremstyle{definition}
\newtheorem{definition}[theorem]{Definition}
\numberwithin{equation}{section}

\maketitle
\begin{abstract}
The conformal method is a technique for finding 
Cauchy data in general relativity solving the 
Einstein constraint equations, and its parameters include a conformal class,
a conformal momentum (as measured by a densitized lapse), 
and a mean curvature.  Although the conformal method is successful
in generating constant mean curvature (CMC) solutions of the constraint
equations, it is unknown how well it applies in the non-CMC setting,
and there have been indications that it encounters difficulties there.
We are therefore motivated to investigate alternative generalizations of the CMC conformal method.

Introducing a densitized lapse into the 
ADM Lagrangian, we find that solutions of the 
momentum constraint can be described in terms of three parameters.
The first is conformal momentum as it appears in the standard conformal method.
The second is volumetric momentum, which appears as an explicit parameter in
the CMC conformal method, but not in the non-CMC formulation. We have
called the third parameter drift momentum, and it is the
conjugate momentum to infinitesimal 
motions in superspace that preserve conformal
class and volume form up to independent diffeomorphisms.  This
decomposition of solutions of the momentum constraint
 leads to extensions of the CMC conformal method
where conformal and volumetric momenta both appear as 
parameters. There is more than one way to treat drift momentum,
in part because of an interesting duality that emerges,
and we identify three candidates for
incorporating drift into a variation of the conformal method.
\end{abstract}

\section{Introduction}\label{sec:intro}

An initial data set in general relativity consists of the
geometry and matter distribution of the universe at an instant in time, along
with the instantaneous rate of change of these quantities. The associated
Cauchy problem is to determine an ambient spacetime for the initial data
set that satisfies the Einstein equations as well as the applicable matter field equations.
In contrast to Newtonian gravity, 
initial data cannot be freely specified, and must satisfy certain
underdetermined compatibility conditions known as the Einstein constraint equations.  
These constraint PDEs admit a wide variety of
solutions, and as a consequence we have enormous flexibility, 
but not complete freedom, in specifying initial conditions.  One would therefore like to 
find intrinsic parameters describing the set of solutions of the constraint equations.

This problem is already difficult, and not yet understood, for vacuum spacetimes with
a vanishing cosmological constant,
in which case an initial data set consists of a Riemannian manifold $(M^n,g_{ab})$
and a symmetric tensor $K_{ab}$ representing the second fundamental form of the embedding of $M^n$ 
into its ambient spacetime. Vacuum spacetimes
are Ricci flat, and hence the Gauss and Codazzi equations imply the following relations
between $g_{ab}$ and $K_{ab}$:
\begin{subequations}\label{eq:constraints}
\begin{alignat}{2}
R_{ g} - | K|_{ g}^2 + (\tr_{ g}  K)^2 &= 0 &\qquad&\text{\small[Hamiltonian constraint]}\label{eq:hamiltonian}\\
\div_{ g}  K- \extd (\tr_{g} K) &= 0&\qquad&\text{\small[momentum constraint]}\label{eq:momentum}
\end{alignat}
\end{subequations}
where $\mathbf d$ is the exterior derivative, $R_g$ is the scalar curvature, $\div_g$ is the divergence, and $\tr_g$ is the trace operator
of $g_{ab}$. Equations \eqref{eq:constraints} are the vacuum Einstein constraint equations, and the fact that they are underdetermined reflects the physical property that gravitational waves can propagate in
vacuum, as well as the gauge property that we have freedom to choose coordinates in spacetime.

There are a number of approaches for finding solutions of the constraint equations in specific circumstances,
and we note in particular the examples provided by gluing methods
\cite{Chrusciel:2003ug}\cite{Corvino:2006wf}\cite{Chrusciel:2005jo}\cite{Chrusciel:2011jp}
\cite{Carlotto:2014}, as well as the density and perturbation
techniques of \cite{Huang:2009gm}\cite{Huang:2010dh}. These constructions provide deep insight into the 
diversity of solutions of the constraint equations and their properties, but they do not yield parameterizations. Indeed, as far as 
concrete parameterizations are concerned,  there is presently only a single general purpose candidate,
the conformal method, and it occurs in the literature in two principal variations. The
original conformal method was initiated by Lichnerowicz \cite{Lichnerowicz:1944} 
and later extended by York to construct constant-mean curvature (CMC) solutions \cite{York:1973fla}  
and, along with O'Murchadha, to construct non-CMC solutions of the constraint equations \cite{OMurchadha:1974bf}. 
Subsequently York developed the Lagrangian conformal thin-sandwich (CTS) method \cite{YorkJr:1999jo}
and then with Pfeiffer presented the Hamiltonian form of the CTS method \cite{Pfeiffer:2003ka}.
It turns out that the standard and CTS conformal methods are two different ways to write down the 
same parameterization of the constraint equations \cite{Maxwell:2014a}, and we will 
refer to all these techniques collectively as the conformal method. Using the
language of \cite{Maxwell:2014a} that emphasizes the role of conformal geometry,
the Hamiltonian form of the conformal method has four parameters:
\begin{itemize}
\item A conformal class $\mathbf g$, represented by the choice of a metric $g_{ab}\in \mathbf g$.
\item A conformal momentum $\bfsigma$, represented by a pair $(g_{ab};\; \sigma_{ab})$
where $\sigma_{ab}$ is trace-free and divergence free.  Writing $q=2n/(n-2)$ for the critical
Sobolev exponent, if $\phi>0$ is a conformal
factor then the pair $(\phi^{q-2} g_{ab};\; \phi^{-2}\sigma_{ab})$
represents the same
conformal momentum $\bfsigma$.
\item An arbitrary function $\tau$ dictating a mean curvature.
\item A so-called densitized lapse represented by
a pair $(g_{ab};\; N)$ where $N$ is a positive function.  If
$\phi>0$ is a conformal factor, $(\phi^{q-2} g_{ab};\; \phi^q N)$
represents the same densitized lapse.
\end{itemize}
The choice of a densitized lapse $\mathbf N$ allows for a notion of conformal momentum to
be assigned to a solution of the constraint equations, and after fixing a densitized 
lapse every solution of the constraint equations uniquely determines 
conformal parameters $(\mathbf g, \bfsigma, \tau, \mathbf N)$.  The central question 
for the conformal method is the extent to which this map is a bijection. 

Suppose for concreteness that $M$ is compact.  If we restrict our attention
to CMC solutions of the constraint equations (i.e. solutions with $\tau\equiv\tau_0$
for some constant $\tau_0$) then the map from solutions of the constraint equations 
onto conformal parameters is indeed a bijection \cite{Isenberg:1995bi}, with the following caveats
based on the sign of the Yamabe invariant $Y_{\mathbf g}$ of the conformal class
$\mathbf g$:
\begin{itemize}
\item If $Y_g>0$, then $\bfsigma=0$ is impossible.
\item If $Y_g<0$, then $\tau_0=0$ is impossible.
\item If $Y_g=0$, then $\bfsigma=0$ is impossible and $\tau_0=0$ is impossible,
except that there is a homothety family of solutions corresponding to
the case where both  $\bfsigma=0$ and $\tau_0=0$.
\end{itemize}
Moreover, these same results largely
extend into the near-CMC regime: see, e.g., \cite{Isenberg:1996fia} and \cite{Allen:2008ef}
as augmented by \cite{Maxwell:2014a} 
for existence and uniqueness theorems, and see \cite{Isenberg:2004jd} for non-existence results when $Y_\mathbf g\ge 0$ and $\bfsigma =0$.  Indeed, the theory for near-CMC solutions is satisfactory and
complete, except that
existence is not understood if $\mathbf g$ admits nontrivial conformal Killing fields.

On the other hand, the properties of the conformal method when $\tau$ is far-from-CMC are largely unknown.
On compact manifolds we have a single far-from CMC existence theorem \cite{Holst:2009ce}\cite{Maxwell:2009co}:
given a Yamabe positive conformal class $\mathbf g$ and an arbitrary mean curvature $\tau$, 
if $\bfsigma\neq 0$ is close to zero 
(with closeness depending on $\tau$), there exists \textit{at least} one
associated solution of the constraint equations.
This foray into far-from-CMC territory
can, moreover, be thought of as a perturbation off of a CMC solution with $\tau_0=0$ \cite{Gicquaud:2014}.
And although the far-from-CMC existence result 
is  consistent with 
the possibility that the good properties of the CMC conformal method extend to far-from-CMC
solutions, subsequent 
	case studies in \cite{Maxwell:2011if} and \cite{Maxwell:2014b} show that at least 
sometimes they do not. 

The work in \cite{Maxwell:2011if} exhibits a family 
of symmetric conformal data on the torus such that
in the far-from-CMC regime there are 
multiple solutions when $\bfsigma$ is small, no solutions with the symmetry
of the data when $\bfsigma$ is large, and certain rare cases that lead
to one-parameter families of non-CMC solutions.  
The mean curvatures studied in \cite{Maxwell:2011if} have $L^\infty$ regularity, and although
it not known if similar difficulties occur for smooth mean curvatures, 
the follow-up study in \cite{Maxwell:2014b} shows that at least 
the one-parameter families persist.

The conformal parameters considered in \cite{Maxwell:2014b} 
have the form $(\mathbf g, \mu \bfsigma^\flat, \tau, \mathbf N)$ where 
$\mathbf g$ is the conformal class of a flat product metric $g_{ab}$ on the torus, $\bfsigma^\flat$
is a particular conformal momentum, $\mu$ is a constant, and where $\tau$ and $\mathbf N=(g_{ab};\; N)$
are arbitrary, except that $\tau$ and $N$ depend on only one factor of the torus. 
Writing
\begin{equation}\label{eq:taustarintro}
\tau^* = \frac{\int_M N \tau\; \omega_g}{\int_M N\; \omega_g}
\end{equation}
where $\omega_g$ is the volume form of $g_{ab}$, \cite{Maxwell:2014b} 
shows that if $\mu$ and $\tau^*$ have the same sign, then the conformal
parameters generate a slice of a flat spacetime (typically a Kasner solution,
with certain other spacetimes occurring non-generically). The case where $\tau^*=0$
is special, however: if $\mu$ and $\tau^*$ both vanish, then the conformal 
parameters construct a one parameter family of solutions of the constraint equations.
Note that if $\tau=\tau_0$ for some constant $\tau_0$, then $\tau^*=\tau_0$ and the CMC
one-parameter families occur when $\tau_0=0$.  But if $\tau$ is not constant then the
computation of $\tau^*$ involves a particular choice of representative of $\mathbf g$,
and the condition $\tau^*=0$ is not readily computed in advance. Indeed, $\tau^*$ can be 
computed with respect to the physical metric that solves the constraint equations, 
but to compute $\tau^*$ when working with some other background metric, one must first 
conformally transform to a flat metric, at which point one has all but solved the 
constraint equations \cite{Maxwell:2014b}.  Hence we have an example of non-uniqueness
for certain non-CMC conformal parameters where the non-uniqueness is difficult to detect
\textit{a priori}.

The success of the conformal method in the CMC setting has 
physical consequences including, for example, Fischer and Moncrief's program of
Hamiltonian reduction \cite{Fischer:2001kh}.
In contrast, failures of the conformal method for non-CMC conformal parameters 
may not imply anything in particular about general relativity.
The set of solutions of the constraint equations has, when given a suitable topology, a manifold structure
\cite{Chrusciel:2003ug}\cite{Bartnik:2005tl} at generic points, 
and there are many possible choices of charts for this manifold.
Although the conformal method provides a useful and successful chart in a neighborhood of CMC solutions,
we interpret the evidence to date as suggesting that this chart simply breaks 
down outside of this neighborhood.  If this is indeed the case, the details of 
this breakdown may be meaningful facts about the conformal method, but perhaps not
about the constraint equations.

In this article we examine the possibility that the CMC conformal method admits
an extension, other than the standard conformal method, that potentially 
has better properties for non-CMC solutions of the constraint equations.  
In particular, we identify geometrically
and physically motivated alternatives that replace the mean curvature parameter $\tau$ with 
two independent quantities: the constant $\tau^*$ from equation \eqref{eq:taustarintro} along with 
a second parameter, described below, that we will call a drift.  The guiding principle of 
leading to these alternatives is to treat
the densitized lapse as a fundamental object, and to apply it uniformly to both
conformal and volumetric degrees of freedom.

Densitized lapses first appeared in the context of the constraint equations
in York's  development of the conformal thin sandwich method \cite{YorkJr:1999jo},
where they occur as lapses that conformally transform according to
$N\mapsto \phi^q N$ when we change $g_{ab}\mapsto\phi^{q-2}g_{ab}$.
Although densitized lapses arrived somewhat
late in the development of the conformal method, because the original conformal method 
and the  CTS methods are equivalent, densitized lapses have been a part of the conformal
method all along. In this work we represent a densitized lapse by
a choice of volume form $\alpha$ on $M$.  To every metric $g_{ab}$ we then assign a
lapse according to
\begin{equation}\label{eq:dlapseintro}
N_{g,\alpha} = \frac{\omega_g}{\alpha}
\end{equation}
where $\omega_g$ is the volume form of $g_{ab}$.  Since volume forms conformally
transform according to $\omega_g\mapsto \phi^q \omega_g$ we recover York's transformation law,
and in terms of our earlier notation the volume form $\alpha$ corresponds to the densitized lapse
$\mathbf N$ represented by $(g_{ab}; \omega_g/\alpha)$.
Note that if we interpret $\alpha$ as `coordinate area', 
then equation \eqref{eq:dlapseintro} expresses 
the lapse as the ratio of physical to coordinate area in addition to its standard interpretation
as the ratio of physical to coordinate time.  Using equation \eqref{eq:dlapseintro}
to rewrite the usual Arnowitt-Dieser-Misner (ADM) Lagrangian\cite{ADM1962} 
so that it depends on $\alpha$ instead of the standard lapse,
we find that the following features emerge.
\begin{itemize}
\item The densitized lapse assigns each pair 
$(g_{ab},K_{ab})$, regardless of whether it solves the constraint equations or not,
a conformal velocity and a conformal
momentum of motion in $\calC/\calD_0$, where $\calC$ is the set of conformal classes on $M$ and $\calD_0$ is
the connected component of the identity of the diffeomorphism group.  
These dynamical quantities
are associated with their standard ADM counterparts as described
in diagram \eqref{diag:dl-legendre-conf}, but doing so requires a densitized lapse
rather than the standard ADM lapse.
For CMC solutions of the constraints, the measurement of conformal momentum is independent
of the choice of densitized lapse, but this is not true for non-CMC solutions.
The conformal method uses conformal velocity or conformal momentum as one of its parameters 
depending on whether we use the Lagrangian or the Hamiltonian formulation,
and these quantities are connected to each via a Legendre transformation associated with a
Lagrangian (conformal kinetic energy) on the tangent bundle $T\; \calC/\calD_0$. 
Sections \ref{sec:conformalTS} and \ref{sec:conformalLegendre}
describe these results.
\item The densitized lapse assigns each pair $(g_{ab},K_{ab})$ a volumetric velocity and momentum 
of motion in
 $\calV/\calD_0$, where $\calV$ is the set of volume forms.  Volumetric velocity and momentum are associated
with ADM velocity and momentum as described in diagram \ref{diag:dl-legendre-conf-vol}, and
again this relationship uses a densitized lapse.  Volumetric momentum is a single number,
and if $g^{ab}K_{ab}=\tau_0$ for some constant $\tau_0$, the volumetric momentum is
$-2\kappa \tau_0$ where $\kappa = (n-1)/n$.  For non-constant mean curvature 
the measurement of conformal momentum depends on the choice of densitized lapse
and equals $-2\kappa\tau^*$ where $\tau^*$ is the quantity \eqref{eq:taustarintro}
identified previously in \cite{Maxwell:2014b}. In the CMC conformal method, the volumetric
momentum is one of the explicit parameters, but this is not the case for the non-CMC conformal method.
Volumetric velocity and momentum are connected to each via a Legendre transformation associated with a
Lagrangian (volumetric kinetic energy) on the tangent bundle $T\; \calV/\calD_0$.  
Sections \ref{sec:volumetric} and \ref{sec:volumetricLegendre} describe these results,
and we see in these sections that the volumetric parameters have a structure that completely parallels
that of the conformal parameters, but that is ignored in the standard conformal method where
the mean curvature is specified explicitly.
\item Conformal momentum at a metric $g_{ab}$ 
is related to the York decomposition of trace-free tensors $A_{ab}$
\begin{equation}
A_{ab} = \sigma_{ab} + \frac{1}{2N_{g,\alpha}}\ck_g W_{ab}
\end{equation}
where $\sigma_{ab}$ is transverse traceless, $\ck_g$ is the conformal Killing operator
of $g_{ab}$,
and $W^a$ is a vector field. Volumetric momentum is associated with a York-like splitting
of mean curvature functions $\tau$:
\begin{equation}
\tau = \tau^* + \frac{1}{N_{g,\alpha}} \div V
\end{equation}
where $\tau^*$ is a constant and $V^a$ is a vector field.  In this way, $\tau^*$ plays
the same role for volumetric degrees of freedom that $\sigma_{ab}$ plays for
conformal degrees of freedom.

\item Let $\calM$ be the space of metrics.
Instantaneous motion in $\calM/\calD_0$ can be decomposed into
three components: conformal, volumetric, and drift.  The decomposition depends on the choice
of a densitized lapse, and the conformal and volumetric components of this decomposition
agree with the notions
of conformal and volumetric velocity just discussed.  A drift is an instantaneous motion
in  $\calM/\calD_0$ that preserves both conformal class (modulo diffeomorphisms) and volume form
(modulo diffeomorphisms).  Although a metric is uniquely determined by its conformal class and volume
form, there are nontrivial drifts, and indeed the drifts at a metric $g_{ab}$ 
can be identified with the space of vector fields on $M$, modulo the
divergence-free vector fields and conformal Killing fields of $g_{ab}$.
Section \ref{sec:drift} contains basic results concerning drifts.

\item It is well known that solutions of the momentum constraint correspond to
the momenta of motion in $\calM/\calD_0$.  In Section \ref{sec:drift-mom} we show
that after selection of a densitized lapse,
such momenta can be decomposed into three components: conformal, volumetric,
and drift.  The conformal and volumetric momenta are the quantities identified
previously, and a drift momentum at $g_{ab}$ can be described by a pair
of linked drifts $(\mathbf W, \mathbf V)$.
The drifts $\mathbf W$ and $\mathbf V$ can be represented by vector fields
$W^a$ and $V^a$ solving the drift equation
\begin{equation}\label{eq:driftintro}
\div_g\left[ \frac{1}{2N_{g,\alpha}} \ck_g W\right] = \kappa\, \mathbf{d}\left[ \frac{1}{N_{g,\alpha}} \div_g V\right]
\end{equation}
where $\ck_g$ is the conformal Killing operator of $g_{ab}$.  
Equation \eqref{eq:driftintro} has a remarkable symmetry between the conformal
and volumetric parameters $W^a$ and $V^a$.  We can specify $V^a$ and solve for $W^a$, in which case we can add
an arbitrary divergence-free vector field to $V^a$,  
but equation \eqref{eq:driftintro} is only solvable after adding a specific choice of conformal 
Killing field to $V^a$.  Conversely, we can specify $W^a$ and solve for $V^a$, 
in which case we can add an arbitrary conformal
Killing field to $W^a$ and we must additionally add a particular divergence-free vector field to 
ensure that equation \eqref{eq:driftintro} is solvable. So a pair 
$(\mathbf W, \mathbf V)$ representing a drift momentum is uniquely determined by either its
conformal drift $\mathbf W$ or its volumetric drift $\mathbf V$. Section \ref{sec:drift-mom} describes
these results in detail.   

\item The CMC solutions of the constraint equations are the solutions with zero drift momentum.

\item  Although solutions of the momentum constraint correspond to momenta in $\calM/\calD_0$,
solutions of the constraint equations are not well described in terms of velocities in
$\calM/\calD_0$.  There exist distinct solutions of the vacuum constraint equations,
generating distinct spacetimes, that nevertheless have identical geometries and velocities 
in $\calM/\calD_0$.  This phenomenon occurs because the drift momentum of a pair $(\mathbf W, \mathbf V)$
corresponds to a velocity $ \mathbf V-\mathbf W$ in $\calM/\calD_0$, and this can vanish even if 
$\mathbf W$ and $\mathbf V$ do not.  Either the conformal drift $\mathbf W$ or the volumetric drift
$\mathbf V$ can be taken as a parameter of motion that determines the other, but using the difference
$\mathbf V-\mathbf W$ leads to non-uniqueness.  Section \ref{sec:drift-vmke}
describes how we can take either factor $\mathbf W$  or $\mathbf V$ to be the drift velocity
corresponding to drift momentum, and that in either case we can construct
a Lagrangian (conformal or volumetric drift kinetic energy) whose Legendre transformation connects 
drift velocity and momentum.

\item The kinetic energy term of the ADM Lagrangian, when restricted to solutions of the momentum
constraint, decomposes into three independent terms corresponding to conformal, volumetric, and drift
kinetic energy.
\end{itemize}

These main results effectively comprise a study of the interaction of densitized lapses
with the momentum constraint.   In Section \ref{sec:driftcm} we then propose variations
of the conformal method where the parameters include a conformal class, 
a conformal momentum, a volumetric
momentum, and a vector field determining a drift momentum.  There is more than one way
to do this, however, and we present three candidates that each include the
CMC conformal method as a special case.  The resulting
equations are technically more challenging than those of the standard conformal method, 
and we therefore postpone their analysis for future work. Although we hope that
features of the momentum constraint documented here will assist those
efforts, it remains to be seen the extent to which these drift parameterizations, 
or perhaps some variation, outperform the conformal method.  Regardless, 
drifts have the the potential to play a role in understanding 
any variation of the CMC conformal method.  For example, the one-parameter families discovered
for the standard conformal method in \cite{Maxwell:2014b} all have the property 
that they have zero conformal momentum and zero volumetric momentum, but not-necessarily zero drift
momentum. Moreover, drifts are related to past difficulties
in applying the standard conformal method to construct non-CMC solutions of the constraints 
with metrics having nontrivial conformal Killing fields, and we discuss in Section \ref{sec:driftcm}
how the standard conformal method might be adjusted to account for conformal Killing fields.

Our main goal is to find well-motivated alternatives to the conformal method,
and in order minimize distraction we work under hypotheses that reduce the number of
technical details. In particular, we work only on compact manifolds, and we work only
in the smooth category.
Smoothness comes with the attendant complexity of
Fr\'echet manifolds, and we have emphasized linear algebra over topology when working
with their tangent spaces.  For example, direct sums and isomorphisms are always
meant in the sense of linear algebra, although in many cases it is obvious that the
subspaces involved are closed and the maps involved are at least continuous.
We adopt an intuitive (but precise) approach to working with tangent and
cotangent spaces to infinite dimensional spaces such as 
$\calC$ and $\calC/\calD_0$.
Sections \ref{sec:notation}, \ref{sec:conformalTS} and \ref{sec:volumetric}
contain the related definitions and details, and it is important to note that
the simplicity of our approach comes with the
penalty that objects such as $T^*\calC/\calD_0$ appearing in the theorems
are to be understood rather formally.  We also adopt some helpful but
non-standard notations regarding the trace/trace-free decomposition of
$T\calM$ and its interaction with the numerous quotient spaces we work with.
Again,
Sections \ref{sec:notation}, \ref{sec:conformalTS} and \ref{sec:volumetric}
contain the details.

\subsection{Notation and Conventions}\label{sec:notation}

Throughout we assume that $M$ is a smooth, compact, connected, oriented 
$n$-manifold with $n\ge 3$.  The set of smooth functions on $M$ is $C^\infty(M)$
and if $E$ is a smooth bundle over $M$, then $C^\infty(M,E)$ is the set of smooth sections
of $E$.  We write $TM$ and $T^*M$ for the tangent and cotangent bundles of $M$,
$S^2M$ and $S_2 M$ for the  bundles of symmetric $(2,0)$
and $(0,2)$ tensors, and $\Lambda^n M$ for the bundle
of $n$-forms.  All tensors are assumed to be smooth unless otherwise noted;
in Section \ref{sec:drift-mom} we work with $L^2$
Sobolev spaces $W^{k,2}$ where $k\in \mathbb Z$ denotes the order of differentiability.

We have the following sets of interest:
\begin{itemize}
\item[$\calM$,] the smooth metrics on $M$,
\item[$\calC$,] the conformal classes of smooth metrics,
\item[$\calV$,] the smooth volume forms (i.e., the positively oriented
elements of $C^\infty(M,\Lambda^n)$),
\item[$\calK$,] the space $C^\infty(M,S_2 M)$ of second fundamental
forms.
\end{itemize}

Three constants derived from the dimension $n$ arise sufficiently frequently that we
use the notation
\begin{equation}
q = \frac{2n}{n-2}\qquad\qquad \kappa = \frac{n-1}{n}\qquad\qquad \dimk = 2\kappa q = \frac{4(n-1)}{n-2}.
\end{equation}
We also use the symbol $a$ as an abstract index, but there should be no confusion since the 
constant $a$ defined above is never used as an exponent.

\subsubsection{The space \texorpdfstring{$\calM$}{M} of metrics}
The set $\calM$ of smooth metrics over $M$ is the open subset of 
positive definite elements of the Fr\'echet vector space $C^\infty(M,S_2 M)$.
Hence $\calM$ is a Fr\'echet manifold, and if $g_{ab}\in \calM$, then $T_g\calM=C^\infty(M,S_2 M)$.
Note that we use abstract index notation in this paper with the understanding that indices can 
be dropped freely if they clutter notation or are otherwise intrusive.

Let $g_{ab}\in\calM$. The dual space $(T_g\calM)^*$ contains a wide variety of distributions, 
and it will be convenient to work with a smaller subspace.  We define
\begin{equation}
T^*_g\calM = C^\infty(M,S^2 M\tensor \Lambda^n M).
\end{equation}
If $h_{ab}\in T_g(M)$ and $F^{ab}\omega\in T^*_g\calM$, then $F^{ab}$ acts on $h_{ab}$ via
\begin{equation}
\ip< F^{ab}\omega, h_{ab}> = \int_{M} F^{ab} h_{ab}\; \omega.
\end{equation}
One readily verifies that with this action, $T^*_g\calM\subseteq (T_g\calM)^*$.

There is a natural $L^2$ metric $\calG$ on $\calM$ defined by 
$\calG(h_{ab},\widehat h_{ab}) = \int_M \ip<h,\widehat h>_g\;\omega_g$
for all $h_{ab}$ and $\widehat h_{ab}\in T_g\calM$.  
Here and elsewhere $\omega_g$ is the oriented volume form of $g_{ab}$.
The metric $\calG$ determines a map from $T_g\calM$ to $(T_g\calM)^*$ defined by
\begin{equation}
h_{ab}\mapsto \calG(h_{ab},\cdot)
\end{equation}
and it is easy to see that $T^*_g\calM$ is the image of $T_g\calM$ under this map.
Thus we have a natural identification of $T_g\calM$ with $T^*_g\calM$.

The trace-free and pure-trace subspaces of $T_g\calM$ play an important role
in this paper and it will be helpful to have special notation to work with them.
Suppose $\beta$ is a function and $u_{ab}$ is symmetric and trace-free
with respect to $g_{ab}$,  We define
\begin{equation}\label{eq:ttf}
(g_{ab};\; u_{ab},\beta) = u_{ab} + \frac{2}{n}\beta\; g_{ab} \in T_g\calM.
\end{equation}
It is easy to see that any $h_{ab}\in T_g\calM$ admits a unique decomposition of the form \eqref{eq:ttf}.
Similarly, if $f$ is a function and $A^{ab}$ is symmetric and trace-free with respect to $g_{ab}$
we define
\begin{equation}\label{eq:ttfstar}
(g_{ab};\; A^{ab},f)^* = (A^{ab} + \frac{1}{2} f g^{ab})\omega_{g}\in T^*_g\calM.
\end{equation}
Note that
\begin{equation}\label{eq:Mduality}
\ip< (g_{ab};\; A^{ab}, f)^*, (g_{ab};\; u_{ab}, \beta) > = \int A^{ab}u_{ab} + f\beta\; \omega_g.
\end{equation}
It is sometimes convenient to work with elements of $T^*_g\calM$ represented by covariant
tensors, so
if $B_{ab}$ is symmetric and trace-free with respect to $g_{ab}$ we define
\begin{equation}\label{eq:ttfstar-alt}
(g_{ab};\; B_{ab},f)^* = (g_{ab}; \; g^{ac}g^{bd}B_{bd},f)^*.
\end{equation}

\subsubsection{The space \texorpdfstring{$\calM/\calD_0$}{M/D0} of geometries}

Let $\calD_0$ be connected component of the identity $e$ in the group of smooth diffeomorphisms from $M$ to $M$.
Then $\calM/\calD_0$ is the set of equivalence classes of metrics where $g_{ab}$
is related to $\widehat g_{ab}$ if there exists $\Phi\in\calD_0$ with  $\widehat g_{ab} = \Phi^* g_{ab}$.
We write $\{g_{ab}\}$ for the equivalence class of $g_{ab}$ in $\calM/\calD_0$.

Recall that $\calD_0$ is a Fr\'echet manifold and $T_e \calD_0 = C^\infty( M, TM)$ \cite{Kriegl:1997hf}.
Suppose $\Phi_t$ is a path of diffeomorphisms with $\Phi_0=e$, and let $X^a$ be its infinitesimal
generator.  Given a metric $g_{ab}$, the path of metrics $\gamma_{ab}(t) = \Phi^*_t g_{ab}$ 
remains in $\{g_{ab}\}$ and satisfies
\begin{equation}
\gamma_{ab}'(0) = \Lie_g X_{ab} =  \nabla_a X_b + \nabla_b X_a
\end{equation}
where $\nabla$ is the Levi-Civita connection of $g_{ab}$.  Since $\gamma_{ab}$
is stationary in $\calM/\calD_0$, the directions $\Im\Lie_g\subseteq T_g\cal M$
become null directions in $\calM/\calD_0$, which motivates the formal definition
\begin{equation}
T_{g} \calM/\calD_0 = T_{g}\calM / \Im \Lie_g.
\end{equation}
By working formally and infinitesimally, we avoid details concerning the 
structure of $\calM/\calD_0$ as a stratified space.
However, one can often think of $T_{g} \calM/\calD_0$ as a proxy for an actual tangent space
$T_{\{g\}} \calM/\calD_0$ that we have not defined \cite{Fischer:1996gg}.

Let $(g_{ab};\; u_{ab},\beta)\in T_g\calM$. We continue the practice of denoting quotients
by $\calD_0$ using curly braces and define
\begin{equation}
\{g_{ab};\; u_{ab},\beta\} = (g_{ab};\; u_{ab},\beta) + \Im \Lie_g \in T_g \calM/\calD_0.
\end{equation}
It is helpful to think of the projection
\begin{equation}\label{eq:MtoMD0push}
(g_{ab};\; u_{ab},\beta) \mapsto \{g_{ab};\; u_{ab},\beta\}
\end{equation}
as the pushforward from $T_g\calM$ to $T_g \calM/\calD_0$.

The \define{conformal Killing operator}
of a metric $g_{ab}$ acts on vector fields $X^a$ by
\begin{equation}
\ck_g X_{ab} = \Lie_g X_{ab} -\frac{2}{n} \div_g X
\end{equation}
where $\div_g X = \nabla_a X^a$.  An element of the kernel
of $\ck_g$ is a \define{conformal Killing field}.
Note that in trace/trace-free notation
\begin{equation}
\Lie_g X_{ab} = (g_{ab};\; \ck_{g} X_{ab}, \div_g X)
\end{equation}
and hence
\begin{equation}
\dc<g_{ab}, \ck_g X_{ab}, \div_g X> = 0.
\end{equation}

Since $T_g\calM/\calD_0$ is a quotient of $T_g\calM$ by $\Im \Lie_g$
we formally define 
\begin{equation}
T^*_g \calM/\calD_0 = (\Im \Lie_g)^\perp = \left\{ A\in T^*_g \calM: A|_{\Im \Lie_g} = 0\right\}.
\end{equation}
Consequently, $T^*_g \calM/\calD_0\subseteq T^*_g \calM$ and
an integration by parts exercise
shows that 
$F^{ab}\omega_g\in T_g^*\calM/\calD_0$ if and only if $(\div_g F)^a = \nabla_{a} F^{ab}= 0$.
If $(g_{ab};\; A^{ab},f)^*\in T_g\calM$, then the divergence-free condition is
\begin{equation}\label{eq:divfreettf}
\nabla_a A^{ab} +\frac{1}{2}\nabla^b f = 0
\end{equation}
and we write
\begin{equation}
\{g_{ab};\; A^{ab},f\}^*
\end{equation}
for any $(g_{ab};\;A^{ab},f)^*\in T_g^*\calM$ that satisfies 
equation \eqref{eq:divfreettf}.
Elements $\mathbf F\in T^*_g \calM/\calD_0$
are functionals on $T_g\calM/\calD_0$ according to to the rule
\begin{equation}\label{eq:MD0duality}
\ip< \mathbf F, \{\mathbf h\} > = \ip<\mathbf F, \mathbf h>,
\end{equation}
and we see that the natural embedding $T^*_g\calM/\calD_0\hookrightarrow T^*_g\calM$
is the pullback associated with the pushforward \eqref{eq:MtoMD0push}.

\section{The ADM Lagrangian with densitized lapse}\label{sec:AMD}

In the traditional approach to the ADM $n+1$ decomposition of general relativity, 
on each slice of constant coordinate time we select
a positive function $N$ (the lapse) and a vector field $X^a$ (the shift) that
describe the layout of a coordinate system in spacetime.  
A metric and second fundamental form $(g_{ab}, K_{ab})\in \calM\times\calK$ 
determine the ADM momentum
\begin{equation}
\Pi^{ab} = \left[K^{ab}- \tr_g K g^{ab}\right] \omega_g \in T^*_g\calM
\end{equation}
and also determine, in conjunction with the lapse and shift, the ADM velocity
\begin{equation}
\dot g_{ab}  = 2NK_{ab} + \Lie_g X_{ab} \in T_g\calM.
\end{equation}
From these maps we obtain an isomorphism $i_{N,X^a}:T\calM\ra T^*\calM$
\begin{equation}
\begin{gathered}
\xymatrix{
  &\calM \times \calK \ar@{<->}[dr] \ar@{<->}[dl]_{(N,X^a)} \\
 T\;\calM 
\ar@<2pt>@{<-}[rr]
\ar@<-2pt>@{->}[rr]_{i_{N,X^a} }
 & & T^*\; \calM 
}
\end{gathered}
\end{equation}
where the notation $(N,X^a)$ denotes a nameless function that depends on the lapse and shift.
Recalling the definition of a Legendre transformation from, e.g., \cite{Mardsen:1999ab}
that relates velocities and momenta, the map $i_{N,X^a}$ is
the Legendre transformation associated with the ADM Lagrangian
\begin{equation}\label{eq:ADM}
L_{\text{ADM}}(g_{ab}, \dot g_{ab};\; N, X^a)
= \int_{M} N\left( R_g +\frac{1}{4N^2}|\dot g - \Lie_g X|^2_g - \frac{1}{4N^2} (\tr_g \dot g-2\div_g X)^2\;\right) \omega_g.
\end{equation}

Writing $\dot g_{ab} = (g_{ab};\; u_{ab}, \beta)$ in trace/trace-free notation we have
\begin{equation}\label{eq:ADMlag}
L_{\text{ADM}}(g_{ab}, u_{ab}, \beta;\; N, X^a) = 
\int_{M} N\left( R_g +\frac{1}{4N^2}|u - \ck_g X|^2_g - \frac{\kappa}{N^2} (\beta-\div_g X)^2\;\right) \omega_g
\end{equation}
and the Legendre transformation can be written
\begin{equation}\label{eq:non-dl-legendre}
i_{N,X^a}( g_{ab};\; u_{ab}, \beta ) = \left(g_{ab};\;  \frac{1}{2N}\left(u_{ab} -\ck_g X_{ab}\right), -2\kappa 
\frac{1}{N}\left(\beta-\div_g X\right)\right)^*
\end{equation}
with inverse
\begin{equation}\label{eq:i_inv}
i_{N,X^a}^{-1}( (g_{ab};\; A_{ab}, f)^* ) = \left(g_{ab};\;2NA_{ab} +\ck_g X_{ab}, -\frac{N}{2\kappa}f +\div_g X\right).
\end{equation}
It will also be helpful to have trace/trace-free expressions for the conversion from a second fundamental form
to ADM velocity or momentum.
If $K_{ab}=A_{ab}+(\tau/n)g_{ab}$ where $A_{ab}$ is trace-free,
then the ADM velocity is
\begin{equation}
(g_{ab};\;2NA_{ab}+\ck_{g}X_{ab},N\tau+\div_g X)
\end{equation}
and the ADM momentum is
\begin{equation}\label{eq:ADMmomTTF}
(g_{ab};\;A_{ab},-2\kappa\tau)^*.
\end{equation}

We will work with a modified form of the ADM Lagrangian where the lapse
is prescribed indirectly using a construct called a densitized lapse.  Densitized lapses
appear in various contexts in general relativity \cite{ChoquetBruhat:1983wu}
\cite{Anderson:1998er} \cite{Sarbach:2002ia} and were introduced to the constraint equations
in York's conformal thin sandwich method \cite{YorkJr:1999jo}. As mentioned in the
introduction, we will represent a densitized lapse by a choice of volume form $\alpha$,
and given a metric $g_{ab}$, the lapse associated with $g_{ab}$ and $\alpha$ is
\begin{equation}\label{eq:lapsedensity}
N_{g,\alpha} = \frac{\omega_g}{\alpha}.
\end{equation}
Note that if $\widehat g_{ab} = \phi^{q-2} g_{ab}$ for some
conformal factor $\phi$ then
\begin{equation}
N_{\widehat g, \alpha} = \phi^q N_{g,\alpha}.
\end{equation}
We will call $\alpha$ a \define{lapse form}.

Replacing $N$ with $N_{g,\alpha}$ and using equation \eqref{eq:lapsedensity} to
rewrite $\omega_g$ in terms of $\alpha$, the Lagrangian \eqref{eq:ADMlag} becomes
\begin{equation}\label{eq:dlAMD}
L_{\text{ADM}'}(g_{ab}, u_{ab}, \beta;\; \alpha, X^a) = 
\int_{M}\left( N^2_{g,\alpha} R_g +\frac{1}{4}|u - \ck_g X|^2_g - \kappa (\beta-\div_g X)^2\;\right) \alpha.
\end{equation}
For
the remainder of this paper we work with the densitized-lapse ADM Lagrangian \eqref{eq:dlAMD}.
The Legendre transformation associated with this Lagrangian is 
the standard transformation \eqref{eq:non-dl-legendre} with the substitution
$N=N_{g,\alpha}$
and we have the commutative diagram
\begin{equation}\label{diag:dl-legendre}
\begin{gathered}\xymatrix{
 &\calM \times \calK \ar@{<->}[dr] \ar@{<->}[dl]_{(\alpha,X^a)} \\
 T\;\calM 
\ar@<2pt>@{<-}[rr]
\ar@<-2pt>@{->}[rr]_{i_{\alpha,X^a} }
& & T^*\; \calM.
}
\end{gathered}
\end{equation}
The distinction between the standard and densitized-lapse Legendre transformations is perhaps
subtle.  Given a metric $g_{ab}$ and a lapse form $\alpha$, there always exists a lapse $N$
such that the maps $i_{N,X^a}$ and $i_{\alpha,X^a}$ agree as maps from
$T_g \calM$ to $T_g^* \calM$.  But if we consider a second metric $\widehat g_{ab}$
with volume form $\omega_{\widehat g}$ different from $\omega_g$, then
the two Legendre transformations as maps from $T_{\widehat g} \calM$ to $T_{\widehat g}^* \calM$
are no longer the same. This difference is important when 
thinking of the Legendre transformation as
a map between the total bundles $T\calM$ and $T^*\calM$, and we will find that
the densitized lapse is particularly compatible
with the  product structure $\calM=\calC\times\calV$.  For example, 
given a lapse form $\alpha$, we will be able
to assign a notion of conformal velocity, momentum, and kinetic energy
measured by $\alpha$ to each curve in $\calM$ 
in such a way that if two curves in $\calM$ descend to the
same curve in $\calC$ or even $\calC/\calD_0$, then these conformal quantities are
preserved.    
The next several sections make these ideas precise, and we start by 
recalling the definitions of tangent and cotangent spaces of $\calC$ and $\calC/\calD_0$
from \cite{Maxwell:2014a}.

\section{Conformal Tangent and Cotangent Spaces}\label{sec:conformalTS}

If $g_{ab}\in\calM$, we write $[g_{ab}]$ for its conformal class,
and we use bold type
to denote a conformal class when we do not wish to emphasize any particular representative.
So $[g_{ab}] = \mathbf g \in \calC$ if and only if $g_{ab}\in\mathbf g$. 
By convention we use conformal transformations of the form
$\widehat g_{ab} =  \phi^{q-2}g_{ab}$ since the exponent $q-2$
leads to a simple conformal transformation law for scalar curvature: 
\begin{equation}
R_{\widehat g} = \phi^{1-q}( -\dimk\Lap_g \phi + R_g\phi).
\end{equation}

Following \cite{Maxwell:2014a}, if $\mathbf g\in \calC$ we define 
$T_{\mathbf g} \calC$ to be the set of equivalence classes of
pairs $(g_{ab};\; u_{ab})$ where 
$g_{ab}\in\mathbf g$, $u_{ab}$ is trace-free with respect to $g_{ab}$,
and where
\begin{equation}\label{eq:CtangentER}
(g_{ab};\; u_{ab}) \sim  (\phi^{q-2} g_{ab};\; \phi^{q-2} u_{ab}).
\end{equation}
The trace-free condition arises because we identify $\calC$
with the set of metrics with a fixed volume form,
and the equivalence relation reflects the arbitrary 
choice of volume form.  We will write
\begin{equation}
[g_{ab};\; u_{ab}]
\end{equation}
for the element of $T_\mathbf g \calC$ determined by $(g_{ab};\; u_{ab})$, and
we will write
$\mathbf{u}$ for a conformal tangent vector when we do not wish
to emphasize a particular representative. At $\mathbf g\in \calC$,
we define the conformal killing operator $\ck_\mathbf g$ acting
on a vector field $X^a$
\begin{equation}
\ck_{\mathbf g} X^a = [g_{ab};\; \ck_g X_{ab}]
\end{equation}
where $g_{ab}$ is any representative of $\mathbf g$; the conformal
transformation law $\ck_{\widehat g} = \phi^{q-2} \ck_g$ if
$\widehat g_{ab} =\phi^{q-2} g_{ab}$ ensures that $\ck_\mathbf g$ is well-defined.

The cotangent space $T^*_{\mathbf g}\calC$ is also defined as 
a set of equivalence classes of pairs $(g_{ab};\;A_{ab})$ where
$g_{ab}\in \mathbf g$, $g^{ab}A_{ab}=0$, but the equivalence
relation differs. Now 
\begin{equation}
(g_{ab};\; A_{ab}) \sim (\phi^{q-2} g_{ab};\; \phi^{-2} A_{ab}).
\end{equation}
and we write
\begin{equation}
[g_{ab};\; A_{ab}]^*
\end{equation}
for the element of $T^*_{[g]} \calC$ determined by $(g_{ab};\; A_{ab})$.
As before, we use bold face when no representative is preferred.  
If $\mathbf u\in T_\mathbf g \calC$
and $\mathbf A\in T^*_\mathbf g\calC$ we define
\begin{equation}\label{eq:Cact}
\ip<\mathbf A, \mathbf u> = \int_M \ip<A,u>_g\;\omega_g
\end{equation}
where $g_{ab}$ is any representative of $\mathbf g$ and where 
$u_{ab}$ and $A_{ab}$ are the representatives such that
\begin{equation}
\mathbf A = [g_{ab};\; A_{ab}]^* \quad\text{and}\quad
\mathbf u = [g_{ab};\; u_{ab}].
\end{equation}
The equivalence relations for conformal tangent and cotangent vectors
ensures that this action is well defined.

We have a natural map from $T_g\calM$ to $T_{[g]}\calC$ given by
\begin{equation}\label{eq:MCpush}
(g_{ab};\; u_{ab},\beta) \mapsto [g_{ab};\; u_{ab}]
\end{equation}
that can be thought of as the pushforward.  From
equation \eqref{eq:Cact} we have the corresponding 
pullback $T^*_{[g]}\calC \ra T^*_g \calM$ which can be written in the notation
of equation \eqref{eq:ttfstar-alt} as
\begin{equation}\label{eq:MCpull}
\mathbf A \mapsto (g_{ab};\; A_{ab},0)^*
\end{equation}
if $\mathbf A = [g_{ab};\; A_{ab}]^*$.

Sitting below the space of
conformal classes is the space $\calC/\calD_0$ 
of conformal geometries.  Two conformal classes $\mathbf{g}$ and
$\widehat{\mathbf{g}}$ are equivalent at the level of conformal geometries 
if there is a diffeomorphism $\Phi\in\calD_0$ such that
$\Phi^* \mathbf g = \widehat{\mathbf g}$.  Concretely, two
metrics $g_{ab}$ and $\widehat g_{ab}$ determine the same conformal geometry if there
is a diffeomorphism $\Phi\in\calD_0$
and a smooth positive function $\phi$ such that 
$\widehat g_{ab} = \phi^{q-2} \Phi^* g_{ab}$. We write $\{ \mathbf g\}$ for the conformal geometry 
determined by the conformal class $\mathbf g$. 

In defining the tangent spaces to $\calM/\calD_0$ we quotiented by 
the directions $\Im \Lie_g$.  The pushforward of $\Im \Lie_g$
into $T_{[g]}\calC$ is $\Im \ck_{[g]}$ and we therefore formally define
for any $\mathbf g\in\calC$
\begin{equation}
T_{\mathbf g} \calC/\calD_0 = (T_{\mathbf g} \calC) / \Im \ck_{\mathbf g}.
\end{equation}
We write 
$\{\mathbf u\}$ for the equivalence class $\mathbf u + \Im \ck_\mathbf g\in T_{\mathbf g}\;\calC/\calD_0$.
Correspondingly, we define 
\begin{equation}
T^*_{\mathbf g} \calC/\calD_0= (\Im \ck_\mathbf g) ^\perp =\{ \mathbf A\in T_\mathbf g^*: \mathbf A|_{\Im \ck_\mathbf g}=0\}
\end{equation}
and elements $\bfsigma\in T^*\calC/\calD_0$ are then well-defined functionals on $T\,\calC/\calD_0$
according to
\begin{equation}\label{eq:calCdual}
\ip<\bfsigma, \{\mathbf u\}> = \ip<\bfsigma, \mathbf u>.
\end{equation}
An integration by parts exercise shows that $[g_{ab};\; \sigma_{ab}]^*\in T^*_{\mathbf g} \calC$
if and only if $\sigma_{ab}$ is divergence-free with respect to $g_{ab}$.
Since $\sigma_{ab}$
is trace-free as well, it is a so-called transverse traceless (TT) tensor.
We will
write $\{g_{ab};\;\sigma_{ab}\}^*$ if we wish to emphasize that 
$[g_{ab};\;\sigma_{ab}]^*$ belongs to $T^*\calC/\calD_0$.

We define the pushforward $T_{\mathbf g}\calC\ra T_{\mathbf g}\calC/\calD_0$ by
\begin{equation}\label{eq:calCpush}
\mathbf u \mapsto \{\mathbf u\} = \mathbf u + \Im \ck_\mathbf g.
\end{equation}
Its corresponding pullback is the natural embedding $T_{\mathbf g}^* \calC/\calD_0 \hookrightarrow
T_{\mathbf g}^* \calC$.

\section{Conformal Velocity, Momentum and Kinetic Energy}\label{sec:conformalLegendre}

Let $\mathbf g$ be a conformal class and let $g_{ab}$ be
any representative.
Given a lapse form $\alpha$ and a shift $X^a$ we can combine diagram \eqref{diag:dl-legendre} with
the pushforward and pullback maps described in the previous section to obtain
the following diagram:
\begin{equation}\label{diag:dl-legendre-extended}
\begin{gathered}
\xymatrix{
 &\calK \ar@{<->}[dr] \ar@{<->}[dl]_{(\alpha,X^a)} \\
 T_{g}\;\calM 
\ar@<2pt>@{<-}[rr]
\ar@<-2pt>@{->}[rr]_{i_{\alpha,X^a} }
& & T^*_{g}\; \calM\\
T_{\mathbf g}\calC/\calD_0 \ar@{<-}[u] & & T^*_{\mathbf g}\calC/\calD_0.\ar[u]
}
\end{gathered}
\end{equation}
The principal goal of this section is to show that the Legendre
transformation $i_{\alpha,X^a}$ descends to an isomorphism $j^\expspace\calC_{\alpha}:T_\mathbf g \calC/\calD_0\ra T^*_\mathbf g \calC/\calD_0$ that such that for every $g_{ab}\in\mathbf g$, the diagram
\begin{equation}\label{diag:dl-legendre-conf}
\begin{gathered}
\xymatrix{
 & \calK \ar@{<->}[dr] \ar@{<->}[dl]_{(\alpha,X^a)} \\
 T_g\calM 
\ar@<2pt>@{<-}[rr]
\ar@<-2pt>@{->}[rr]_{i_{\alpha,X^a} }
& & T^*_g \calM\\
T_\mathbf g \calC/\calD_0 \ar@{<-}[u] 
\ar@<2pt>@{<-}[rr]
\ar@<-2pt>@{->}[rr]_{j^\expspace\calC_{\alpha} }
& & T^*_\mathbf g \calC/\calD_0
\ar@<0pt>[u]
}
\end{gathered}
\end{equation}
nearly commutes. The failure of commutativity comes from the fact that the projection
$T_{g}\calM\ra T_{\mathbf g} \calC/\calD_0$ loses information that
cannot be recovered.  Instead, we will see that traversing the lower loop of diagram
\eqref{diag:dl-legendre-conf} when starting from the middle row is a projection.

With the isomorphism $j_\alpha^\expspace\calC$ in hand, we assign to a 
pair $(g_{ab},K_{ab})\in\calM\times\calK$ a conformal velocity and momentum
as follows.
The conformal velocity is obtained by forming diagram \eqref{diag:dl-legendre-conf}
for $g_{ab}$ and  then mapping $K_{ab}$ through the left-hand side of the diagram 
starting from $\calK$ to obtain a velocity in $T_{[g]}\calC/\calD_0$.  Note that this is a velocity
modulo diffeomorphsims, and strictly speaking it is a `conformal geometric velocity'.
The conformal momentum is constructed from
the conformal velocity by applying $j^\expspace\calC_\alpha$.  In
this sense, $j^\expspace\calC_\alpha$ behaves like a Legendre transformation, and 
we show in Proposition \ref{prop:confleg}
that it arises from a Lagrangian on $T\;\calC/\calM$ that we will call conformal kinetic energy.

To construct $j_\alpha^\expspace\calC$ it turns out that it is easiest to construct $(j^\expspace\calC_{\alpha})^{-1}$ first.
Diagram \eqref{diag:dl-legendre-extended} defines a map from $T^*_\mathbf g\calC/\calD_0$
to $T_\mathbf g\calC/\calD_0$  given by traveling from the lower-right corner to the 
lower left corner.  In principle this map depends on $\alpha$, $X^a$, and the choice
$g_{ab}\in\mathbf g$, and we provisionally call this map $j^{-1}_{\alpha, X^a, g}$.
The first order of business is to show that this map is independent of $X^a$ 
(because we are reducing to a quotient modulo $\calD_0$) and $g_{ab}$
(because we are using a densitized lapse)
to obtain a map $j^{-1}_\alpha$.  We then show that $j^{-1}_\alpha$ is, as the notation suggests,
the inverse of a map $j_\alpha^\expspace\calC$.

\begin{lemma}\label{lem:jinv}  
Let $\mathbf g\in\calC$ and let $\alpha$ be a fixed lapse form.
For any pair of shifts $X^a$ and $\widehat X^a$, and any pair of representatives
$g_{ab}$ and $\widehat g_{ab}$ of $\mathbf g$,
\begin{equation}
j^{-1}_{\alpha,X^a,g} = j^{-1}_{\alpha,\widehat X,\widehat g}
\end{equation}
and we call the common map $j^{-1}_\alpha$.  
Moreover, for all $\bfsigma\in T_{\mathbf g}^*\calC/\calM$,
\begin{equation}\label{eq:jinv}
j^{-1}_{\alpha}(\bfsigma) = \{g_{ab};\;2N_{g,\alpha}\sigma_{ab}\}
\end{equation}
where $\sigma_{ab}$ is the representative of $\bfsigma$ with respect
to $g_{ab}$.
\end{lemma}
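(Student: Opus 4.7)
The plan is to compute $j^{-1}_{\alpha,X^a,g}$ directly by tracing $\bfsigma \in T^*_\mathbf g\calC/\calD_0$ through each arrow of diagram \eqref{diag:dl-legendre-extended}, and to read the claimed formula \eqref{eq:jinv} together with the two asserted independences from the result. Shift-independence will drop out in the final quotient by $\Im\ck_\mathbf g$, and representative-independence will reduce to a single exponent identity that is, in essence, the entire content of the lemma.

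For the computation I would fix $g_{ab}\in\mathbf g$ and write $\bfsigma = \{g_{ab};\;\sigma_{ab}\}^*$ with $\sigma_{ab}$ transverse traceless. The pullback \eqref{eq:MCpull} embeds $\bfsigma$ in $T^*_g\calM$ as $(g_{ab};\;\sigma_{ab},0)^*$, and then $i^{-1}_{\alpha,X^a}$ from \eqref{eq:i_inv} with $N=N_{g,\alpha}$ produces
\begin{equation*}
\bigl(g_{ab};\;2N_{g,\alpha}\sigma_{ab}+\ck_g X_{ab},\;\div_g X\bigr)\in T_g\calM
\end{equation*}
since the scalar entry is zero and the $-\tfrac{N}{2\kappa}f$ term vanishes. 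The pushforward \eqref{eq:MCpush} to $T_\mathbf g\calC$ retains only the trace-free part, and the further pushforward \eqref{eq:calCpush} modulo $\Im\ck_\mathbf g$ discards $\ck_g X_{ab}$, leaving $\{g_{ab};\;2N_{g,\alpha}\sigma_{ab}\}$. This single calculation simultaneously establishes the formula \eqref{eq:jinv} and, because no $X^a$ survives, shift-independence.

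For independence of the representative I would substitute $\widehat g_{ab} = \phi^{q-2}g_{ab}$. The cotangent equivalence relation gives $\widehat\sigma_{ab} = \phi^{-2}\sigma_{ab}$, and the transformation law $N_{\widehat g,\alpha} = \phi^q N_{g,\alpha}$ for the densitized lapse combines to give $2N_{\widehat g,\alpha}\widehat\sigma_{ab} = \phi^{q-2}(2N_{g,\alpha}\sigma_{ab})$. This is precisely the tangent equivalence \eqref{eq:CtangentER} needed to conclude $[\widehat g_{ab};\;2N_{\widehat g,\alpha}\widehat\sigma_{ab}]=[g_{ab};\;2N_{g,\alpha}\sigma_{ab}]$ in $T_\mathbf g\calC$, and the equality persists after quotienting by $\Im\ck_\mathbf g$.

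There is no real obstacle: the only non-bookkeeping step is the exponent cancellation $\phi^q\cdot\phi^{-2}=\phi^{q-2}$, which is exactly why the densitized lapse must be used. Conceptually, the weight $\phi^q$ of $N_{g,\alpha}$ balances the cotangent weight $\phi^{-2}$ to reproduce the tangent weight $\phi^{q-2}$; with a fixed $g$-independent lapse this factor would be absent, and the composite map would genuinely depend on the choice of representative.
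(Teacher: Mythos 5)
Your proposal is correct and follows essentially the same route as the paper: pull back to $(g_{ab};\;\sigma_{ab},0)^*$, apply $i^{-1}_{\alpha,X^a}$, push forward so that $\ck_g X_{ab}$ dies in the quotient by $\Im\ck_{\mathbf g}$, and then check representative-independence via the exponent cancellation $\phi^q\cdot\phi^{-2}=\phi^{q-2}$. (Your intermediate trace term $\div_g X$ is in fact the correct one; the paper writes $0$ there, but this is immaterial since the pushforward to $T_{\mathbf g}\calC$ discards the trace part.)
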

\begin{proof}
Let $\bfsigma\in T_{\mathbf g}^*\calC/\calD_0$. 
To compute $j_{\alpha, X^a,g}^{-1}(\bfsigma)$, let $\sigma_{ab}$ be the representative
of $\bfsigma$ with respect to $g_{ab}$.  
From equation \eqref{eq:MCpull} the pullback of
$\bfsigma$ to $T_g\calM$ is $(g_{ab};\;\sigma_{ab},0)^*$. Applying $i^{-1}_{\alpha,X^a}$
from equation \eqref{eq:i_inv} we arrive at $(g_{ab};\; 2N_{\alpha,g}\sigma_{ab}+\ck_g X_{ab}, 0)$.
Finally, we apply the pushforward from equation \eqref{eq:MCpush} to conclude
\begin{equation}\label{eq:jinvbody}
j^{-1}_{\alpha,X^a,g}(\bfsigma) =  \{g_{ab};\; 2N_{\alpha,g}\sigma_{ab} + \ck_gX_{ab}\} = 
\{g_{ab};\; 2N_{\alpha,g}\sigma_{ab} \}
\end{equation}
since $\{g_{ab};\;\ck_g X_{ab}\} = 0$.

At this stage, it is clear that $j^{-1}_{\alpha, g, X^a}$ is independent of the shift $X^a$.
Now suppose $\widehat g_{ab}$ is another representative of $\mathbf g$ with 
$\widehat g_{ab} = \phi^{q-2}g_{ab}$ for some conformal factor $\phi$.  The representative
of $\bfsigma$ with respect to $\widehat g_{ab}$ is $\widehat\sigma_{ab} = \phi^{-2}\sigma_{ab}$
and we have $N_{\alpha,\widehat g} = \phi^q N_{\alpha, g}$.  Recalling equation
\eqref{eq:CtangentER} we find
\begin{equation}
\begin{aligned}
j^{-1}_{\alpha,\widehat g,X^a}( \bfsigma ) &= \{\widehat g_{ab};\; 2N_{\alpha,\widehat g}\widehat \sigma_{ab}\} \\
&= \{\phi^{q-2} g_{ab};\; \phi^{q-2} 2N_{\alpha, g} \sigma_{ab}\} \\
&= \{ g_{ab};\; 2N_{\alpha, g} \sigma_{ab}\} \\
&= j^{-1}_{\alpha, g, X^a}( \bfsigma ).
\end{aligned}
\end{equation}
Hence $j^{-1}_{\alpha,X^a,g}=j^{-1}_{\alpha,X^a,\widehat g}$ as claimed, and equation \eqref{eq:jinv}
follows from equation \eqref{eq:jinvbody}.
\end{proof}

To show $j_\alpha^{-1}$ is the inverse of a function $j_\alpha^\expspace\calC$ we require
York splitting \cite{York:1973fla}, which we use in the following form.
\begin{proposition}[York Splitting]\label{prop:yorksplit}
Let $g_{ab}\in \calM$ and let $N>0$ be a lapse.

If $A_{ab}$ is symmetric and trace-free, then
there is a $g_{ab}$-TT tensor $\sigma_{ab}$ 
and a vector field $W^a$ 
such that
\begin{equation}\label{eq:yorksplitm}
A_{ab} =\sigma_{ab}+\frac{1}{2N} \ck_g W_{ab}.
\end{equation}
This decomposition is unique up to addition of a conformal Killing field to $W^a$.

Equivalently, if $u_{ab}$ is symmetric and trace-free, there is a unique
$g_{ab}$-TT tensor $\sigma_{ab}$ and a vector field $W^a$, unique up to addition of a
conformal Killing field, such that
\begin{equation}
u_{ab} = 2N\sigma_{ab} + \ck_g W_{ab}.
\end{equation}
\end{proposition}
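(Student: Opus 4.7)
The plan is to reduce the splitting to the solvability of a single elliptic equation for the vector field $W^a$, and to verify solvability via the Fredholm alternative. I will prove the covariant version first and then obtain the contravariant (velocity) version by the trivial substitution $A_{ab} = u_{ab}/(2N)$.

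Given a symmetric trace-free $A_{ab}$, define the candidate TT-part by $\sigma_{ab} := A_{ab} - \frac{1}{2N}\ck_g W_{ab}$. Since $\ck_g W_{ab}$ is trace-free with respect to $g_{ab}$, so is $\sigma_{ab}$, automatically. The requirement that $\sigma_{ab}$ also be transverse becomes
\begin{equation*}
\div_g\!\left(\tfrac{1}{2N}\ck_g W\right)^{\!a} \;=\; (\div_g A)^a,
\end{equation*}
so the entire proposition is equivalent to solving this second-order equation for $W^a$, uniquely up to the kernel of the operator on the left. I would then identify that operator as $L_N W := \div_g\!\bigl(\frac{1}{2N}\ck_g W\bigr)$ and note that its principal symbol is (up to positive scaling) that of the usual conformal vector Laplacian $\div_g \ck_g$, which is well known to be elliptic; thus $L_N$ is a linear elliptic operator of order two on sections of $TM$.

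The second step is formal self-adjointness of $L_N$ with respect to the $L^2$ inner product defined by $g_{ab}$ and $\omega_g$. Pairing $L_N W$ with a test vector field $Y^a$ and integrating by parts once, one uses that $\ck_g W$ is symmetric and trace-free to rewrite
\begin{equation*}
\int_M \langle Y, L_N W\rangle_g\,\omega_g \;=\; -\tfrac{1}{2}\int_M \tfrac{1}{2N}\,\langle \ck_g Y, \ck_g W\rangle_g\,\omega_g,
\end{equation*}
which is manifestly symmetric in $Y$ and $W$. The same identity with $Y=W$ shows that $L_N W = 0$ forces $\ck_g W = 0$, so $\ker L_N$ is exactly the space of conformal Killing fields of $g_{ab}$ (finite dimensional on a compact manifold). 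By the Fredholm alternative for elliptic self-adjoint operators on compact manifolds, the equation $L_N W = \div_g A$ is solvable (smoothly, by elliptic regularity) iff $\div_g A$ is $L^2$-orthogonal to every conformal Killing field.

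The third step is to verify this compatibility condition. For any conformal Killing field $Y^a$, integration by parts and the fact that $A_{ab}$ is symmetric and trace-free give
\begin{equation*}
\int_M Y_a (\div_g A)^a\,\omega_g \;=\; -\int_M \nabla_b Y_a\, A^{ab}\,\omega_g \;=\; -\tfrac{1}{2}\int_M \ck_g Y_{ab}\, A^{ab}\,\omega_g \;=\; 0,
\end{equation*}
since $\ck_g Y = 0$. Hence a solution $W^a$ exists, unique up to an additive conformal Killing field, and defining $\sigma_{ab}$ as above completes the decomposition \eqref{eq:yorksplitm}. The velocity version follows because $u_{ab}$ is trace-free iff $A_{ab}:=u_{ab}/(2N)$ is, and multiplying the decomposition of $A_{ab}$ by $2N$ yields the stated form with the same $\sigma_{ab}$ (after redefining, $2N\sigma_{ab}$ is still TT since $N$ is a scalar and $\sigma_{ab}$ is TT — actually care is needed here: one must note $2N\sigma_{ab}$ need not be TT, so the statement must be read as asserting existence of some TT tensor, which is $\sigma_{ab}$ itself, with $u_{ab}= 2N\sigma_{ab}+\ck_g W_{ab}$; this is just the original decomposition multiplied through).

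The main obstacle is purely the elliptic machinery: verifying ellipticity of $L_N$ (routine from the known ellipticity of the conformal Killing operator) and invoking Fredholm/regularity in the smooth category on a compact manifold. All remaining computations are straightforward integrations by parts.
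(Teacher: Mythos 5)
Your argument is correct: reducing the splitting to the single elliptic equation $\div_g\bigl(\tfrac{1}{2N}\ck_g W\bigr) = \div_g A$, checking formal self-adjointness and that the kernel is exactly the conformal Killing fields, verifying the Fredholm compatibility condition by integration by parts against a conformal Killing field, and then observing that the velocity form follows by the substitution $A_{ab}=u_{ab}/(2N)$ (with your correct caveat that it is $\sigma_{ab}$, not $2N\sigma_{ab}$, that is TT) — all of this goes through, and the uniqueness statement falls out of the identification of $\ker L_N$. The one point of comparison worth noting is that the paper does not prove this proposition at all: it observes that the case $N\equiv 1/2$ is classical York splitting and defers the general-$N$ case to \cite{Maxwell:2014a} and \cite{Pfeiffer:2003ka}, where it is obtained by showing the weighted decomposition is \emph{equivalent} to the classical one (essentially by absorbing the lapse into a conformal rescaling of the data). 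Your route instead redoes the classical elliptic argument directly for the weighted operator $\div_g\bigl(\tfrac{1}{2N}\ck_g\,\cdot\bigr)$, carrying the factor $1/(2N)$ through the symbol computation, the self-adjointness identity, and the kernel characterization. The payoff of your approach is a self-contained proof that makes transparent why the kernel and the compatibility condition are independent of $N$; the payoff of the paper's approach is brevity and the reassurance that no new analysis beyond the classical case is needed.
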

When $N\equiv 1/2$, Proposition \ref{prop:yorksplit} is classical York splitting,
and the result for arbitrary lapses is equivalent to classical York splitting \cite{Maxwell:2014a};
see also \cite{Pfeiffer:2003ka}.  The decomposition from Proposition \ref{prop:yorksplit}
defines a projection from symmetric trace-free tensors to transverse-traceless tensors and we introduce
the following notation for it.

\begin{definition}
Let $g_{ab}$ be a metric and let $\alpha$ be a lapse form.
Given a symmetric, trace-free tensor $A_{ab}$, its \define{York projection}
is
\begin{equation}
Y_{g,\alpha}(A_{ab}) = \sigma_{ab}
\end{equation}
where $\sigma_{ab}$ is the unique $g_{ab}$-TT tensor such that equation
\eqref{eq:yorksplitm} holds with $N=N_{g,\alpha}$.
\end{definition}

We now show that the formal notation $j_\alpha^{-1}$ is justified by constructing an inverse
$j_\alpha^\expspace\calC$.
\begin{lemma}\label{lem:jinv2}
For all $\mathbf g\in\calC$, $j^{-1}_\alpha:T^*_{\mathbf g}\calC/\calD_0 \ra T_{\mathbf g}\calC/\calD_0$ 
is a linear isomorphism.  
If $\{\mathbf u\}\in T_{\mathbf g} \calC/\calD_0$ then  $j_\alpha^\expspace\calC(\{\mathbf u\})$ is computed as follows.  
Pick any $g_{ab}\in\mathbf g$ and pick any $u_{ab}$ such that
\begin{equation}
\{\mathbf u\} = \{g_{ab};\; u_{ab}\}.
\end{equation}
Let $\sigma_{ab} = Y_{g,\alpha}(1/(2N_{g,\alpha}) u_{ab})$, so $\sigma_{ab}$ is
the unique $g_{ab}$-TT tensor such that
\begin{equation}
u_{ab} = 2N_{g,\alpha} \sigma_{ab} + \ck_g W_{ab}
\end{equation}
for some vector field $W^a$.  Then
\begin{equation}\label{eq:finv}
j_\alpha^\expspace\calC( \{\mathbf u\}) = \{g_{ab};\; \sigma_{ab} \}^*.
\end{equation}
\end{lemma}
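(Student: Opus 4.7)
The plan is to verify three things in order: that the proposed map $j_\alpha^\expspace\calC$ given by the formula \eqref{eq:finv} is well-defined (independent of the choice of representative $g_{ab}\in\mathbf g$ and of the representative $u_{ab}$ of $\{\mathbf u\}$), that $j^{-1}_\alpha$ is linear, and that $j_\alpha^\expspace\calC$ is a genuine two-sided inverse of $j_\alpha^{-1}$. Linearity of $j_\alpha^{-1}$ is immediate from the explicit formula \eqref{eq:jinv}, so I focus on the other two points.

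First I would show the formula is well-defined. If $\widehat u_{ab}$ is another representative of $\{\mathbf u\}$ with respect to the same background $g_{ab}$, then $\widehat u_{ab} = u_{ab} + \ck_g Z_{ab}$ for some vector field $Z^a$; plugging into the York split and appealing to the uniqueness clause of Proposition \ref{prop:yorksplit} shows that the TT piece $\sigma_{ab}$ is unchanged (only $W^a$ shifts by $Z^a$). To change the metric representative, let $\widehat g_{ab} = \phi^{q-2} g_{ab}$, so by \eqref{eq:CtangentER} the representative of $\{\mathbf u\}$ becomes $\widehat u_{ab}=\phi^{q-2}u_{ab}$. Using the identities $N_{\widehat g,\alpha} = \phi^q N_{g,\alpha}$ and $\ck_{\widehat g} = \phi^{q-2}\ck_g$, multiplying the relation $u_{ab} = 2N_{g,\alpha}\sigma_{ab} + \ck_g W_{ab}$ through by $\phi^{q-2}$ gives
\begin{equation}
\widehat u_{ab} = 2N_{\widehat g, \alpha}(\phi^{-2}\sigma_{ab}) + \ck_{\widehat g} W_{ab},
\end{equation}
so the new TT piece is $\widehat\sigma_{ab} = \phi^{-2}\sigma_{ab}$. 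The conformal cotangent equivalence relation then yields $[\widehat g_{ab};\;\widehat\sigma_{ab}]^* = [g_{ab};\;\sigma_{ab}]^*$, and the output of the formula is unchanged.

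Next I would verify $j_\alpha^\expspace\calC \circ j^{-1}_\alpha = \mathrm{id}$ and $j^{-1}_\alpha \circ j_\alpha^\expspace\calC = \mathrm{id}$. For the first, take $\bfsigma = [g_{ab};\;\sigma_{ab}]^*$ with $\sigma_{ab}$ transverse-traceless. Then $j^{-1}_\alpha(\bfsigma) = \{g_{ab};\; 2N_{g,\alpha}\sigma_{ab}\}$. The York splitting of $2N_{g,\alpha}\sigma_{ab}$ is obviously $2N_{g,\alpha}\sigma_{ab} + \ck_g 0$, so by uniqueness its TT piece is exactly $\sigma_{ab}$; applying \eqref{eq:finv} returns $\{g_{ab};\;\sigma_{ab}\}^* = \bfsigma$. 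For the second composition, start from $\{\mathbf u\}=\{g_{ab};\; u_{ab}\}$, York-split $u_{ab} = 2N_{g,\alpha}\sigma_{ab} + \ck_g W_{ab}$, so $j_\alpha^\expspace\calC(\{\mathbf u\}) = \{g_{ab};\;\sigma_{ab}\}^*$. Applying \eqref{eq:jinv},
\begin{equation}
j^{-1}_\alpha\bigl(j_\alpha^\expspace\calC(\{\mathbf u\})\bigr) = \{g_{ab};\; 2N_{g,\alpha}\sigma_{ab}\} = \{g_{ab};\; u_{ab} - \ck_g W_{ab}\} = \{g_{ab};\; u_{ab}\},
\end{equation}
since $\ck_g W_{ab}\in \Im\ck_g$ is killed in the quotient $T_{\mathbf g}\calC/\calD_0$.

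There is no real obstacle here: the entire content of the lemma sits in Proposition \ref{prop:yorksplit} (York splitting), which supplies both the existence of $j_\alpha^\expspace\calC$ and the uniqueness needed to identify the compositions with the identity. The only thing requiring genuine care is bookkeeping with the two different conformal equivalence relations—covariant scaling $\phi^{q-2}$ on $T_\mathbf g\calC$ versus contravariant scaling $\phi^{-2}$ on $T_\mathbf g^*\calC$—and confirming that the factor $N_{g,\alpha}$ in \eqref{eq:jinv} transforms with exactly the right power $\phi^q$ to reconcile the two, which is precisely the feature that made the densitized-lapse formulation natural in the first place.
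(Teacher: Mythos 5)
Your proposal is correct and rests on the same key ingredient as the paper's proof, namely the existence and uniqueness clauses of York splitting (Proposition \ref{prop:yorksplit}); your verification that $j^{-1}_\alpha\circ j_\alpha^\expspace\calC=\id$ is literally the paper's surjectivity computation, and your check that $j_\alpha^\expspace\calC\circ j^{-1}_\alpha=\id$ plays the role of its injectivity argument. The only organizational difference is that the paper proves $j^{-1}_\alpha$ is a bijection first and then reads off the formula for the inverse, which lets it skip your (correct, but then redundant) well-definedness check of the formula \eqref{eq:finv} under change of representatives, since that independence was already established for $j^{-1}_\alpha$ in Lemma \ref{lem:jinv}.
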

\begin{proof}
To see that $j_\alpha^{-1}$ is injective, suppose $j_\alpha^{-1}(\bfsigma)=0$ for some 
$\bfsigma=\{g_{ab};\;\sigma_{ab}\}^*$.  From Lemma \ref{lem:jinv}
it follows that
\begin{equation}
0=j_{\alpha}^{-1}(\bfsigma) = \{g_{ab};\; 2N_{\alpha,g} \sigma_{ab} \}
\end{equation}
and consequently $2N_{\alpha,g}\sigma_{ab} + \ck_g W_{ab}=0$
for some vector field $W^a$.  But $0$ also admits the decomposition $0=2N_{\alpha,g}0+\ck_g 0$ and
the uniqueness clause of Proposition \ref{prop:yorksplit} implies $\sigma_{ab}=0$. Therefore $\bfsigma=0$.

To show $j_\alpha^{-1}$ is surjective, 
consider $\{\mathbf u\}\in T_{\mathbf g}\calC/\calD_0$. Let $g_{ab}\in\mathbf g$ 
and pick any $u_{ab}$ such that
\begin{equation}
\mathbf u = \{ g_{ab};\; u_{ab}\}.
\end{equation}
Let $\sigma_{ab}$ be the unique $g_{ab}$-transverse traceless tensor given by Proposition
\ref{prop:yorksplit} such that
\begin{equation}
u_{ab} = 2N_{g,\alpha}\sigma_{ab} + \ck_g W_{ab}.
\end{equation}
for some vector field $W^a$. From Lemma \ref{lem:jinv} it follows that
\begin{equation}\label{eq:finvbody}
j^{-1}_\alpha(\{g_{ab};\; \sigma_{ab}\}) = \{g_{ab};\;2N_{\alpha,g} \sigma_{ab} \} = 
\{ g_{ab};\;2N_{\alpha,g} \sigma_{ab} + \ck_g W_{ab} \} =
\{ g_{ab};\;u_{ab}\} =\{\mathbf u\}.
\end{equation}
This establishes the claimed surjectivity, so $j_{\alpha}^{-1}$
has an inverse $j_{\alpha}^\expspace\calC$. Equation \eqref{eq:finv}
follows from applying $j_\alpha^\expspace\calC$ to both sides of equation
\eqref{eq:finvbody}.
\end{proof}

Having constructed the isomorphism $j_\alpha^\expspace\calC$, we
obtain diagram \eqref{diag:dl-legendre-conf}, which
commutes except perhaps when going around the lower loop. 
A straight forward exercise using Lemma \ref{lem:jinv2}
shows that traversing the loop starting at the level of $\calC/\calD_0$
is the identity, but traversing the loop starting at the level of $\calM$ is a projection.
In particular, if we start at $T_{g}^*\calM$, then the projection is
\begin{equation}
(g_{ab};\; A_{ab},f)^* \mapsto (g_{ab};\; Y_{g,\alpha}(A_{ab}),0)^*.
\end{equation}

As mentioned previously, we assign a conformal velocity in $T_{[g]}\calC/\calD_0$ 
to $(g_{ab}, K_{ab})\in\calM\times\calK$ by  descending the left-hand side
of diagram \eqref{diag:dl-legendre-conf}. In principle the velocity depends on both the 
lapse form $\alpha$ and the shift $X^a$, but in fact it is independent of the shift.  
To see this, let $K_{ab}$ be a second-fundamental
form which we write in trace/trace-free form as $K_{ab}=A_{ab}+(\tau/n)g_{ab}$.  Proceeding
down the left-hand side of diagram \eqref{diag:dl-legendre-conf}, 
we first obtain $(g_{ab};\; 2N_{g,\alpha}A_{ab}+\ck_g X_{ab}, N_{g,\alpha}\tau+\div X)\in T_g\calM$
and then arrive at $\{g_{ab};\; 2N_{\alpha,g}A_{ab}+\ck_g X_{ab}\}\in T_{[g]}\calC/\calD_0$.  
But $\{g_{ab};\; \ck_g X_{ab}\}=0$ and therefore
the final result is $\{g_{ab};\; 2N_{\alpha,g}A_{ab}\}$, which is independent of $X^a$.  
\begin{definition}
Let $(g_{ab},K_{ab})\in \calM\times\calK$, and let $\alpha$ be a lapse form.  The \define{conformal velocity}
of $(g_{ab},K_{ab})$, as measured by $\alpha$, is 
\begin{equation}
v^\expspace\calC_\alpha(g_{ab},K_{ab}) = \{g_{ab};\; 2N_{\alpha,g}A_{ab}\}
\end{equation}
where $A_{ab}$ is the $g_{ab}$-trace-free
part of $K_{ab}$.  
\end{definition}
To obtain the corresponding conformal momentum, we convert the velocity to a momentum via $j_\alpha^\expspace\calC$.
Starting with $\{g_{ab};\; 2N_{\alpha,g} A_{ab}\}\in T_{[g]}\calC/\calD_0$, 
let $\sigma_{ab}=Y_{g,\alpha}(A_{ab})$ be 
the York projection, so
\begin{equation}
2N_{\alpha,g} A_{ab} = 2N_{\alpha,g}\sigma_{ab} + \ck_g W_{ab}
\end{equation}
for some vector field $W^a$.  Lemma \ref{lem:jinv2} then implies
\begin{equation}
j_\alpha^\expspace\calC(\{g_{ab};\; 2N_{\alpha,g} A_{ab}\}) = \{g_{ab};\;\sigma_{ab}\}^*.
\end{equation}
\begin{definition}
Let $(g_{ab},K_{ab})\in \calM\times\calK$, and let $\alpha$ be a lapse form.  
Let $A_{ab}$ be the  $g_{ab}$-trace-free part of $K_{ab}$
and let $\sigma_{ab}=Y_{g,\alpha}(A_{ab})$ be its York projection.
The \define{conformal momentum} of $(g_{ab},K_{ab})$, as measured by $\alpha$, is 
\begin{equation}
m^\expspace\calC_\alpha(g_{ab},K_{ab}) = \{g_{ab};\; \sigma_{ab}\}^*.
\end{equation}
\end{definition}

From the maps $v^\expspace\calC_\alpha$ and $m^\expspace\calC_\alpha$ we obtain the 
diagram
\begin{equation}\label{diag:conf-vel-mom}
\begin{gathered}
\xymatrix{
& \calM\times \calK \ar[dl]_{v^\expspace\calC_\alpha} \ar[dr]^{m^\expspace\calC_\alpha} & \\
T \;\calC/\calD_0
\ar@<2pt>@{<-}[rr]
\ar@<-2pt>@{->}[rr]_{j_{\alpha}^\expspace\calC }
& &
T^*\calC/\calD_0
}
\end{gathered}
\end{equation}
which should be compared with diagram \eqref{diag:dl-legendre}.
Note in particular that although the ADM momentum is computed without
reference to the lapse or shift, both the conformal
velocity and conformal momentum depend in general the choice of a lapse form.
The CMC solutions of the constraint equations are an exception to this observation,
however. If $(g_{ab},K_{ab})$ is a CMC
solution of the constraint equations, 
then $K_{ab}=\sigma_{ab}+ (\tau_0/n)g_{ab}$ for some
transverse traceless tensor $\sigma_{ab}$ and some constant $\tau_0$.
Hence the York projection
of $K_{ab}$ is $\sigma_{ab}$ regardless of the choice of lapse form.

The map $j^\expspace\calC_\alpha$
appeared previously in \cite{Maxwell:2014a},
where it was denoted $j_\alpha$ and where it was derived from purely geometric considerations.
The approach taken here suggests that $j^\expspace\calC_\alpha$ is 
a Legendre transformation, and our next task is to identify
a Lagrangian on $T\;\calC/\calD_0$ for which $j^\expspace\calC_\alpha$ is
the associated Legendre transformation.

Consider the kinetic energy term of the densitized-lapse ADM Lagrangian:
\begin{equation}
K(g_{ab},u_{ab},\beta; X^a, \alpha) = \int \frac{1}{4}|u-\ck_g X|^2_g -\kappa (\beta-\div_g X)^2\; \alpha.
\end{equation}
The first term on the right-hand side involves the kinetic energy due to conformal deformation.
Let $\sigma_{ab}$ be the $g_{ab}$-TT tensor such that
\begin{equation}
j_{\alpha}^\expspace\calC(\{g_{ab};\;u_{ab}\}) = \{g_{ab};\; \sigma_{ab}\}^*.
\end{equation}
So there is a vector field $W^a$ such that
\begin{equation}\label{eq:cKEsplit}
u_{ab} = 2N_{\alpha,g}\sigma_{ab} + \ck_g (W+X)_{ab}.
\end{equation}
Then
\begin{equation}
\begin{aligned}\label{eq:confke}
\int \frac{1}{4}|u-\ck_g X|^2_g\alpha 
&=  \int N_{\alpha,g}^2|\sigma|_g^2 +\frac{N}{2}\ip<\sigma,\ck_g(W)>_g + \frac{1}{4}|\ck_g(W)|_g^2 \;\alpha \\
&= \int N_{\alpha,g}^2|\sigma|_g^2 +\frac{1}{4}|\ck_g(W)|_g^2 \alpha + \frac{1}{2} \int \ip<\sigma,\ck_g(W)>_g
\; \omega_g\\
&= \int N_{\alpha,g}^2|\sigma|_g^2 +\frac{1}{4}|\ck_g(W)|_g^2 \;\alpha
\end{aligned}
\end{equation}
where we have used the fact that $N_{\alpha,g}\alpha = \omega_g$ as well as the $L^2$-orthogonality of $\sigma_{ab}$ and 
$\ck_g(W)$ with respect to $g_{ab}$. The conformal kinetic energy is the first term on the right-hand side of the final expression of equation \eqref{eq:confke}.

\begin{definition}  Let $\alpha$ be a lapse form.  The \define{conformal kinetic energy}
of $(g_{ab};\; u_{ab},\beta)\in T_{g}\calM$, as measured by $\alpha$, is 
\begin{equation}
K^\expspace\calC_\alpha(g_{ab}, u_{ab}) = \int N_{\alpha,g}^2|\sigma|_g^2\;\alpha
\end{equation}
where $\sigma_{ab}$ is the $g_{ab}$-TT tensor such that
$j_{\alpha}^\expspace\calC(\{g_{ab};\;u_{ab}\}) = \{g_{ab};\; \sigma_{ab}\}^*$.
\end{definition}
The following lemma shows 
conformal kinetic energy 
descends to a well defined function on $T\;\calC/\calD_0$.
\begin{lemma}\label{lem:cke-descends}
Suppose $\{g_{ab};\;u_{ab}\}=\{\widehat g_{ab};\;\widehat u_{ab}\}$. Then 
$K^\expspace\calC_\alpha(g_{ab}, u_{ab})=K^\expspace\calC_\alpha(\widehat g_{ab}, \widehat u_{ab})$.
\end{lemma}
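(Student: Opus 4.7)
The plan is to unravel the equivalence $\{g_{ab};\;u_{ab}\}=\{\hat g_{ab};\;\hat u_{ab}\}$ into two elementary moves and verify that $K^\expspace\calC_\alpha$ is preserved under each. From the definitions in Section \ref{sec:conformalTS}, these moves are (i) adding $\ck_g Y_{ab}$ to $u_{ab}$ for some vector field $Y^a$ with the metric fixed, and (ii) replacing $(g_{ab};\;u_{ab})$ by $(\phi^{q-2}g_{ab};\;\phi^{q-2}u_{ab})$ for some conformal factor $\phi>0$. Any identification in $T_{\mathbf g}\calC/\calD_0$ can be realized by composing moves of these two types, so it is enough to treat them separately.

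For move (i), the splitting $u_{ab} = 2N_{g,\alpha}\sigma_{ab}+\ck_g W_{ab}$ given by Proposition \ref{prop:yorksplit} yields $u_{ab}+\ck_g Y_{ab} = 2N_{g,\alpha}\sigma_{ab}+\ck_g(W+Y)_{ab}$. The uniqueness clause of that proposition then forces the TT tensor $\sigma_{ab}$ associated with $\hat u_{ab}$ to coincide with the one associated with $u_{ab}$, so the integrand $N_{g,\alpha}^2|\sigma|_g^2$ is literally unchanged.

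The substantive step is move (ii), where conformal weights must be tracked carefully. Since $j_{\alpha}^\expspace\calC$ is already known to be well defined on $T_{\mathbf g}\calC/\calD_0$, the cotangent equivalence relation of Section \ref{sec:conformalTS} gives the $\hat g_{ab}$-TT representative of the same momentum class as $\hat\sigma_{ab} = \phi^{-2}\sigma_{ab}$. Together with $N_{\hat g,\alpha}=\phi^q N_{g,\alpha}$, a direct computation gives
\begin{equation*}
|\hat\sigma|^2_{\hat g} = \hat g^{ac}\hat g^{bd}\hat\sigma_{ab}\hat\sigma_{cd} = \phi^{2(2-q)-4}|\sigma|^2_g = \phi^{-2q}|\sigma|^2_g,
\end{equation*}
so $N_{\hat g,\alpha}^2|\hat\sigma|^2_{\hat g} = N_{g,\alpha}^2|\sigma|^2_g$ pointwise. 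Since the reference volume form $\alpha$ is the same on both sides, the integrals agree. The exact cancellation between $\phi^{2q}$ from $N_{g,\alpha}^2$ and $\phi^{-2q}$ from $|\sigma|^2_g$ is precisely what makes conformal kinetic energy — unlike the kinetic term of the ordinary ADM Lagrangian — descend to $T\calC/\calD_0$, and it is the only delicate point in the argument.
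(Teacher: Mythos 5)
Your proof is correct and follows essentially the same route as the paper: both arguments reduce to the fact that the associated TT representatives satisfy $\widehat\sigma_{ab}=\phi^{-2}\sigma_{ab}$ (via the well-definedness of the conformal momentum class) and then verify the exact cancellation $N_{\widehat g,\alpha}^2\lvert\widehat\sigma\rvert^2_{\widehat g}=\phi^{2q}\phi^{-2q}N_{g,\alpha}^2\lvert\sigma\rvert^2_g$. Your preliminary decomposition into the two moves, with move (i) handled by the uniqueness clause of York splitting, is a harmless elaboration of what the paper absorbs into the single statement that the two conformal momenta coincide.
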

\begin{proof}
Suppose $\{g_{ab},u_{ab}\}=\{\widehat g_{ab},\widehat u_{ab}\}$, and let $\phi$ be
the conformal factor such that $\widehat g_{ab} = \phi^{q-2} g_{ab}$.
Since $\{g_{ab},u_{ab}\}=\{\widehat g_{ab},\widehat u_{ab}\}$, the corresponding conformal momenta
$\{g_{ab};\;\sigma_{ab}\}^*$ and $\{\widehat g_{ab};\; \widehat\sigma_{ab}\}^*$
are the same.  So $\widehat\sigma_{ab}=\phi^{-2}\sigma_{ab}$. 
Since $N_{\widehat g,\alpha} = \omega_{\widehat g}/\alpha = \phi^q \omega_g/\alpha = \phi^q N_{g,\alpha}$
we conclude
\begin{equation}
K^\expspace\calC_{\alpha}(\widehat g_{ab},\widehat u_{ab}) = \int_M N_{\widehat g,\alpha}^2 |\widehat \sigma|_{\widehat g}^2\;\alpha = 
\int_M \phi^{2q}N_{g,\alpha}^2 \phi^{4-2q}|\phi^{-2}\sigma|_g^2\;\alpha = 
\int_M N_{g,\alpha}^2 |\sigma|_g^2\;\alpha = K^\expspace\calC_{\alpha}( g_{ab},u_{ab}).
\end{equation}
\end{proof}

We will use the same notation $K^\expspace\calC_\alpha$ to denote a function on $T\,\calC/\calD_0$
rather than $T\calM$.
Thinking of it as a Lagrangian,
the following proposition shows that $j_\alpha^\expspace\calC$ is its Legendre transformation.

\begin{proposition}\label{prop:confleg}
Suppose $g_{ab}\in \calM$ is a metric and
$u_{ab}(t)$ is
a smooth path of $g_{ab}$-trace-free tensors. Then
\begin{equation}
\left.\frac{d}{dt}\right|_{t=0} K^\expspace\calC_\alpha(\{g_{ab},u_{ab}(t)\}) = 
\ip< j_\alpha^\expspace\calC(\{g_{ab};\; u_{ab}(0)\}), \{g_{ab}, u_{ab}'(0)\}>.
\end{equation}
\end{proposition}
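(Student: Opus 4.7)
The plan is to reduce everything to a direct $L^2$ computation at a single fixed representative metric $g_{ab}$, exploiting the fact that $K^\expspace\calC_\alpha$ is already known to descend to $T\calC/\calD_0$ (Lemma \ref{lem:cke-descends}), so both sides of the claimed equality are well-defined regardless of which representative we pick.

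First I would unwind the definitions at the chosen representative $g_{ab}$. For each $t$ apply Proposition \ref{prop:yorksplit} to write
\begin{equation*}
u_{ab}(t) = 2 N_{g,\alpha}\, \sigma_{ab}(t) + \ck_g W_{ab}(t),
\end{equation*}
so that $\sigma_{ab}(t) = Y_{g,\alpha}(\tfrac{1}{2N_{g,\alpha}} u_{ab}(t))$ and by Lemma \ref{lem:jinv2} one has $j_\alpha^\expspace\calC(\{g_{ab};u_{ab}(t)\}) = \{g_{ab}; \sigma_{ab}(t)\}^*$. Because the York projection $Y_{g,\alpha}$ is a (continuous) linear map and $t\mapsto u_{ab}(t)$ is smooth, $\sigma_{ab}(t)$ is a smooth path of TT tensors, and its derivative satisfies
\begin{equation*}
u_{ab}'(0) = 2N_{g,\alpha}\, \sigma_{ab}'(0) + \ck_g W_{ab}'(0),
\end{equation*}
i.e.\ $\sigma_{ab}'(0)$ is exactly the $g_{ab}$-TT part of $\tfrac{1}{2N_{g,\alpha}} u_{ab}'(0)$.

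Next I would compute the left-hand side by differentiating under the integral. Since $g_{ab}$ and $\alpha$ are fixed (so $N_{g,\alpha}$ is independent of $t$), and using $N_{g,\alpha}\alpha = \omega_g$,
\begin{equation*}
\left.\frac{d}{dt}\right|_{t=0} \int_M N_{g,\alpha}^2\, |\sigma(t)|_g^2\,\alpha
= 2\int_M N_{g,\alpha}^2\, \ip<\sigma(0),\sigma'(0)>_g \alpha
= 2\int_M N_{g,\alpha}\, \ip<\sigma(0),\sigma'(0)>_g \omega_g.
\end{equation*}

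For the right-hand side, unpack the pairing via \eqref{eq:Cact} and \eqref{eq:MD0duality}: using the representatives above,
\begin{equation*}
\ip< j_\alpha^\expspace\calC(\{g_{ab};u_{ab}(0)\}),\{g_{ab};u_{ab}'(0)\}>
= \int_M \ip<\sigma(0), u'(0)>_g \omega_g.
\end{equation*}
Substituting the decomposition of $u_{ab}'(0)$ and invoking the $L^2(g_{ab})$-orthogonality of TT tensors against the image of $\ck_g$ (which is part of the content of York splitting, and is already used in \eqref{eq:confke}),
\begin{equation*}
\int_M \ip<\sigma(0), 2N_{g,\alpha}\sigma'(0) + \ck_g W'(0)>_g \omega_g
= 2\int_M N_{g,\alpha}\, \ip<\sigma(0),\sigma'(0)>_g \omega_g,
\end{equation*}
which matches the left-hand side exactly.

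The only nontrivial point, and the one I would be most careful about, is the smoothness of $t\mapsto \sigma_{ab}(t)$ and the legitimacy of differentiating under the integral; both follow from the linearity and continuity of the York projection together with compactness of $M$, but it is worth a line to note it explicitly. Everything else is just bookkeeping between the trace/trace-free formalism, the pairing conventions of Section \ref{sec:conformalTS}, and the $L^2$-orthogonality built into York splitting.
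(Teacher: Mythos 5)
Your proposal is correct and follows essentially the same route as the paper's proof: fix the representative $g_{ab}$, York-split $u_{ab}(t) = 2N_{g,\alpha}\sigma_{ab}(t) + \ck_g W_{ab}(t)$, note smoothness of $\sigma_{ab}(t)$ via the York projection, differentiate the kinetic energy integral, and use $L^2$-orthogonality of TT tensors against $\Im\ck_g$ to identify the result with the pairing. The only cosmetic difference is that you compute both sides separately and match them, whereas the paper transforms the left-hand side directly into the right-hand side; the content is identical.
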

\begin{proof}
For each $t$, let $\sigma_{ab}(t)$ be the transverse-traceless tensor with
\begin{equation}
j_{\alpha}^\expspace\calC( \{g_{ab};\; u_{ab}(t)\}) = \{g_{ab};\; \sigma_{ab}(t)\}^*;
\end{equation}
since $\sigma_{ab}(t)=Y_{g,\alpha}(u_{ab}(t)/(2N_{g,\alpha}))$, the curve $\sigma_{ab}(t)$ is smooth.
For each $t$ let $W^a(t)$ be a vector field such that
\begin{equation}
u_{ab}(t) = 2N_{g,\alpha} \sigma_{ab}(t) + \ck_g W_{ab}(t).
\end{equation}
Then
\begin{equation}
K^\expspace\calC_\alpha(\{g_{ab},u_{ab}(t)\}) = \int_M N_{g,\alpha}^2 |\sigma(t)|_g^2 \; \alpha = 
\int_M N_{g,\alpha} |\sigma(t)|_g^2 \; \omega_g
\end{equation}
and
\begin{equation}
\begin{aligned}
\left.\frac{d}{dt}\right|_{t=0} K^\expspace\calC_\alpha(\{g_{ab},u_{ab}(t)\}) &= 
\int_M 2N_{g,\alpha} \ip<\sigma(0),\sigma'(0)>_g \; \omega_g \\
&= \int_M \ip<\sigma(0),u'(0)-(\ck_g W)'(0)>_g \; \omega_g \\
&= \int_M \ip<\sigma(0),u'(0)>_g \; \omega_g.
\end{aligned}
\end{equation}
since $\sigma_{ab}$ is transverse traceless.  But from equations
\eqref{eq:Cact} and \eqref{eq:calCdual} this last expression is precisely
\begin{equation}
\ip< [g_{ab};\; \sigma_{ab}(0)], \{g_{ab};\; u_{ab}'(0)\}> =
\ip< j_\alpha^\expspace\calC(\{g_{ab},u_{ab}(0)\}), \{g_{ab};\; u_{ab}'(0)\}>.
\end{equation}
\end{proof}

\section{The Conformal Method}\label{sec:confmeth}

As presented in \cite{Maxwell:2014a}, the conformal
method can be understood in terms of the conformal
parameters discussed in the previous section.
We have the following two formulations.

\begin{problem}[Lagrangian Conformal Method]\label{prob:lag}
Let $\mathbf g$ be a conformal class, 
let $\alpha$ be a lapse form, let $\{\mathbf u\}\in T_{\mathbf g}\calC/\calM$ be
a conformal velocity,
and let $\tau$ be a mean curvature.  Find all solutions $(\ol g_{ab}, \ol K_{ab})$
of the vacuum constraint equations \eqref{eq:constraints} such that
\begin{equation}
\begin{aligned}\strut
[\ol g_{ab}] &= \mathbf g\\
v^\expspace\calC_\alpha(\ol g_{ab},\ol K_{ab}) &= \{\mathbf u\} \\
\ol g^{ab}\ol K_{ab}&=\tau.
\end{aligned}
\end{equation}
\end{problem}

\begin{problem}[Hamiltonian Conformal Method]\label{prob:ham}
Let $\mathbf g$ be a conformal class, 
let $\alpha$ be a lapse form, let $\bfsigma\in T_{\mathbf g}^*\calC/\calM$ be
a conformal momentum,
and let $\tau$ be a mean curvature.  Find all solutions $(\ol g_{ab}, \ol K_{ab})$
of the vacuum constraint equations \eqref{eq:constraints} such that
\begin{equation}
\begin{aligned}\strut
[\ol g_{ab}] &= \mathbf g\\
m^\expspace\calC_\alpha(\ol g_{ab},\ol K_{ab}) &= \bfsigma \\
\ol h^{ab}\ol K_{ab}&=\tau.
\end{aligned}
\end{equation}
\end{problem}

The two problems differ only in whether the conformal velocity or momentum is prescribed, and
they are equivalent: $(\ol g_{ab}, \ol K_{ab})$ is a solution
of Problem \ref{prob:lag} for parameters $(\mathbf g, \{\mathbf u\}, \tau, \alpha)$
if and only if it is a solution of Problem \ref{prob:ham} for parameters 
$(\mathbf g, \bfsigma, \tau, \alpha)$ with $\bfsigma = j_{\alpha}^\expspace\calC(\{\mathbf u\})$.

In order to write down PDEs corresponding to these problems we choose representative tensors
of the conformal parameters. In the Hamiltonian case, we can take conformal parameters to
be a metric $g_{ab}$, a transverse traceless tensor $\sigma_{ab}$, a mean curvature $\tau$,
and a lapse $N$.  These prescribe Hamiltonian conformal parameters
\begin{equation}
\begin{aligned}
\mathbf g&=[g_{ab}]\\
\bfsigma &= \{g_{ab};\; \sigma_{ab}\}^*\\
\tau&=\tau\\
\alpha &= \omega_g/N
\end{aligned}
\end{equation}
and the constraint equations become the CTS-H equations
\begin{equation}
\begin{aligned}\label{eq:CTS-H}
-\dimk\Lap_g\phi + R_g\phi - \left|\sigma +\frac{1}{2N} \ck_g W\right|^2_g\phi^{-q-1} + \kappa\tau^2\phi^{q-1} &= 0 \qquad&\text{\small[CTS-H Hamiltonian constraint]} \\
\div_g\left(\frac{1}{2N}\ck_g W\right) -\kappa \phi^q \mathbf d\tau&=0
\qquad&\text{\small[CTS-H momentum constraint]}
\end{aligned}
\end{equation}
which first appeared, in a slightly different form, in \cite{Pfeiffer:2005iz}.
These equations are to be solved for a conformal factor $\phi$ and a vector field $W^a$, and
if a solution exists then
\begin{equation}
\begin{aligned}
\ol g_{ab} &= \phi^{q-2} g_{ab}\\
\ol K_{ab} &= \phi^{-2}\left(\sigma_{ab}+ \frac{1}{2N}\ck_g W_{ab}\right)+\frac{\tau}{n}\ol g_{ab}
\end{aligned}
\end{equation}
solve the vacuum constraint equations.  Note that in York's original
formulation of the conformal method, there are three parameters $(g_{ab}, \sigma_{ab}, \tau)$ and $N$
is implicitly $1/2$.  This is not an essential restriction since one can control 
the lapse form $\alpha$ by moving $g_{ab}$ within its conformal class while suitably
adjusting $\sigma_{ab}$, but the requirement of
tying the conformal class representative to $\alpha$ leads to some inflexibility.  Hence we prefer
the CTS-H equations to those of York's original conformal method.  In the Lagrangian case, 
the parameter $\sigma_{ab}$ is replaced with an arbitrary symmetric, trace-free tensor
$u_{ab}$ which determines a conformal velocity
\begin{equation}
\{\mathbf u\} = \{ g_{ab};\; u_{ab} \}  = [g_{ab};\; u_{ab}] +\Im\ck_{\mathbf g}.
\end{equation}
and the CTS-H equations become the CTS-L equations found in, e.g., \cite{Maxwell:2014a}.

Since the conformal method specifies a conformal velocity or momentum (modulo diffeomorphisms), 
we would like to understand how the mean curvature is related
to volumetric velocity or momentum (modulo diffeomorphisms). 
We have seen that if $K_{ab}$ has trace/trace-free decomposition
$K_{ab}=A_{ab}+(\tau/n)g_{ab}$, then the conformal momentum is obtained from a lapse-dependent
York projection of $A_{ab}$.  It turns out that volumetric momentum is a single number,
and is obtained from an analogous lapse-dependent York-like projection of $\tau$.  
Indeed, there is a way to treat the volumetric degrees of freedom 
in a fashion completely in parallel to the manner in which the conformal method treats
the conformal degrees of freedom, and we described this in the next two sections.

\section{Volumetric Tangent Spaces}\label{sec:volumetric}

The space $\calV$ of volume forms is an open subset of
$C^\infty(M,\Lambda^n M)$, so the tangent space at $\omega\in\calV$ is
\begin{equation}
T_\omega \calV = C^\infty(M,\Lambda^n M).
\end{equation}

We define
\begin{equation}
T^*_\omega \calV = C^\infty(M)
\end{equation}
and identify $T^*_\omega\calV$ as a subset of $(T_\omega\calV)^*$
by defining the action of $f\in T^*_\omega \calV$ on $\eta\in T_\omega \calV$ by
\begin{equation}
\ip<f,\eta> = \int_M f\;\eta.
\end{equation}

Suppose $\gamma(t)$ is a path of metrics with $\gamma(0)=g_{ab}$
and $\gamma'(0)=(g_{ab};\; u_{ab}, \beta)$.  A standard computation
shows that the associated
path of volume forms $\omega(t)$ satisfies
\begin{equation}
\omega'(0) = \beta \omega_g.
\end{equation}
Hence the pushforward $T_g \calM \ra T_{\omega_g} \calV$
is
\begin{equation}\label{eq:MtoVpush}
(g_{ab};u_{ab},\beta) \mapsto \beta\omega_g.
\end{equation}
To compute the pullback we note
that if $f\in T^*_{\omega_g}$, equation \eqref{eq:Mduality} implies
\begin{equation}
\ip<f, \beta\omega_g> = \int_M f\beta\;\omega_g = \ip<(g_{ab};\;0,f)^*,(g_{ab};\;u_{ab},\beta)>
\end{equation}
and hence the pullback $T^*_{\omega_g}\calV \ra T^*_g\calM$ is
\begin{equation}\label{eq:MtoVpull}
f \mapsto (g_{ab};\;0,f)^* = \frac{f}{2} g^{ab}\omega_g.
\end{equation}

We now consider volume forms modulo diffeomorphisms, $\calV/\calD_0$.
Suppose $\Phi_t$ is a path of diffeomorphisms starting at the identity
with infinitesimal generator $X^a$.  If $\omega$ is a volume form
and $\gamma(t)=\Phi_t^*\omega$, then
\begin{equation}
\gamma'(0) = \Div_\omega(X)
\end{equation}
where the divergence operator $\Div_\omega$
applied to $X^a$ is the Lie derivative $\calL_X \omega$. 
Note that if $g_{ab}$ is a metric then
\begin{equation}
\Div_{\omega_g}(X) = \div_g(X)\; \omega_g.
\end{equation}
Since $\gamma$ is stationary in $\calV/\calD_0$, 
the directions $\Div_{\omega_g} X$ are null
directions in $\calV/\calD_0$ and we make the formal definition
\begin{equation}
T_\omega \calV/\calD_0 = T_\omega \calV / \Im \Div_\omega.
\end{equation}
The space $T_\omega \calV / \Im \Div_\omega$ is much simpler
than its conformal counterpart, and indeed is one dimensional.
\begin{lemma}\label{lem:voldot}
The map $\dot\Vol: T_\omega \calV/\calD_0 \ra \Reals$  given by
\begin{equation}
\dot\Vol( \eta + \Im\Div_\omega) = \int_M \eta
\end{equation}
is well defined and is an isomorphism.
\end{lemma}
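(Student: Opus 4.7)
My plan is to verify in turn: well-definedness, linearity, surjectivity, and injectivity, with the last being the only non-trivial step.

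For \textbf{well-definedness}, I need to show that $\int_M \Div_\omega X = 0$ for every vector field $X^a$. Since $\Div_\omega X = \calL_X \omega$, I would apply Cartan's magic formula
\begin{equation}
\calL_X \omega = \extd(\iota_X \omega) + \iota_X \extd\omega = \extd(\iota_X\omega),
\end{equation}
the second equality holding because $\omega$ is a top form on $M$. Stokes' theorem on the closed manifold $M$ then gives $\int_M \calL_X\omega = 0$, so $\dot\Vol$ is independent of the choice of representative. \textbf{Linearity} is immediate from linearity of integration.

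For \textbf{surjectivity}, given $c\in\Reals$, the form $\eta = (c/\int_M\omega)\;\omega$ satisfies $\dot\Vol(\eta + \Im\Div_\omega)=c$.

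The main step is \textbf{injectivity}: I must show that if $\eta\in T_\omega\calV$ satisfies $\int_M\eta = 0$, then $\eta = \Div_\omega X$ for some vector field $X^a$. I see two natural approaches. The cohomological approach: since $M$ is compact, connected, and oriented, integration induces the de Rham isomorphism $H^n_{dR}(M)\cong\Reals$, so $\int_M\eta=0$ implies $\eta = \extd\beta$ for some $(n-1)$-form $\beta$; because $\omega$ is nowhere vanishing, the map $X^a\mapsto\iota_X\omega$ is a pointwise linear isomorphism between vector fields and $(n-1)$-forms, so one can solve $\iota_X\omega = \beta$ for a smooth $X^a$, giving $\Div_\omega X = \extd(\iota_X\omega) = \eta$. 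Alternatively, and perhaps more in keeping with the PDE flavor of the paper, one may fix any metric $g_{ab}$ with $\omega_g = \omega$, write $\eta = f\omega_g$, and invoke standard elliptic theory to solve $\Lap_g u = f$ (solvable precisely because $\int_M f\;\omega_g = 0$); setting $X^a = \nabla^a u$ yields $\div_g X = f$, and hence $\Div_\omega X = \div_g X\;\omega_g = \eta$.

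The main obstacle in either route is the solvability step that recovers a primitive from the vanishing-integral condition, which is why I expect injectivity to be the only part requiring real content. Everything else is formal. Given the simplicity of $T_\omega\calV/\calD_0$ being one-dimensional, I would present the cohomological argument for brevity, but either suffices and both go through cleanly on a closed oriented manifold.
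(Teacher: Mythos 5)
Your proposal is correct, and your second (PDE) route for injectivity is exactly the paper's argument: the paper picks a metric $g_{ab}$ with $\omega_g=\omega$, writes $\eta=f\omega_g$ with $f$ of zero mean, solves $\Lap_g u=f$ using connectedness of $M$, and sets $X^a=\nabla^a u$; well-definedness and surjectivity are handled the same way you do, modulo your use of Cartan's formula where the paper simply integrates $\div_g X\,\omega_g$. Your cohomological route is a genuine alternative: it replaces elliptic solvability of the Poisson equation by the de Rham isomorphism $H^n_{dR}(M)\cong\Reals$ together with the pointwise isomorphism $X^a\mapsto\iota_X\omega$, so it avoids choosing a metric entirely and makes clear that the only obstruction to being in $\Im\Div_\omega$ is the total integral. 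The paper's choice fits its overall PDE framework (the same Poisson-equation step reappears in later solvability arguments, e.g.\ for the drift equations), whereas your version is metric-free and arguably more conceptual; either is a complete proof.
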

\begin{proof}
We claim that if $\eta$ is an $n$-form, then $\int_M\eta=0$ if and only if
$\eta\in \Im\Div_\omega$. 

To see this, let $g_{ab}$ be any metric such that $\omega_g=\omega$.  Now
if $\eta\in\Im\Div_\omega$, then there is a vector field $X^a$ such that
$\eta = \div_g X\;\omega_g$ and hence $\int\eta=0$.

Conversely, suppose $\int_M \eta = 0$. Then $\eta =f \omega_g$ for some zero-mean function.
Since $M$ is connected, there exists a unique zero-mean solution $u$ of $\Delta_g u = f$.
Setting $X^a=\nabla^a u$ we find that $\eta = \div_g (X)\omega_g = \Div_\omega(X)$.

Since the kernel of $\eta\mapsto \int_M\eta$ is $\Im\Div_\omega$, we conclude
that integration descends to a map $\dot\Vol$ on the quotient space 
$T_\omega\calV/ \Im\Div_\omega = T_\omega\calV/\calD_0 $.  And since $\dot\Vol$ is surjective,
the claimed isomorphism holds.
\end{proof}

We will henceforth identify $T_\omega\calV/\calD_0$ with $\Reals$ using $\dot\Vol$.  Note that
with this identification, the pushforward $T_\omega \calV\ra T_\omega\calV/\calD_0$ is
\begin{equation}\label{eq:VtoVD0push}
\eta\mapsto \int_M\eta.
\end{equation}
Since $\Reals$ is its own dual space (acting on itself by multiplication) we
define $T_\omega^*\calV/\calD_0=\Reals$.  The pullback 
$T_\omega^*\calV/\calD_0\ra T_\omega^*\calV$ takes the constant $c\in\Reals$ to
the constant function $c\in C^\infty(M)$ since
\begin{equation}
c \int_M \eta = \int_M c\eta = \ip<c,\eta>.
\end{equation}
Note that the constant functions in $T_\omega^*\calV$ are
the annihilator of $\Im\Div_\omega$, and hence we could
have equivalently defined $T_\omega^*\calV/\calD_0 = (\Im \Div_\omega)^\perp$
in an approach analogous to that of Section \ref{sec:conformalTS}.

It will be helpful to have notation for the composite pushforward
$T_g\calM\ra T_{\omega_g}\calV/\calD_0$. If $g_{ab}\in\calM$
and $\beta\in C^\infty(M)$ we define
\begin{equation}\label{eq:MtoVD0push}
\{g_{ab};\; \beta\} = \int_M \beta \omega_g.
\end{equation}
From composition we obtain the following
pushforwards and pullbacks
associated with the projection $\calM\ra \calV/\calD_0$.
\begin{lemma}\label{lem:volpushpull}
The pushforward $T_g\calM\ra T_{\omega_g}\calV/\calD_0$ is the map
\begin{equation}\label{eq:volpush}
(g_{ab};\;u_{ab},\beta) \mapsto \{g_{ab};\; \beta\}.
\end{equation}
The pullback $T_{\omega_g}^*\calV/\calD_0\ra T_g\calM^*$ is
\begin{equation}\label{eq:volpull}
c \mapsto (g_{ab};\; 0,c)^*.
\end{equation}
\end{lemma}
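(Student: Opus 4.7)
The plan is to prove both statements by composing the two factorization steps of the projection $\calM \to \calV/\calD_0$, namely $\calM \to \calV \to \calV/\calD_0$, whose pushforwards and pullbacks have already been computed individually.

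For the pushforward in \eqref{eq:volpush}, I would first apply equation \eqref{eq:MtoVpush}, which sends $(g_{ab};\; u_{ab},\beta) \in T_g\calM$ to $\beta\omega_g \in T_{\omega_g}\calV$ (observing that the trace-free part $u_{ab}$ is killed by the pushforward to $\calV$, as expected from a volumetric projection). Then I would apply the pushforward $T_{\omega_g}\calV \to T_{\omega_g}\calV/\calD_0$ from equation \eqref{eq:VtoVD0push}, which sends $\beta\omega_g$ to $\int_M \beta\omega_g$. Comparing with the definition \eqref{eq:MtoVD0push}, the composite is exactly $\{g_{ab};\;\beta\}$, which is the claimed pushforward.

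For the pullback in \eqref{eq:volpull}, I would compose the two pullbacks in the opposite order. The pullback $T^*_{\omega_g}\calV/\calD_0 \to T^*_{\omega_g}\calV$ sends the real number $c \in T^*_{\omega_g}\calV/\calD_0 = \Reals$ to the constant function $c \in T^*_{\omega_g}\calV = C^\infty(M)$, as remarked in the paragraph above the lemma. Applying then the pullback $T^*_{\omega_g}\calV \to T^*_g\calM$ from equation \eqref{eq:MtoVpull} yields $(g_{ab};\; 0, c)^*$, as claimed.

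As a sanity check on duality, I would verify that the pushforward \eqref{eq:volpush} and pullback \eqref{eq:volpull} are genuinely adjoint: using \eqref{eq:Mduality} on one side and the $\Reals$-action $c \cdot r = cr$ on the other,
\begin{equation*}
\ip<(g_{ab};\;0,c)^*,\;(g_{ab};\;u_{ab},\beta)> = \int_M c\,\beta\;\omega_g = c\int_M \beta\,\omega_g = \ip<c,\;\{g_{ab};\;\beta\}>,
\end{equation*}
which confirms consistency with the general principle from \eqref{eq:MD0duality}. There is no real obstacle here: the content of the lemma is essentially definitional chasing through the two-step factorization, made simple by the fact that $T_\omega \calV/\calD_0 \cong \Reals$ via $\dot\Vol$ established in Lemma \ref{lem:voldot}.
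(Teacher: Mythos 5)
Your proof is correct and follows essentially the same route as the paper's: both arguments simply compose the two factorization steps $\calM\ra\calV\ra\calV/\calD_0$ using the previously established formulas \eqref{eq:MtoVpush}, \eqref{eq:VtoVD0push}, \eqref{eq:MtoVD0push} for the pushforward and the formula $c\mapsto c$ together with \eqref{eq:MtoVpull} for the pullback. Your added adjointness check is a harmless bonus not present in the paper.
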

\begin{proof}
Equation \eqref{eq:volpush} is a consequence of equations
\eqref{eq:MtoVpull}, \eqref{eq:MtoVpush} and \eqref{eq:MtoVD0push}.
Equation \eqref{eq:volpull} follows from the formula $c\mapsto c$
for the pullback $T^*_\omega \calV/\calD_0\ra T^*_\omega \calV$
and equation \eqref{eq:MtoVpull}.
\end{proof}

\section{Volumetric Velocity, Momentum, and Kinetic Energy}\label{sec:volumetricLegendre}

Let $\omega$ be a volume form, and let $g_{ab}$
be any metric with $\omega_g=\omega$.
Starting from the diagram \eqref{diag:dl-legendre} and the
pushforward/pullback maps from Lemma \ref{lem:volpushpull} we have the following diagram:
\begin{equation}\label{diag:dl-vol-extended}
\begin{gathered}
\xymatrix{
 &\calK \ar@{<->}[dr] \ar@{<->}[dl]_{(\alpha,X^a)} \\
 T_{g}\;\calM 
\ar@<2pt>@{<-}[rr]
\ar@<-2pt>@{->}[rr]_{i_{\alpha,X^a} }
& & T^*_{g}\; \calM\\
T_{\omega}\calV/\calD_0 \ar@{<-}[u] & & T^*_{\omega}\calV/\calD_0.\ar[u]
}
\end{gathered}
\end{equation}
We wish to construct an isomorphism $j^\expspace\calV_\alpha:T_{\omega}\calV/\calD_0\ra T_{\omega}^*\calV/\calD_0$,
analogous to $j^\expspace\calC_\alpha$, such that for every metric $g_{ab}$ with $\omega_g=\omega$, the diagram 
\begin{equation}\label{diag:dl-legendre-conf-vol}
\begin{gathered}
\xymatrix{
 & \calK \ar@{<->}[dr] \ar@{<->}[dl]_{(\alpha,X^a)} \\
 T_g\calM 
\ar@<2pt>@{<-}[rr]
\ar@<-2pt>@{->}[rr]_{i_{\alpha,X^a} }
& & T^*_g \calM\\
T_\omega \calV/\calD_0 \ar@{<-}[u] 
\ar@<2pt>@{<-}[rr]
\ar@<-2pt>@{->}[rr]_{j^\expspace\calV_\alpha }
& & T^*_\omega \calV/\calD_0
\ar@<0pt>[u]
}
\end{gathered}
\end{equation}
commutes (with the exception that traversal of the bottom loop starting at the middle row
is a projection).

Recalling Lemma \ref{lem:voldot} and our identification of $T_\omega \calV/\calD_0$ and 
$T_\omega^* \calV/\calD_0$ with $\Reals$, we claim that
\begin{equation}\label{eq:jvol}
j^\expspace\calV_\alpha( v ) = - \left(\frac{2\kappa}{\int_M N_{g,\alpha} \omega_g}\right)v
\end{equation}
is the desired isomorphism. Evidently, $j^\expspace\calV_\alpha$ is invertible, and
\begin{equation}\label{eq:jvolinv}
(j^\expspace\calV_\alpha)^{-1}(p) = -\left(\frac{1}{2\kappa} \int_M N_{g,\alpha} \omega_g\right)p.
\end{equation}
So to establish diagram \eqref{diag:dl-legendre-conf-vol} we need only show
that traveling from the lower-right corner to the lower-left corner of diagram
\eqref{diag:dl-vol-extended} is the same map as $(j^\expspace\calV_\alpha)^{-1}$, regardless
of the choice of $g_{ab}$ with $\omega_g=\omega$.
To this end, let $p\in T^*_\omega \calV/\calD_0$. 
From equation \eqref{eq:MtoVpull} its pullback is $(g_{ab};\;0,p)^*\in T_{g}^*\calM$,
and we apply $i_{\alpha,X^a}^{-1}$ from equation \eqref{eq:i_inv} to obtain
$(g_{ab};\; L_g X_{ab}, -pN_{\alpha,g}/(2\kappa) + \div_g X)$.  Finally, applying the pushforward 
from equations \eqref{eq:volpush} and \eqref{eq:MtoVD0push} we arrive at
\begin{equation}
\int_M -(N_{\alpha,g}/(2\kappa)p + \div_g X\; \omega_g = 
-\left(\frac{1}{2\kappa}\int_M N_{g,\alpha}\;\omega_g\right)p = (j^\expspace\calV_\alpha)^{-1}(p)
\end{equation}
as desired.  This establishes diagram \eqref{diag:dl-legendre-conf-vol}, which
evidently commutes except possibly when traversing the lower loop starting at the middle
row.  As in the conformal case, such a traversal is a projection, and 
to describe concisely it we introduce the volumetric equivalent of York splitting.

\begin{lemma}[Volumetric York Splitting]\label{lem:yorkvol}
Let $g_{ab}\in\calM$ and let $N$ be a positive function.

If $\tau\in C^\infty(M)$, there is constant $\tau^*$ and a smooth vector field $V^a$
such that
\begin{equation}\label{eq:yorkvol1}
\tau = \tau^* + \frac{1}{N}\div_g V.
\end{equation}
The constant $\tau^*$ is uniquely given by
\begin{equation}\label{eq:taustar}
\tau^* = \frac{\int_M N \tau\;\omega_g}{\int_M N\;\omega_g}
\end{equation}
and $V^a$ is unique up to addition of a (smooth) divergence-free vector field.

Equivalently, if $\beta \in C^\infty(M)$, there is a unique constant 
\begin{equation}
\tau^* = \frac{\int_M \beta\;\omega_g} {\int_M N\omega_g}
\end{equation}
and a smooth vector field $V^a$, unique up to addition of a (smooth) divergence-free vector field,
such that
\begin{equation}\label{eq:yorkvol2}
\beta = N \tau^* + \div_g V
\end{equation}
\end{lemma}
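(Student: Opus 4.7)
The plan is to reduce the two assertions to the single scalar PDE $\div_g V = f$ on the compact manifold $M$, where $f$ must have zero integral. The two formulations of the lemma are equivalent under the substitution $\beta = N\tau$, so I will prove the first and then derive the second.

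First I would establish the formula for $\tau^*$ by an integration argument. Suppose $\tau = \tau^* + (1/N)\div_g V$ holds for some constant $\tau^*$ and some smooth $V^a$. Multiplying by $N\omega_g$ and integrating over $M$ gives
\begin{equation}
\int_M N\tau\;\omega_g = \tau^*\int_M N\;\omega_g + \int_M \div_g V\;\omega_g,
\end{equation}
and since $\div_g V\;\omega_g = \Div_{\omega_g} V$ is exact in the sense that its integral vanishes on the closed manifold $M$ (as already noted in the proof of Lemma \ref{lem:voldot}), the last term drops and the formula \eqref{eq:taustar} is forced. This simultaneously establishes uniqueness of $\tau^*$.

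Next I would prove existence. Given $\tau$, define $\tau^*$ by \eqref{eq:taustar} and set $f = N(\tau - \tau^*)$. By construction $\int_M f\;\omega_g = 0$, so $f$ is orthogonal to the kernel of the (self-adjoint) operator $\Lap_g$ on $C^\infty(M)$. Hence there exists a smooth function $u$, unique up to an additive constant, with $\Lap_g u = f$. Setting $V^a = \nabla^a u$ gives $\div_g V = f$ and therefore $\tau = \tau^* + (1/N)\div_g V$ as required. For uniqueness of $V^a$, if $\widehat V^a$ is another such vector field then $\div_g(V-\widehat V) = 0$, so $V-\widehat V$ is divergence-free; conversely any divergence-free vector field can be added back without disturbing \eqref{eq:yorkvol1}.

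Finally, to get the equivalent $\beta$-formulation, substitute $\tau = \beta/N$ into \eqref{eq:yorkvol1} and multiply through by $N$. The formula \eqref{eq:taustar} then becomes $\tau^* = (\int_M \beta\;\omega_g)/(\int_M N\;\omega_g)$, and \eqref{eq:yorkvol2} is just $N\tau + \div_g V$ after the substitution. The main (and really only) technical point is the solvability of $\Lap_g u = f$ on the compact connected manifold $M$ for mean-zero $f$, which is a standard elliptic result and is the same fact that underlies Lemma \ref{lem:voldot}.
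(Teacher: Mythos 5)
Your proof is correct and follows essentially the same route as the paper: uniqueness of $\tau^*$ by multiplying by $N\omega_g$ and integrating, and existence of $V^a$ by solving $\Lap_g u = N(\tau-\tau^*)$ and taking $V^a=\nabla^a u$ (the paper simply outsources this last step to Lemma \ref{lem:voldot}, whose proof is exactly your Laplacian argument). No gaps.
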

\begin{proof}
Let $\tau \in C^\infty$ and let $\tau^*$ be given by equation \eqref{eq:taustar}.
So
\begin{equation}
\int_{M} N\tau - N\tau^*\;\omega_g = 0
\end{equation}
and Lemma \ref{lem:voldot} implies there is a smooth vector field $V^a$ such that
\begin{equation}\label{eq:Ntaudiff}
N\tau - N\tau^* = \div_g V.
\end{equation}
This establishes equation \eqref{eq:yorkvol2}.  

The uniqueness of $\tau^*$
follows from multiplying equation \eqref{eq:yorkvol1} by $N\omega_g$ and integrating. Moreover,
we see that we can write $\tau  = \tau^* +(1/N)\div_g \widehat V$ for some other smooth
 vector field $\widehat V^a$
if and only if the difference $V^a-\widehat V^a$ is smooth and divergence free.
Finally, we note that the decomposition \eqref{eq:yorkvol2} is a trivial (but useful) rephrasing of 
equation \eqref{eq:yorkvol1}.
\end{proof}

\begin{definition}\label{def:volyorkproj} Let $g_{ab}$ be a metric and let $\tau\in C^\infty(M)$.
The \define{volumetric York projection} of $\tau$ is
\begin{equation}\label{eq:volyorkproj}
Y_{g,\alpha}(\tau) = \frac{\int_M N_{g,\alpha} \tau\; \omega_g}{\int_M N_{g,\alpha}\;\omega_g}.
\end{equation}
Equivalently, $Y_{g,\alpha}(\tau)$ is the unique constant $\tau^*$ given by Lemma \ref{lem:yorkvol}
such that 
\begin{equation}\label{eq:volyorkprojalt}
\tau = \tau^* + \frac{1}{N_{g,\alpha}} \div_g V
\end{equation}
for some vector field $V^a$.  Note that we use the same notation $Y_{g,\alpha}$ as conformal
York projection, with the difference being that the argument is a function rather than a symmetric $(0,2)$-tensor.
\end{definition}

Using the notation of Definition \ref{def:volyorkproj}, a short computation shows that
the projection obtained by traversing the lower loop of diagram \eqref{diag:dl-legendre-conf-vol}
starting from $T^*_g\calM$ is the map
\begin{equation}
(g_{ab};\; A_{ab},f)^* \mapsto (g_{ab};\; 0, Y_{g,\alpha}(f))^*.
\end{equation}
We can also express $j^\expspace\calV_{\alpha}$ in terms of volumetric York projection.
\begin{lemma}\label{lem:jvol-alt}
Suppose $\{g_{ab};\; \beta\}\in T_{\omega_g}\calV/\calD_0$.
Then
\begin{equation}\label{eq:jvolalt}
j^\expspace\calV_\alpha( \{g_{ab};\; \beta\} ) = -2\kappa\tau^*
\end{equation}
where $\tau^*=Y_{g,\alpha}(\beta/N_{g,\alpha})$, or equivalently where
$\tau^*$ is the unique constant such that
\begin{equation}\label{eq:jvolaltbeta}
\beta = N_{g,\alpha} \tau^* + \div_g V
\end{equation}
for some vector field $V^a$.
\end{lemma}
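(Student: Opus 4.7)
The plan is to chase the definitions through diagram \eqref{diag:dl-legendre-conf-vol} and verify that the formula for $j^\expspace\calV_\alpha$ in \eqref{eq:jvol}, applied to the image of $\{g_{ab};\beta\}$ under the identification of $T_{\omega_g}\calV/\calD_0$ with $\Reals$, matches $-2\kappa\tau^*$ with $\tau^*$ as in Definition \ref{def:volyorkproj}. Essentially everything is already in place; this lemma is a bookkeeping translation between the abstract expression and the volumetric York projection.

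First I would unwind the left-hand side. By the composite pushforward formula \eqref{eq:MtoVD0push}, $\{g_{ab};\beta\}$ is, under the isomorphism $\dot\Vol$ of Lemma \ref{lem:voldot}, the real number $\int_M \beta\,\omega_g$. Substituting this into the explicit formula \eqref{eq:jvol} for $j^\expspace\calV_\alpha$ yields
\begin{equation}
j^\expspace\calV_\alpha(\{g_{ab};\beta\}) \;=\; -\frac{2\kappa}{\int_M N_{g,\alpha}\,\omega_g}\int_M \beta\,\omega_g.
\end{equation}

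Next I would recognize the right-hand side as $-2\kappa\,\tau^*$ for $\tau^* = Y_{g,\alpha}(\beta/N_{g,\alpha})$. Indeed, by the defining formula \eqref{eq:volyorkproj} of the volumetric York projection applied to the function $\beta/N_{g,\alpha}$,
\begin{equation}
Y_{g,\alpha}\!\left(\frac{\beta}{N_{g,\alpha}}\right) \;=\; \frac{\int_M N_{g,\alpha}\cdot(\beta/N_{g,\alpha})\,\omega_g}{\int_M N_{g,\alpha}\,\omega_g} \;=\; \frac{\int_M \beta\,\omega_g}{\int_M N_{g,\alpha}\,\omega_g},
\end{equation}
which gives exactly equation \eqref{eq:jvolalt}.

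Finally I would justify the equivalent characterization in \eqref{eq:jvolaltbeta}. Applying Lemma \ref{lem:yorkvol} in the form \eqref{eq:yorkvol2} to the function $\beta$ (with $N=N_{g,\alpha}$), there is a unique constant $\tau^*$ and a vector field $V^a$ (unique up to a divergence-free addition) such that $\beta = N_{g,\alpha}\tau^* + \div_g V$, and integrating this identity against $\omega_g$ recovers $\tau^* = \int_M \beta\,\omega_g / \int_M N_{g,\alpha}\,\omega_g$, agreeing with $Y_{g,\alpha}(\beta/N_{g,\alpha})$. Since the computation is a direct chain of definitions, I do not anticipate any real obstacle; the only modest point worth attention is keeping the two different roles of $Y_{g,\alpha}$ (acting on tensors versus on functions) notationally straight, as flagged in Definition \ref{def:volyorkproj}.
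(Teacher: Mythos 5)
Your proposal is correct and follows essentially the same route as the paper's proof: unwind $\{g_{ab};\beta\}$ to $\int_M\beta\,\omega_g$ via \eqref{eq:MtoVD0push}, apply the explicit formula \eqref{eq:jvol}, and identify the resulting quotient with $Y_{g,\alpha}(\beta/N_{g,\alpha})$ using volumetric York splitting. The only cosmetic difference is that you invoke the integral formula \eqref{eq:volyorkproj} directly while the paper derives the same identity by integrating \eqref{eq:jvolaltbeta} against $\omega_g$; the content is identical.
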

\begin{proof}
Let $\{g_{ab};\;\beta\}\in T_{\omega_g}\calV/\calD_0$. From equations \eqref{eq:jvol}
and \eqref{eq:MtoVD0push} we find
\begin{equation}\label{eq:jvolalt1}
j^\expspace\calV_\alpha( \{g_{ab};\; \beta\} ) = -\frac{2\kappa}{\int_M N_{g,\alpha}\;\omega_g}\{g_{ab};\; \beta\}
=-2\kappa \frac{\int_M \beta\;\omega_g}{\int_M N_{g,\alpha}\;\omega_g}.
\end{equation}
Now let $\tau^*=Y_{g,\alpha}(\beta/N)$.  Equation 
\eqref{eq:volyorkprojalt} implies 
equation \eqref{eq:jvolaltbeta} and
integrating with respect to $\omega_g$ we find
\begin{equation}\label{eq:jvolalt2}
\tau^* = \frac{\int_M \beta \omega_g}{\int_M N_{g,\alpha} \omega_g}.
\end{equation}
Equation \eqref{eq:jvolalt} now follows from equations \eqref{eq:jvolalt1}
and \eqref{eq:jvolalt2}.
\end{proof}

Given $(g_{ab},K_{ab})\in \calM\times\calK$, the volumetric velocity and momentum
measured with respect to a lapse form $\alpha$ are defined analogously to their conformal counterparts.  For the velocity
we send $K_{ab}$ down the left-hand side of diagram \eqref{diag:dl-legendre-conf-vol}
starting at $\calK$, and we convert the velocity into a momentum by applying
$j^\expspace\calV_\alpha$ in the form of Lemma \ref{lem:jvol-alt}.  
This leads to the following definitions.

\begin{definition}
Let $(g_{ab},K_{ab})\in \calM\times\calK$, and let $\alpha$ be a lapse form.  
Writing $\tau=g^{ab}K_{ab}$, the \define{volumetric velocity}
of $(g_{ab},K_{ab})$, as measured by $\alpha$, is 
\begin{equation}
v^\expspace\calV_\alpha(g_{ab},K_{ab}) = \{g_{ab};\; N_{g,\alpha}\tau\}=\int_M N_{g,\alpha} \tau\;\omega_g.
\end{equation}
The \define{volumetric momentum} of $(g_{ab},K_{ab})$, as measured by $\alpha$, is
\begin{equation}
m^\expspace\calV_\alpha(g_{ab},K_{ab}) = -2\kappa \tau^*
\end{equation}
where $\tau^* = Y_{g,\alpha}(\tau)$.
\end{definition}

Note that the volumetric velocity is the rate of change of slice volume,
as measured with respect to coordinate time.  We also note that if $\tau\equiv\tau_0$
for some constant $\tau_0$, then equation \eqref{eq:volyorkproj} shows
that the volumetric momentum is simply $-2\kappa\tau_0$.

The volumetric kinetic energy is derived in a parallel fashion to  conformal kinetic
energy. Consider the kinetic energy terms of the densitized-lapse ADM Lagrangian:
\begin{equation}
K(g_{ab},u_{ab},\beta; X^a, \alpha) = \int \frac{1}{4}|u-\ck_g X|^2_g -\kappa (\beta-\div_g X)^2\; \alpha.
\end{equation}
The second term on the right-hand side involves the kinetic energy due to expansion.
Define $\tau^*$ by 
\begin{equation}
-2\kappa\tau^* = j^\expspace\calV_\alpha(\{g_{ab};\;\beta\}).
\end{equation}
From Lemma \ref{lem:jvol-alt} we see that we can write
\begin{equation}\label{eq:vKEsplit}
\beta = N_{g,\alpha}\tau^* + \div_g (V+X)
\end{equation}
for some vector field $V^a$.
Then, since $N_{g,\alpha}\alpha=\omega_g$, we find
\begin{equation}
\begin{aligned}\label{eq:volke}
-\kappa\int_M(\beta-\div_g X)^2\; \alpha &= 
-\kappa\int_M(N\tau^*+\div_g V)^2\; \alpha \\
&= -\kappa\int_M N_{g,\alpha}^2(\tau^*)^2+(\div_g V)^2\; \alpha -2\kappa\int_M \tau^* \div_g V\;\omega_g \\
&= -\kappa\int_M N_{g,\alpha}^2(\tau^*)^2+(\div_g V)^2\; \alpha.
\end{aligned}
\end{equation}
The volumetric kinetic energy, as measured by $\alpha$,
is the first term on the final right-hand side of equation \eqref{eq:volke}.

\begin{definition}  Let $\alpha$ be a lapse form.  The \define{volumetric kinetic energy}
of $(g_{ab};\; u_{ab},\beta)\in T_{g}\calM$, as measured by $\alpha$, is 
\begin{equation}
K^\expspace\calV_\alpha(g_{ab}, \beta) = -\kappa\int_M N_{g,\alpha}^2(\tau^*)^2 \;\alpha
\end{equation}
where 
\begin{equation}\label{eq:taustardef}
-2\kappa\tau^* = j^\expspace\calV_\alpha(\{g_{ab};\;\beta\}).
\end{equation}
\end{definition}
From equation \eqref{eq:taustardef} and the definition of $j^\expspace\calV_\alpha$
we see that
\begin{equation}
\tau^* =
\frac{1}{\int_M N_{g,\alpha}\;\omega_g} \{g_{ab};\;\beta\} = 
\frac{1}{\int_M N_{g,\alpha}^2\;\alpha} \{g_{ab};\;\beta\}
\end{equation}
and hence we can also write
\begin{equation}
K^\expspace\calV_\alpha(g_{ab}, \beta) = -\frac{\kappa}{\int_M N_{g,\alpha}^2 \;\alpha}	\left(\{g_{ab};\;\beta\}\right)^2
 = -\frac{\kappa}{\int_M N_{g,\alpha} \;\omega_g}	\left(\{g_{ab};\;\beta\}\right)^2.
\end{equation}
So $K^\expspace\calV_\alpha$ descends to a Lagrangian on $T\;\calV/\calD_0$ (which we also call $K^\expspace\calV_\alpha$), and the associated Legendre transformation of $\{g_{ab};\;\beta\}$ is the linearization
\begin{equation}
(K^\expspace\calV_\alpha)'(\{g_{ab};\; \beta\})=
-\frac{2\kappa}{\int_M N_{g,\alpha} \;\omega_g}	\{g_{ab};\;\beta\} = j^\expspace\calV_\alpha( \{g_{ab};\;\beta\} ).
\end{equation}






\section{Volumetric Momentum and the Conformal Method}\label{sec:volmom}

Consider
Hamiltonian conformal method parameters $(\mathbf g, \bfsigma, \tau, \alpha)$
and suppose $(\ol g_{ab}, \ol K_{ab})$ is a solution of the vacuum 
Einstein constraint equations generated by it.  So
\begin{equation}\label{eq:cmsol}
\begin{aligned}\strut
[\ol g_{ab}] &= \mathbf{g}\\
m^\expspace\calC_\alpha( \ol g_{ab}, \ol K_{ab}) &= \bfsigma \\
\ol g^{ab} \ol K_{ab} &= \tau.
\end{aligned}
\end{equation}
The conformal momentum of the solution, as measured by $\alpha$, is specified directly
via $\bfsigma$.  There is only an indirect
connection, however, between the conformal data and the volumetric momentum measured by $\alpha$.  
Indeed, suppose
$g_{ab}$ is a representative of $\mathbf g$ and let $\sigma_{ab}$ be the representative
of $\bfsigma$ with respect to $g_{ab}$. Equation \eqref{eq:cmsol} is equivalent to
the existence of a conformal factor $\phi$ and a vector field $W^a$ such that
\begin{equation}
\begin{aligned}
\ol g_{ab} &= \phi^{q-2} g_{ab}\\
\ol K_{ab} &= \phi^{-2}\left( \sigma_{ab} + \frac{1}{2N_{\alpha,g}} \ck_g W_{ab}\right) + \frac{\tau}{n}\ol g_{ab}.
\end{aligned}
\end{equation}
and such that $\phi$ and $W^a$ solve the CTS-H equations \eqref{eq:CTS-H}.
The volumetric momentum of $(\ol g_{ab},\ol K_{ab})$ measured by $\alpha$ is
$-2\kappa\tau^*$ where
\begin{equation}\label{eq:taustar2}
\tau^* = \frac{\int_M N_{\ol g, \alpha} \tau\;\omega_{\ol g}}{\int_M N_{\ol g, \alpha} \;\omega_{\ol g}}
=\frac{\int_M \phi^{2q} N_{g, \alpha} \tau\;\omega_{g}}{\int_M \phi^{2q} N_{g, \alpha} \;\omega_{g}}.
\end{equation}
Notice from the right-hand side of equation \eqref{eq:taustar2} that 
the computation of $\tau^*$ from $(g_{ab},\sigma_{ab},\tau,\alpha)$ appears to involve the unknown
conformal factor $\phi$ in an essential way.  Although we need not know $\phi$ exactly
(one can compute $\tau^*$ from $c\phi$ for any positive constant $c$), 
it seems unlikely that one can compute $\tau^*$ without at least determining
at least $c\phi$ and thereby effectively solving the CTS-H equations.
Moreover, if the conformal data generates
more than one solution of the constraints, as happens at least in some cases involving an $L^\infty$ mean curvature that changes sign \cite{Maxwell:2011if}, there is no reason to believe that the volumetric momenta of the two solutions will agree.

Hence the conformal method treats the conformal and volumetric degrees of freedom differently,
with the conformal degrees respecting a kind of diffeomorphism invariance, but not the volumetric
degrees.  This discrepancy seems to negatively impact the
effectiveness of the conformal method as a parameterization in the far-from-CMC setting.
As mentioned in the introduction, the recent study in \cite{Maxwell:2014b} presented a family $\calF$ of 
smooth, non-CMC conformal data sets that generate certain $U^{n-1}$-symmetric slices of flat Kasner spacetimes. 
Given $(\mathbf g, \bfsigma, \tau, \alpha)\in \calF$, it either generates a single $U^{n-1}$-symmetric slice
of a flat Kasner spacetime, or it generates a homothety family of $U^{n-1}$-symmetric slices.  The
homothety families appear precisely when $\tau^*=0$, as computed with respect to one (and consequently any)
of the generated $U^{n-1}$-symmetric solutions of the Einstein constraint equations.
So the quantity $\tau^*$ that
we seem to be unable to control directly from the conformal parameters determines, in the setting
of \cite{Maxwell:2014b}, the multiplicity of solutions generated by the conformal parameters.

From the evidence of the role of $\tau^*$ from \cite{Maxwell:2014b}, along with the naturality
of treating the conformal and volumetric degrees of freedom in the parallel ways discussed
in Sections \ref{sec:conformalLegendre} and \ref{sec:volumetricLegendre}, we 
are therefore lead to consider conformal-like methods where the parameters include
\begin{enumerate}
\item a conformal class $\mathbf g$,
\item a lapse form $\alpha$,
\item either a conformal velocity $\{\mathbf u\}$ or a conformal momentum $\bfsigma$,
with $\bfsigma = j_\alpha^\expspace\calC(\{\mathbf u\})$, and
\item either a volumetric velocity $v\in\Reals$ or a volumetric momentum $-2\kappa\tau^*\in\Reals$
with $-2\kappa\tau^* = j_\alpha^\expspace\calV(v)$.
\end{enumerate}
This list is evidently not comprehensive; the standard conformal method is successful in the near-CMC
case but we have now replaced a function $\tau$ with a scalar $\tau^*$.  In the remainder of the paper
we examine alternatives for augmenting this list with geometrically natural degrees of freedom.


\section{Drifts}\label{sec:drift}

Consider a path of metrics $g_{ab}(t)$ such that the diffeomorphism class of
the conformal class of $g_{ab}(t)$ is constant along the curve, and 
such that the diffeomorphism class of the volume form 
of $g_{ab}(t)$ is also constant along the curve. 
By applying an appropriate path of diffeomorphisms, we could ensure
that either the conformal class or the volume form is constant along the curve,
but in general we cannot ensure both are constant.  For example, suppose we 
apply diffeomorphisms to fix the conformal class.  Since the diffeomorphism class of the
volume form is constant, the volume will also remain constant along the curve,
but we are free to smoothly reallocate the fixed volume. 
So although the conformal geometry and volume
are constant, the conformal class and volume form can move relative to one another.
Since the conformal class is a more rigid object than the volume form
(e.g., the space of conformal Killing fields is finite dimensional, but the space of divergence-free
vector fields is not), we visualize the volume form as drifting
relative to the landmarks provided by the fixed conformal geometry.
With this intuition in mind, we call an infinitesimal
motion in $\calM/\calD_0$ that preserves conformal geometry and volume a drift.

To formalize these ideas, we first observe that the pushforwards from $T_{g}\calM$ to $T_{[g]}\calC/\calD_0$ and 
$T_{\omega_g}\calV/\calD_0$ can be factored through $T_{g}\calM/\calD_0$
to obtain the maps $\pi^\expspace\calC_*$ and $\pi^\expspace\calV_*$ in the diagram
\begin{equation}\label{diag:factor}
\begin{gathered}
\xymatrix{ & T_g \calM \ar[d]\ar[ldd]\ar[rdd]&\\
& T_g \calM/\calD_0\ar[rd]_{\pi^\expspace\calV_*}\ar[ld]^{\pi^\expspace\calC_*} &\\
T_{[g]} \calC/\calD_0 & & T_{\omega_g} \calV/\calD_0.
}
\end{gathered}
\end{equation}
Indeed, we claim that
\begin{equation}\label{eq:picalCstar}
\pi^\expspace\calC_*( \{ g_{ab};\; u_{ab},\beta\}) =  \{ g_{ab};\; u_{ab}\}.
\end{equation}
First, note that the map $\pi^\expspace\calC$ is well defined, for if $X^a$ is a vector field,
$\Lie_g X= (g_{ab};\; \ck_g X_{ab}, \div_g X)$ and
\begin{equation}
\pi^\expspace\calC_*( \{ g_{ab};\; \ck_g X,\div_g X\}) = \{ g_{ab};\; \ck_g X_{ab}\} = 0.
\end{equation}
Moreover, from equations \eqref{eq:MtoMD0push} and \eqref{eq:calCpush} 
we see that equation \eqref{eq:picalCstar} is exactly the statement
that the left-hand triangle of diagram \ref{diag:factor} commutes.
Similar considerations show that 
\begin{equation}\label{eq:picalVstar}
\pi^\expspace\calV_*( \{ g_{ab};\; u_{ab},\beta\} ) = \{ g_{ab};\; \beta\} = \int_M \beta\;\omega_g.
\end{equation}

\begin{definition}
Let $g_{ab}\in\calM$. A \textbf{drift} at $g_{ab}$ is an element $\mathbf U\in T_g \calM/\calD_0$
such that $\pi^\expspace\calC_*(\mathbf U)= 0$ and $\pi^\expspace\calV_*(\mathbf U)= 0$.  We denote the
collection of drifts at $g_{ab}$ by
$\Drift_g$.
\end{definition}

\begin{lemma}\label{lem:drift_id}  Suppose $\mathbf U\in T_g\calM$. Then $\mathbf U\in\Drift_g$ 
if and only 
if there is a vector field $R^a$ such that
\begin{equation}\label{eq:drift-form}
\mathbf U = \{g_{ab}; 0, \div_g R\}.
\end{equation}
Moreover, if $\widehat R^a$ is another vector field, then
\begin{equation}\label{eq:RRhat}
\{g_{ab};\; 0, \div_g R\} = \{g_{ab};\; 0, \div_g \widehat R\}
\end{equation}
if and only if there is a divergence-free vector field $E^a$ and a conformal Killing field $Q^a$
such that
\begin{equation}\label{eq:driftQE}
\widehat R^a = R^a + E^a + Q^a.
\end{equation}
\end{lemma}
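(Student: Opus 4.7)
The plan is to unwind the definitions of $\pi^\expspace\calC_*$ and $\pi^\expspace\calV_*$ together with $T_g\calM/\calD_0 = T_g\calM/\Im\Lie_g$, and reduce the two claims to the basic fact (established in the proof of Lemma \ref{lem:voldot}) that on the compact connected manifold $M$ a function integrates to zero against $\omega_g$ if and only if it is of the form $\div_g R$.

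For the first part, I would pick any representative $(g_{ab};\; u_{ab},\beta)\in T_g\calM$ of $\mathbf U$ and examine the two vanishing conditions separately. Formulas \eqref{eq:picalCstar} and \eqref{eq:picalVstar} say $\pi^\expspace\calC_*(\mathbf U) = \{g_{ab};\; u_{ab}\}$ and $\pi^\expspace\calV_*(\mathbf U) = \int_M \beta\,\omega_g$. The first vanishing condition means $u_{ab} \in \Im \ck_g$, i.e. $u_{ab} = \ck_g X_{ab}$ for some vector field $X^a$, so I can subtract $\Lie_g X = (g_{ab};\; \ck_g X_{ab},\div_g X)$ from the representative without changing $\mathbf U$ to reach a representative of the form $(g_{ab};\; 0, \beta - \div_g X)$. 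The second vanishing condition, together with $\int_M \div_g X\,\omega_g = 0$, gives $\int_M(\beta-\div_g X)\,\omega_g = 0$, so the argument in the proof of Lemma \ref{lem:voldot} produces a vector field $R^a$ with $\beta - \div_g X = \div_g R$. The converse direction is immediate: a representative of the form $(g_{ab};\; 0, \div_g R)$ has trace-free part zero (so $\pi^\expspace\calC_*$ vanishes) and its $\beta$-part integrates to zero (so $\pi^\expspace\calV_*$ vanishes).

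For the equivalence of representatives, I would rewrite \eqref{eq:RRhat} as the statement that $(g_{ab};\; 0, \div_g(R-\widehat R))\in \Im\Lie_g$. Unpacking $\Im\Lie_g$ in trace/trace-free notation, this is equivalent to the existence of a vector field $Y^a$ with $\ck_g Y_{ab} = 0$ and $\div_g Y = \div_g(R-\widehat R)$. The first condition says $Y^a$ is a conformal Killing field, and the second says $R - \widehat R - Y$ is divergence-free. Setting $Q^a = -Y^a$ and $E^a = \widehat R^a - R^a - Q^a$ (equivalently $-(R-\widehat R-Y)$) gives the decomposition \eqref{eq:driftQE} with $Q^a$ a conformal Killing field and $E^a$ divergence-free. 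Conversely, given \eqref{eq:driftQE}, one computes $\div_g(\widehat R - R) = \div_g Q$ and observes $\Lie_g Q = (g_{ab};\; 0, \div_g Q)$ precisely because $\ck_g Q = 0$, so the difference of representatives lies in $\Im \Lie_g$ and \eqref{eq:RRhat} holds.

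No step is really a serious obstacle: the only nontrivial analytic input is solvability of $\Delta_g u = f$ for zero-mean $f$ on compact connected $M$, which was already handled in Lemma \ref{lem:voldot}. The one place to be a little careful is the bookkeeping in the uniqueness half, where one must keep track of signs and recognize that both the conformal Killing field and the divergence-free summand arise naturally from the two pieces of the condition $\Lie_g Y = (g_{ab};\; 0, \div_g(R-\widehat R))$.
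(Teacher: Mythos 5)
Your proof is correct and follows essentially the same route as the paper's: reduce both vanishing conditions to $u_{ab}\in\Im\ck_g$ and $\int_M\beta\,\omega_g=0$ via Lemma \ref{lem:voldot}, then absorb the $\ck_g$ part by subtracting a Lie derivative, and in the uniqueness half identify the difference of representatives as an element of $\Im\Lie_g$ whose trace-free part forces a conformal Killing field. The only difference is cosmetic (you subtract $\Lie_g X$ before applying the zero-mean criterion, the paper applies it first and subtracts after), so nothing further is needed.
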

\begin{proof}
Suppose $\mathbf U = \{g_{ab};\; u_{ab},\beta\}\in T_g\calM/\calD_0$.  
From equation \eqref{eq:picalCstar} we see that 
$\pi^\expspace\calC_*(\mathbf U) = 0$ if and only if $u_{ab}\in \Im \ck_g$.  Hence
there is a vector field $W^a$ such that $u_{ab} = \ck_g W_{ab}$.  Similarly,
from equation \eqref{eq:picalVstar} and
Lemma \ref{lem:voldot} we see that
$\pi^\expspace\calV_*(\mathbf U) = 0$ if and only if 
there is a vector field $V^a$ such that $\beta = \div_g V$.
Thus $\mathbf U$ is a drift if and only if there are vector
fields $W^a$ and $V^a$ such that
\begin{equation}
\mathbf U = \{g_{ab};\; \ck_g W_{ab}, \div_g V\}.
\end{equation}
Moreover,
\begin{equation}
\begin{aligned}
\{g_{ab};\; \ck_g W_{ab}, \div_g V\} &= \{g_{ab};\; \ck_g W_{ab}, \div_g V\} - \Lie_g W \\
&=  \{g_{ab};\; \ck_g W_{ab}, \div_g V\} - \{g_{ab};\; \ck_g W_{ab}, \div_g W\} \\
&=  \{g_{ab};\; 0, \div_g (V-W)\}.
\end{aligned}
\end{equation}
Setting $R^a=V^a-W^a$ we see that $\mathbf U$ is a drift if and only if there is a vector field $R^a$
such that equation \eqref{eq:drift-form} holds.

Now suppose $R^a$ and $\widehat R^a$ are vector fields such that
\begin{equation}\label{eq:RRhatsame}
\{g_{ab};\; 0, \div_g R\} =  \{g_{ab};\; 0, \div_g \widehat R\}.
\end{equation}
Hence 
\begin{equation}
(g_{ab};\; 0, \div_g (\widehat R- R) ) \in \Im \Lie_g.
\end{equation}
and there is a vector field $Q^a$ such that 
\begin{equation}\label{eq:driftrans}
(g_{ab};\; 0, \div_g (\widehat R- R) ) = \Lie_g Q = (g_{ab};\; \ck_g Q_{ab}, \div_g Q )
\end{equation}
Equation \eqref{eq:driftrans} implies $\ck_g Q_{ab}=0$ and hence $Q^a$ is a conformal Killing field.  
Defining $E^a=\widehat R^a-R^a -Q^q$, equation \eqref{eq:driftrans} also
implies that $E^a$ is divergence free. Since
\begin{equation}
\widehat R^a = R^a + Q^a + E^a
\end{equation}
we see that if equation \eqref{eq:RRhat} holds then so does equation \eqref{eq:driftQE}.
Conversely, if $R^a$ and $\widehat R^a$ are related via \eqref{eq:driftQE}
we can reverse the previous argument to conclude \eqref{eq:RRhatsame}.	
\end{proof}

Given a metric $g_{ab}$, let $\calQ_g$ be the subgroup of $\calD_0$ that preserves 
the conformal class of $g_{ab}$ 
and let $\calE_g$ be the subgroup that preserves the volume form of $g_{ab}$.
We define $T_e \calQ_g$ to be
the set of conformal Killing fields of $g_{ab}$ and $T_e\calE_g$ to
be the set of $\omega_g$-divergence free vector fields.  Lemma \ref{lem:drift_id}
provides an isomorphism
\begin{equation}
\Drift_g \approx T_e \calD_0 / (T_e \calQ_g \oplus T_e \calE_g).
\end{equation}

We wish to show that motion in $\calM/\calD_0$ can be completely described
in terms of volume expansion, conformal deformation, and drift. 
If $\mathbf U\in T_g\calM/\calD_0$, assigning a a conformal velocity
and volumetric velocity is straightforward: simply apply $\pi^\expspace\calC_*$ and $\pi^\expspace\calV_*$.
Assigning a drift to $\mathbf U$ requires, however, a choice of projection 
\begin{equation}
T_g\calM/\calD_0 \ra \Drift_g
\end{equation}
and we now construct a family of such projections that depend on the choice of a lapse
form $\alpha$. 

Consider the lower loop of diagram \ref{diag:dl-legendre-conf} where
we additionally factor the pushforward $T_g\calM\ra T_{[g]}\calC/\calD_0$
through $T_{g}\calM/\calD_0$:
\begin{equation}\label{diag:fourfiveC}
\begin{gathered}
\xymatrix{
 T_g\calM 
\ar@<2pt>@{<-}[rr]
\ar@<-2pt>@{->}[rr]_{i_{\alpha,X^a} }
\ar[d]
& & T^*_g \calM\\
T_{g}\calM/\calD_0 & & \\
T_{[g]} \calC/\calD_0 \ar@{<-}[u]^{\pi^\expspace\calC_*} 
\ar@<2pt>@{<-}[rr]
\ar@<-2pt>@{->}[rr]_{j_{\alpha}^\expspace\calC }
& & T^*_{[g]} \calC/\calD_0
\ar@<0pt>[uu]
}
\end{gathered}
\end{equation}
Let $\iota^\expspace\calC:T_{[g]}\calC/\calD_0\ra T_{g}\calM/\calD_0$ 
be the map obtained by nearly completing the loop in diagram \eqref{diag:fourfiveC}.
From Lemma \ref{lem:jinv2} and equations \eqref{eq:i_inv} and
\eqref{eq:MtoMD0push} we find
\begin{equation}\label{eq:iotaC}
\iota^\expspace\calC_\alpha(\{g_{ab};\;u_{ab}\} ) = \{g_{ab};\; 2N_{g,\alpha}\sigma_{ab},0\}
\end{equation}
where $\sigma_{ab}$ is the unique $g_{ab}$-TT tensor such that
\begin{equation}
u_{ab} =2N_{g,\alpha}\sigma_{ab} + \ck_g W_{ab}
\end{equation}
for some vector field $W^a$.  The following lemma shows that
$\iota_{\alpha}^\expspace\calC$ selects an $\alpha$-dependent representative
in $T_g\calM/\calD_0$ of each conformal motion in $T_{[g]}\calC/\calD_0$.
\begin{lemma}\label{lem:piCViotaCV}
The map $\iota^\expspace\calC_\alpha$ satisfies
\begin{equation}\label{eq:piiotaC}
\begin{aligned}
\pi^\expspace\calC_*\circ \iota^\expspace\calC_\alpha &= \id \\
\pi^\expspace\calV_*\circ \iota^\expspace\calC_\alpha &= 0.
\end{aligned}
\end{equation}
\end{lemma}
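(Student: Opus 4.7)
The plan is to simply read off both identities from the explicit formula \eqref{eq:iotaC} for $\iota^\calC_\alpha$ together with the formulas \eqref{eq:picalCstar} and \eqref{eq:picalVstar} for $\pi^\calC_*$ and $\pi^\calV_*$. No new analysis is needed; the lemma is essentially bookkeeping about the decomposition of $T_g\calM/\calD_0$.

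First I would fix $\{\mathbf u\}\in T_{[g]}\calC/\calD_0$ and pick a representative $u_{ab}$ with $\{\mathbf u\} = \{g_{ab};\; u_{ab}\}$. By Proposition~\ref{prop:yorksplit} there is a unique $g_{ab}$-TT tensor $\sigma_{ab}$ and a vector field $W^a$ (unique up to a conformal Killing field) such that $u_{ab} = 2N_{g,\alpha}\sigma_{ab} + \ck_g W_{ab}$, and formula \eqref{eq:iotaC} gives
\begin{equation}
\iota^\calC_\alpha(\{\mathbf u\}) = \{g_{ab};\; 2N_{g,\alpha}\sigma_{ab},\; 0\}.
\end{equation}
Applying \eqref{eq:picalCstar} to this yields $\{g_{ab};\; 2N_{g,\alpha}\sigma_{ab}\}$. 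Since $u_{ab} - 2N_{g,\alpha}\sigma_{ab} = \ck_g W_{ab}\in\Im\ck_g$, the equivalence relation defining $T_{[g]}\calC/\calD_0$ gives $\{g_{ab};\; 2N_{g,\alpha}\sigma_{ab}\} = \{g_{ab};\; u_{ab}\} = \{\mathbf u\}$, which establishes $\pi^\calC_*\circ\iota^\calC_\alpha = \id$.

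For the second identity, apply \eqref{eq:picalVstar} to $\{g_{ab};\; 2N_{g,\alpha}\sigma_{ab},\; 0\}$; the trace-part is $0$, so
\begin{equation}
\pi^\calV_*(\iota^\calC_\alpha(\{\mathbf u\})) = \{g_{ab};\; 0\} = \int_M 0\;\omega_g = 0.
\end{equation}
Since $\{\mathbf u\}$ was arbitrary, both identities in \eqref{eq:piiotaC} hold.

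There is no real obstacle: the only subtlety is checking that $\iota^\calC_\alpha$ is well defined on the quotient $T_{[g]}\calC/\calD_0$ (so that the computation does not depend on the choice of $u_{ab}$ representing $\{\mathbf u\}$), but this is already guaranteed by Lemma~\ref{lem:jinv2} and the construction of $\iota^\calC_\alpha$ via diagram \eqref{diag:fourfiveC}. Everything else reduces to matching bracket notations.
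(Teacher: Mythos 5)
Your proof is correct and follows essentially the same route as the paper: the paper handles the first identity by citing the previously established fact that traversing the bottom loop of diagram \eqref{diag:dl-legendre-conf} starting at $T_{[g]}\calC/\calD_0$ is the identity (which rests on exactly the York-splitting computation you spell out), and handles the second identity by reading off the vanishing trace part from \eqref{eq:iotaC} and \eqref{eq:picalVstar}, just as you do. Your version merely unwinds the first step explicitly rather than citing it, which is fine.
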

\begin{proof}
Note that $\pi^\expspace\calC_*\circ \iota^\expspace\calC_\alpha$ is the map obtained
by traversing the bottom loop of diagram \ref{diag:dl-legendre-conf}
starting at $T_{[g]}\calC/\calD_0$.  In Section \ref{sec:conformalLegendre}
we showed that this map is the identity.  On the other hand, from
equations \eqref{eq:iotaC} and \eqref{eq:picalVstar} we see that
$\pi^\expspace\calV_*\circ \iota^\expspace\calC_\alpha = 0$.
\end{proof}

Similarly, from the diagram
\begin{equation}\label{diag:fourfiveV}
\begin{gathered}
\xymatrix{
 T_g\calM 
\ar@<2pt>@{<-}[rr]
\ar@<-2pt>@{->}[rr]_{i_{\alpha,X^a} }
\ar[d]
& & T^*_g \calM\\
T_{g}\calM/\calD_0 & & \\
T_{\omega_g} \calV/\calD_0 \ar@{<-}[u]^{\pi^\expspace\calV_*} 
\ar@<2pt>@{<-}[rr]
\ar@<-2pt>@{->}[rr]_{j_{\alpha}^\expspace\calV }
& & T^*_{\omega_g} \calV/\calD_0
\ar@<0pt>[uu]
}
\end{gathered}
\end{equation}
we obtain a map $\iota^\expspace\calV_\alpha:T_{\omega_g}\calV/\calD_0 \ra T_{g}\calM/\calD_0$
given by
\begin{equation}\label{eq:iotaV}
\iota^\expspace\calV_\alpha(\{g_{ab};\;\beta\} ) = \{g_{ab};\; 0, N_{g,\alpha}\tau^*\}
\end{equation}
where $\tau^*$ is the unique constant given by volumetric York splitting (Lemma \ref{lem:yorkvol}) such that
\begin{equation}
\beta = N_{g,\alpha}\tau^* + \div_g V
\end{equation}
for some vector field $V^a$.  We have an analogue of Lemma \ref{lem:piCViotaCV}
that shows that $\iota_\alpha^\expspace\calV$ selects
an $\alpha$-dependent representative in $T_g\calM/\calD_0$ 
of each volumetric motion in $T_{\omega_g}\calV/\calD_0$; we omit the proof.
\begin{lemma}\label{lem:piViotaV}
The map $\iota^\expspace\calV_\alpha$ satisfies
\begin{equation}\label{eq:piiotaV}
\begin{aligned}
\pi^\expspace\calC_*\circ \iota^\expspace\calV_\alpha &= 0 \\
\pi^\expspace\calV_*\circ \iota^\expspace\calV_\alpha &= \id.
\end{aligned}
\end{equation}
\end{lemma}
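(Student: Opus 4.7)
The plan is to mirror the proof of Lemma \ref{lem:piCViotaCV}, using the explicit formula \eqref{eq:iotaV} for $\iota^\expspace\calV_\alpha$ together with the component-wise descriptions of $\pi^\expspace\calC_*$ and $\pi^\expspace\calV_*$ from equations \eqref{eq:picalCstar} and \eqref{eq:picalVstar}.

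For the first identity, I would observe that $\iota^\expspace\calV_\alpha(\{g_{ab};\;\beta\}) = \{g_{ab};\; 0,\, N_{g,\alpha}\tau^*\}$ has vanishing trace-free part. Applying the formula $\pi^\expspace\calC_*(\{g_{ab};\;u_{ab},\beta'\}) = \{g_{ab};\;u_{ab}\}$ with $u_{ab}=0$ immediately gives $\pi^\expspace\calC_*\circ\iota^\expspace\calV_\alpha = 0$, so no real work is required.

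For the second identity, the cleanest route is exactly the one used for the conformal analogue: $\pi^\expspace\calV_*\circ\iota^\expspace\calV_\alpha$ is the map obtained by traversing the lower loop of diagram \eqref{diag:dl-legendre-conf-vol} starting from $T_{\omega_g}\calV/\calD_0$, and this was already shown in Section \ref{sec:volumetricLegendre} to be the identity. Alternatively, one can verify it by a one-line calculation: from \eqref{eq:picalVstar},
\begin{equation*}
\pi^\expspace\calV_*\bigl(\{g_{ab};\; 0,\, N_{g,\alpha}\tau^*\}\bigr) = \int_M N_{g,\alpha}\tau^*\,\omega_g = \tau^*\int_M N_{g,\alpha}\,\omega_g,
\end{equation*}
while integrating the defining relation $\beta = N_{g,\alpha}\tau^* + \div_g V$ against $\omega_g$ and using that $\int_M \div_g V\,\omega_g = 0$ shows $\{g_{ab};\;\beta\} = \int_M \beta\,\omega_g = \tau^*\int_M N_{g,\alpha}\,\omega_g$, so the two agree.

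There is essentially no obstacle: the content is entirely packaged into the definition \eqref{eq:iotaV} (which itself rests on volumetric York splitting, Lemma \ref{lem:yorkvol}) and the previously established commutativity of the lower loop of \eqref{diag:dl-legendre-conf-vol}. The proof is a couple of lines, which is presumably why the authors choose to omit it.
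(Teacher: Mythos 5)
Your proposal is correct and follows essentially the same route as the paper, which omits the proof precisely because it is the mirror image of the argument given for Lemma \ref{lem:piCViotaCV}: the first identity is immediate from the vanishing trace-free part in formula \eqref{eq:iotaV} together with \eqref{eq:picalCstar}, and the second is the already-established fact that traversing the lower loop of diagram \eqref{diag:dl-legendre-conf-vol} starting at $T_{\omega}\calV/\calD_0$ is the identity. Your supplementary one-line integration check is a harmless (and correct) bonus.
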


Writing
\begin{equation}
\iota^{\rm Drift}: \Drift_g \ra T_g\calM/\calD_0
\end{equation}
for the natural embedding we define
\begin{equation}
\iota_\alpha: T_{[g]}\calC/\calD_0\; \oplus\; T_{\omega_g}\calV/\calD_0\; \oplus\; \Drift_g \ra  T_g \calM/\calD_0
\end{equation}
by
\begin{equation}
\iota_\alpha = \iota_\alpha^\expspace\calC \oplus \iota_\alpha^\expspace\calV \oplus \iota^{\Drift}.
\end{equation}

\begin{proposition}\label{prop:CVD}
Let $g_{ab}$ be a metric and let $\alpha$ be a lapse form.  Then
$\iota_\alpha$ is an isomorphism and the following diagram commutes:
\begin{equation}\label{diag:CVDiso}
\begin{gathered}
\xymatrix{
& T_{[g]}\calV/\calD_0\\
T_{[g]}\calC/\calD_0 \oplus T_{\omega_g}\calV/\calD_0 \oplus \Drift_g
\ar@<2pt>[r]^-{\iota_\alpha}
\ar[rd]\ar[ru]
\ar@<-2pt>@{<-}[r]\ar[rd]
 &
T_{g} \calM/\calD_0
\ar[u]_{\pi^\expspace\calV_\alpha}
\ar[d]^{\pi^\expspace\calC_\alpha}\\
& T_{[g]}\calC/\calD_0.
}
\end{gathered}
\end{equation}
Moreover, if $\mathbf R$ is a drift,
\begin{equation}\label{eq:ialphainvR}
\iota_{\alpha}^{-1}(\mathbf R) = (0,0,\mathbf R).
\end{equation}
\end{proposition}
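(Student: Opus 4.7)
The plan is to reduce everything to the two lemmas \ref{lem:piCViotaCV} and \ref{lem:piViotaV} together with the definition of $\Drift_g$ as the kernel of $(\pi^\expspace\calC_*,\pi^\expspace\calV_*)$. First I would verify commutativity of diagram \eqref{diag:CVDiso}. On the three summands $T_{[g]}\calC/\calD_0$, $T_{\omega_g}\calV/\calD_0$, and $\Drift_g$, the composite $\pi^\expspace\calC_*\circ\iota_\alpha$ is, respectively, the identity, zero, and zero: the first two are exactly the first equalities of Lemmas \ref{lem:piCViotaCV} and \ref{lem:piViotaV}, and the third is the defining property of a drift. Thus $\pi^\expspace\calC_*\circ\iota_\alpha$ agrees with the canonical projection of the direct sum onto its first factor, and an identical argument (using the second equalities of the two lemmas) handles $\pi^\expspace\calV_*\circ\iota_\alpha$.

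Next I would establish that $\iota_\alpha$ is an isomorphism. For injectivity, suppose $\iota_\alpha(\mathbf u,v,\mathbf R)=0$. Applying the commutativity of \eqref{diag:CVDiso} gives $\mathbf u=\pi^\expspace\calC_*(0)=0$ and $v=\pi^\expspace\calV_*(0)=0$, whence $\iota^{\Drift}(\mathbf R)=0$, and since $\iota^{\Drift}$ is the natural embedding, $\mathbf R=0$. For surjectivity I would show explicitly that any $\mathbf U\in T_g\calM/\calD_0$ admits a decomposition. Set
\begin{equation}
\mathbf u=\pi^\expspace\calC_*(\mathbf U),\quad v=\pi^\expspace\calV_*(\mathbf U),\quad \mathbf R=\mathbf U-\iota_\alpha^\expspace\calC(\mathbf u)-\iota_\alpha^\expspace\calV(v).
\end{equation}
Using Lemmas \ref{lem:piCViotaCV} and \ref{lem:piViotaV} one verifies $\pi^\expspace\calC_*(\mathbf R)=\mathbf u-\mathbf u-0=0$ and $\pi^\expspace\calV_*(\mathbf R)=v-0-v=0$, so $\mathbf R\in\Drift_g$, and by construction $\iota_\alpha(\mathbf u,v,\mathbf R)=\mathbf U$.

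Finally, equation \eqref{eq:ialphainvR} is read off from the same construction: if $\mathbf U=\mathbf R$ is itself a drift, then $\mathbf u=\pi^\expspace\calC_*(\mathbf R)=0$ and $v=\pi^\expspace\calV_*(\mathbf R)=0$, and the drift component of the decomposition is $\mathbf R-0-0=\mathbf R$.

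There is no real obstacle here; the only thing to watch is that $\iota_\alpha^\expspace\calC$ and $\iota_\alpha^\expspace\calV$ are genuine sections of $\pi^\expspace\calC_*$ and $\pi^\expspace\calV_*$ respectively, and that each annihilates the \emph{other} projection. Both of these properties are already packaged in \eqref{eq:piiotaC} and \eqref{eq:piiotaV}, so the proof is essentially a bookkeeping exercise with direct sums.
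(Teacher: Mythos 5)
Your proof is correct and follows the same architecture as the paper's: commutativity of the diagram from Lemmas \ref{lem:piCViotaCV} and \ref{lem:piViotaV} together with the fact that $\Drift_g=\Ker\pi^{\calC}_*\cap\Ker\pi^{\calV}_*$, injectivity by reducing to the drift component, and equation \eqref{eq:ialphainvR} from $\iota_\alpha(0,0,\mathbf R)=\mathbf R$. The one place you diverge is surjectivity: the paper explicitly York-splits the given $\{g_{ab};\,u_{ab},\beta\}$ to exhibit the drift component concretely as $\{g_{ab};\,\ck_g W_{ab},\div_g V\}$, whereas you produce it abstractly as $\mathbf U-\iota^{\calC}_\alpha(\mathbf u)-\iota^{\calV}_\alpha(v)$ and verify it lies in $\Drift_g$ using the same two lemmas; this is equivalent (the lemmas are themselves underwritten by York splitting) and arguably cleaner bookkeeping, at the cost of not displaying the explicit representative of the drift that the paper reuses in Proposition \ref{prop:pidrift}.
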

\begin{proof}
Note that $\Drift_g = \Ker \pi^\expspace\calC_*\cap \Ker \pi^\expspace\calV_*$, so $\pi^\expspace\calC_*\circ \iota^{\Drift}=0$
and $\pi^\expspace\calV_*\circ \iota^{\Drift}=0$. Using this fact and Lemmas \ref{lem:piCViotaCV} and 
\ref{lem:piViotaV} we conclude
\begin{equation}
\pi^\expspace\calC_*\circ \iota_\alpha = \pi^\expspace\calC_*\circ\iota^\expspace\calC_\alpha\; +\;
\pi^\expspace\calC_*\circ\iota^\expspace\calC_\alpha \;+\;
\pi^\expspace\calC_*\circ\iota^{\Drift} = \id \;+\; 0 \;+\; 0.
\end{equation}
This establishes the lower triangle of diagram \eqref{diag:CVDiso} up to
showing $\iota_\alpha$ has an inverse. The upper triangle is established similarly,
and we turn our attention to the invertibility of $\iota_\alpha$.

To see that $\iota_\alpha$ is injective, notice that from the facts established thus far
for diagram \eqref{diag:CVDiso}
that anything in the kernel of $\iota_\alpha$ must be of the form $(0,0,\mathbf R)$
for some drift $\mathbf R$.  But $\iota_\alpha(0,0,\mathbf R)=\mathbf R$,
so $\iota_\alpha$ has trivial kernel.

To show that $\iota_\alpha$ is surjective, let $\{g_{ab};\;u_{ab},\beta\}\in T_g\calM/\calD_0$.
Let $\sigma_{ab}$ be the $g_{ab}$-TT tensor such that
\begin{equation}
u_{ab} = 2 N_{g,\alpha} \sigma_{ab} + \ck_g W_{ab}
\end{equation}
for some vector field $W^a$, and let $\tau^*$ be the constant such that
\begin{equation}
\beta = N_{g,\alpha} \tau^* + \div_g V
\end{equation}
for some vector field $V^a$.  Let 
\begin{equation}
\mathbf R = \{g_{ab};\; \ck_g W_{ab}, \div_g V\}
\end{equation}
and observe that $\mathbf R$ is a drift.  Then
\begin{equation}
\begin{aligned}
\iota^\expspace\calC_\alpha(\{g_{ab};\; u_{ab}\}) &= \{g_{ab};\; 2N_{g,\alpha}\sigma_{ab},0\} \\
\iota^\expspace\calV_\alpha(\{g_{ab};\; \beta\}) &= \{g_{ab};\; 0,N_{g,\alpha}\tau^*\}\\
\iota^{\Drift}( \mathbf R)& = \{g_{ab};\; \ck_g W_{ab}, \div_g V\}
\end{aligned}
\end{equation}
so
\begin{equation}
\begin{aligned}
\iota_\alpha(\{g_{ab};\; u_{ab}\}, \{g_{ab};\; \beta\}, \mathbf R) &= 
\{g_{ab};\; 2N_{g,\alpha}\sigma_{ab},0\} + \{g_{ab};\; 0, N_{g,\alpha}\tau^*\}
+\{g_{ab};\; \ck_g W_{ab}, \div_g V\}\\
&= \{g_{ab};\; 2N_{g,\alpha}\sigma_{ab} + \ck_g W_{ab}, N_{g,\alpha}\tau^* + \div_g V\}\\
&= \{g_{ab};\; u_{ab}, \beta\}
\end{aligned}
\end{equation}
as desired.

Finally, we note that equation \eqref{eq:ialphainvR} follows from the invertibility of
$\iota_\alpha$ and the fact that $\iota_\alpha(0,0,\mathbf R)=\mathbf R$ for any drift $\mathbf R$.
\end{proof}

Proposition \ref{prop:CVD} is the formal assertion that motion in $\calM/\calD_0$ is 
characterized by conformal deformation, volume expansion, and drift.  The conformal and volumetric
velocities are unambiguously associated with $\mathbf U\in T_g\calM/\calD_0$ via $\pi^\expspace\calC_*$ and $\pi^\expspace\calV_*$, and Proposition \ref{prop:CVD} provides a lapse-form-dependent 
map from $T_g\calM/\calD_0$ to $\Drift_g$: compute $\iota_\alpha^{-1}$, and
extract the drift component.  Let us call this map $\pi^{\Drift}_\alpha$.

\begin{proposition}\label{prop:pidrift}
The map $\pi^{\Drift}_\alpha: T_g\calM/\calD_0\ra \Drift_g$ is a projection and
\begin{equation}\label{eq:pidrift}
\pi^{\Drift}_\alpha( \{g_{ab};\; u_{ab}, \beta \}) = \{g_{ab};\; \ck_g W_{ab}, \div_g V\}
\end{equation}
where $W^a$ and $V^a$ are any vector fields 
obtained from York splitting
\begin{equation}
\begin{aligned}
u_{ab} &= 2N_{g,\alpha} \sigma_{ab} +\ck_g W \\
\beta  &= 2N_{g,\alpha}\tau^* + \div_g V
\end{aligned}
\end{equation}
for some $g_{ab}$-TT tensor $\sigma_{ab}$ and some constant $\tau^*$.

Moreover,
\begin{equation}\label{eq:iotainv}
\iota_\alpha^{-1} = \pi^\expspace\calC_\alpha\oplus \pi^\expspace\calV_\alpha \oplus \pi^{\Drift}_\alpha.
\end{equation}

\end{proposition}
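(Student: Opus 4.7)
The plan is to read off the drift projection directly from the explicit inverse of $\iota_\alpha$ supplied by Proposition~\ref{prop:CVD}. Given $\mathbf U = \{g_{ab};\;u_{ab},\beta\} \in T_g\calM/\calD_0$, I would first invoke conformal York splitting (Proposition~\ref{prop:yorksplit}) with lapse $N_{g,\alpha}$ to decompose $u_{ab} = 2N_{g,\alpha}\sigma_{ab} + \ck_g W_{ab}$ for some $g_{ab}$-TT tensor $\sigma_{ab}$ and vector field $W^a$, and volumetric York splitting (Lemma~\ref{lem:yorkvol}) to write $\beta = N_{g,\alpha}\tau^* + \div_g V$ for some constant $\tau^*$ and vector field $V^a$. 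The surjectivity computation in the proof of Proposition~\ref{prop:CVD} already shows that the triple $(\{g_{ab};\;u_{ab}\},\,\{g_{ab};\;\beta\},\,\mathbf R)$ with $\mathbf R = \{g_{ab};\;\ck_g W_{ab},\,\div_g V\}$ maps to $\mathbf U$ under $\iota_\alpha$. Since $\iota_\alpha$ is an isomorphism this triple equals $\iota_\alpha^{-1}(\mathbf U)$, and extracting the third slot yields the formula \eqref{eq:pidrift}. Reading the first and second slots — which by Proposition~\ref{prop:CVD} produce $\pi^\expspace\calC_*(\mathbf U)$ and $\pi^\expspace\calV_*(\mathbf U)$ — simultaneously gives the full decomposition \eqref{eq:iotainv}.

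Next I would verify that formula \eqref{eq:pidrift} is well defined, both with respect to the York splittings and with respect to the choice of representative $(u_{ab},\beta)$. The two splittings are unique up to $W^a \mapsto W^a + Q^a$ with $Q^a$ a conformal Killing field and $V^a \mapsto V^a + E^a$ with $E^a$ divergence-free, and neither modification alters $\ck_g W_{ab}$ or $\div_g V$. If $(u_{ab},\beta)$ is replaced by $(u_{ab}+\ck_g X_{ab},\,\beta+\div_g X)$, the splittings can be updated by $W^a \mapsto W^a + X^a$ and $V^a \mapsto V^a + X^a$, so the putative value of the projection shifts by $\{g_{ab};\;\ck_g X_{ab},\,\div_g X\} = \Lie_g X + \Im\Lie_g = 0$, as required.

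To finish, I would check that $\pi^{\Drift}_\alpha$ restricts to the identity on $\Drift_g$, which both confirms the projection property and is compatible with \eqref{eq:ialphainvR}. If $\mathbf U \in \Drift_g$, Lemma~\ref{lem:drift_id} provides $R^a$ with $\mathbf U = \{g_{ab};\;0,\,\div_g R\}$. The conformal York splitting of $u_{ab}=0$ admits $\sigma_{ab}=0$ and $W^a=0$, and because $\int_M \div_g R\;\omega_g = 0$ the volumetric splitting forces $\tau^*=0$ and allows $V^a = R^a$. Substitution into \eqref{eq:pidrift} returns $\mathbf U$, so $\pi^{\Drift}_\alpha$ is a projection onto $\Drift_g$.

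The main obstacle here is really only bookkeeping: ensuring that the two independent ambiguities (York non-uniqueness and $\calD_0$-representative non-uniqueness) are both absorbed by the quotient structure defining $\Drift_g$. Every analytic input — York splitting in both its conformal and volumetric forms, and the invertibility of $\iota_\alpha$ — has already been established, so no new estimate or PDE argument is needed.
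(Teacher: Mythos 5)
Your proposal is correct and follows essentially the same route as the paper: the paper's proof simply cites the surjectivity computation in Proposition \ref{prop:CVD} for formula \eqref{eq:pidrift}, equation \eqref{eq:ialphainvR} for the projection property, and the definition of $\pi^{\Drift}_\alpha$ for \eqref{eq:iotainv}. Your additional well-definedness checks (York ambiguity and choice of $\calD_0$-representative) are correct and are only needed because you read the map off from the explicit formula rather than from the definition via $\iota_\alpha^{-1}$, which makes them automatic.
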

\begin{proof}
That $\iota_\alpha$ is a projection follows from equation \eqref{eq:ialphainvR},
and formula \eqref{eq:pidrift} was established in the body of the proof of 
Proposition \ref{prop:CVD}. Equation \eqref{eq:iotainv} follows from Proposition \ref{prop:CVD}
and the definition of $\pi^{\Drift}_\alpha$.
\end{proof}

\section{Drifts and the Momentum Constraint}\label{sec:drift-mom}

Consider a metric $g_{ab}$ and a lapse form $\alpha$.
From diagram \eqref{diag:dl-legendre} and the pushforwards and pullbacks
described in Section \ref{sec:intro} we obtain the diagram
\begin{equation}\label{diag:dl-legendre-extended-M}
\begin{gathered}
\xymatrix{
 &\calK \ar@{<->}[dr] \ar@{<->}[dl]_{(\alpha,X^a)} \\
 T_{g}\;\calM 
\ar@<2pt>@{<-}[rr]
\ar@<-2pt>@{->}[rr]_{i_{\alpha,X^a} }
& & T^*_{g}\; \calM\\
T_g \calM/\calD_0 \ar@{<-}[u] & & T^*_g \calM/\calD_0.\ar[u]
}
\end{gathered}
\end{equation}
which is the analog of the conformal
and volumetric equivalents \eqref{diag:dl-legendre-extended} and 
\eqref{diag:dl-vol-extended}.  In the conformal and volumetric
cases, the Legendre transformation $i_{\alpha,X^a}$ descended to 
a Legendre transformation after quotienting by diffeomorphisms.
This is not the case for diagram \eqref{diag:dl-legendre-extended-M}, however.
There is certainly a map $k_\alpha:T^*_g \calM/\calD_0\ra T_g\calM/\calD_0$ obtained by traveling
from the lower-right to the lower-left of diagram \eqref{diag:dl-legendre-extended-M}:
\begin{equation}
\begin{gathered}
\xymatrix{
 &\calK \ar@{<->}[dr] \ar@{<->}[dl]_{(\alpha,X^a)} \\
 T_{g}\;\calM 
\ar@<2pt>@{<-}[rr]
\ar@<-2pt>@{->}[rr]_{i_{\alpha,X^a} }
& & T^*_{g}\; \calM\\
T_g \calM/\calD_0 \ar@{<-}[u] & & T^*_g \calM/\calD_0\ar[u]\ar[ll]^{k_\alpha}.
}
\end{gathered}
\end{equation}
But it turns out that $k_\alpha$ can fail to be an isomorphism, and this gives
some insight about the configuration space for the Einstein equations.
The elements of $T^*_g \calM/\calD_0$ are precisely the solutions of the momentum
constraint, but is is not quite correct to think of these as momenta corresponding
to the velocity of the system in $T_g \calM/\calD_0$, and the notion of drifts seems
to play a key role here.

An arbitrary element of $T_g^*\calM/\calD_0$ can be written as
\begin{equation}
(g_{ab};\; A_{ab},-2\kappa\tau)^*
\end{equation}
for some trace-free $A_{ab}$ and some function $\tau$ that satisfy the momentum constraint
\begin{equation}\label{eq:momTTF}
\nabla^a A_{ab} = \kappa \nabla_b\tau.
\end{equation}
From equations \eqref{eq:i_inv} and \eqref{eq:MtoMD0push} we find
\begin{equation}\label{eq:kalpha}
\begin{aligned}
k_\alpha( (g_{ab};\; A_{ab},-2\kappa\tau)^* ) &= 
\{g_{ab}, 2N_{g,\alpha}A_{ab}+\ck_{g}X_{ab}, N_{g,\alpha}\tau+\div_g X\}\\
&=\{g_{ab}, 2N_{g,\alpha}A_{ab}, N_{g,\alpha}\tau\}.
\end{aligned}
\end{equation}
Applying York splitting we can write
\begin{equation}
\begin{aligned}
A_{ab} &= \sigma_{ab} + \frac{1}{2N_{g,\alpha}} \ck_g W_{ab}\\
\tau &= \tau^* + \frac{1}{N_{g,\alpha}}\div_g V
\end{aligned}
\end{equation}
for a $g_{ab}$-TT tensor $\sigma_{ab}$, a constant $\tau^*$, and vector fields $W^a$ and $V^a$.
Writing $\mathbf U$ for $k_\alpha( (g_{ab};\;A_{ab}, -2\kappa\tau)^* )$ equation
\eqref{eq:kalpha} becomes
\begin{equation}
\mathbf U = 
\{g_{ab}, 2N_{g,\alpha}\sigma_{ab},0\} +
\{g_{ab}, 0,N_{g,\alpha}\tau^*\} + \{ g_{ab};\; \ck_g W_{ab}, \div_g V\},
\end{equation}
so in the language of Proposition \ref{prop:pidrift}
\begin{equation}\label{eq:momsplit}
\begin{aligned}
\pi^\expspace\calC_\alpha(\mathbf U) &= \{g_{ab};\; 2N_{g,\alpha}\sigma_{ab}\}\\
\pi^\expspace\calV_\alpha(\mathbf U) &= \{g_{ab};\; N_{g,\alpha}\tau^*\}\\
\pi^{\Drift}_\alpha(\mathbf U) &= \{g_{ab};\; \ck_g W_{ab}, \div_g V\} = \{g_{ab};\; 0, \div_g(V-W)\}.
\end{aligned}
\end{equation}
The potential for difficulty lies in the cancellation in 
$\pi^{\Drift}_\alpha(\mathbf U)$: although $W^a$ and $V^a$ might not be zero, it
might be that $\pi^{\Drift}_\alpha(\mathbf U)$ is zero (and hence $k_\alpha$ may have nontrivial kernel). 

The momentum constraint \eqref{eq:momTTF} can be written in terms of the York-projected variables as
\begin{equation}
\div_g\left(\sigma+\frac{1}{2N_{g,\alpha}}\ck_{g} W \right) = \kappa \extd\left( \tau^* 
+\frac{1}{N_{g,\alpha}}\div_g V\right)
\end{equation}
or more simply
\begin{equation}\label{eq:drift}
\begin{aligned}
\div_g\left(\frac{1}{2N}\ck_{g} W \right) &= \kappa \extd\left( \frac{1}{N}\div_g V\right)
\end{aligned}
\end{equation}
where $N=N_{g,\alpha}$.
So $\sigma_{ab}$ and $\tau^*$ are not involved in the momentum constraint, and we have only the relationship
\eqref{eq:drift} between $W^a$ and $V^a$ that, for reasons explained below, we call the \define{drift equation}.  
One might hope that equation \eqref{eq:drift}
prevents cancellation in $\pi^{\Drift}_\alpha(\mathbf U)$, but this is not always the case.

Suppose $M^n$ is the torus $(S^1)^n$
equipped with the flat product metric $\ol g_{ab}$, and let $s$ be
the coordinate of the first factor 
of the torus.
Consider vector fields $W^a=(w(s),0,\ldots,0)$ and $V^a=(v(s),0,\ldots,0)$, and 
suppose $N$ is a lapse that depends only on $s$.  
A brief computation shows that equation \eqref{eq:drift} can be written
\begin{equation}\label{eq:momcircle}
\kappa\left(\frac{1}{2N} 2 w' \right)' = \kappa\left(\frac{1}{N} v'\right)'
\end{equation}
where primes denote differentiation with respect to $s$. Since \eqref{eq:momcircle}
is an equation on the circle, $w(s)$ and $v(s)$ solve equation \eqref{eq:momcircle}
if and only if $w=v+c$ for some constant $c$. Hence $W^a=V^a+K^a$ for some Killing field
$K^a$ and the associated drift
from equation \eqref{eq:momsplit} vanishes identically. Thus, in this case,
$k_\alpha$ has nontrivial kernel and is not an isomorphism.

The thin-sandwich conjecture
\cite{Baierlein:1962tn} states that initial data is characterized by a metric $g_{ab}$ and
the projection of its ADM velocity into $T_g \calM/\calD_0$. It
is not expected to hold in general \cite{Bartnik:1993jl},
and the observation from the preceding paragraph appears to be another facet of
its failure.
Indeed, from \cite{Maxwell:2014b} Proposition 6.2 and the discussion above it follows that
there exist distinct solutions of the constraint equations, that generate distinct spacetimes,
that nevertheless have the same metric and such that for some lapse form $\alpha$ 
\begin{itemize}
\item the the conformal velocities measured by $\alpha$ for both solutions are the same,
\item the the volumetric velocities measured by $\alpha$ for both solutions are the same, and
\item the complementary drifts for both solutions are zero.
\end{itemize}
Hence the projections of the ADM velocities in to $T\calM/\calD_0$
for these distinct solutions of the constraint equations are identical.

Although $k_\alpha$ is not an isomorphism, it turns out that solutions of the momentum constraint
can nevertheless be parameterized in terms of a conformal momentum, a volumetric momentum, and 
a drift.  The key idea is to identify equation \eqref{eq:drift} as representing a relationship between
two drifts, and we start by looking at the role of $V^a$ .

\begin{theorem}\label{thm:drift}
Suppose $g_{ab}$ is a metric, $\alpha$ is a lapse form, and $\mathbf V\in \Drift_g$.
Let $V^a$ be any vector field such that
\begin{equation}
\mathbf V  = \{g_{ab};\; 0, \div_g V \}.
\end{equation}
Then there is a 
conformal Killing field $Q^a$, unique up to addition of a proper Killing field, such that
\begin{equation}\label{eq:drifteq}
\div_g\left( \frac{1}{2N_{g,\alpha}}\ck_g W\right) = \kappa\, \mathbf{d}\left( \frac{1}{N_{g,\alpha}}\div_g(V+Q)\right)
\end{equation}
admits a solution $W^a$.  The solution $W^a$ is unique up to addition of a conformal
Killing field, and the set of solutions does not depend on the choice of $V^a$
representing $\mathbf V$ or on the subsequent
choice of conformal Killing field $Q^a$ such that equation \eqref{eq:drifteq} is solvable.
\end{theorem}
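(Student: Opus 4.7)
The plan is to recognize the operator $L_N W := \div_g\bigl(\tfrac{1}{2N}\ck_g W\bigr)$, with $N = N_{g,\alpha}$, as the governing elliptic operator for equation~\eqref{eq:drifteq} and to attack it by the Fredholm alternative. First I would verify that $L_N$ is elliptic (its principal symbol agrees with that of the standard conformal vector Laplacian up to the positive factor $1/(2N)$) and self-adjoint in $L^2(\omega_g)$, via the integration-by-parts identity $\int \ip<L_N W, X>_g\,\omega_g = -\int \tfrac{1}{4N}\ip<\ck_g W, \ck_g X>_g\,\omega_g$. In particular $\ker L_N$ consists exactly of the conformal Killing fields of $g_{ab}$, which is finite-dimensional, so $L_N W = F$ admits a smooth solution iff $F$ is $L^2$-orthogonal to every conformal Killing field, and the solution is unique modulo conformal Killing fields.

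Next I would translate this orthogonality requirement into a finite-dimensional linear problem on conformal Killing fields. With $F^a = \kappa(\extd f)^a$ and $f = N^{-1}\div_g(V+Q)$, integration by parts gives $\int \ip<F, Q'>\,\omega_g = -\kappa\int f \,\div_g Q'\,\omega_g$, so solvability becomes the condition that $\int N^{-1}\div_g(V+Q)\,\div_g Q'\,\omega_g = 0$ for every conformal Killing field $Q'$. I would then introduce the symmetric bilinear form $B(Q_1,Q_2) := \int N^{-1}\,\div_g Q_1\,\div_g Q_2\,\omega_g$ on the finite-dimensional space of conformal Killing fields. Since $N>0$, the form $B$ is positive semi-definite with kernel exactly the proper Killing fields. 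The required $Q$ must satisfy $B(Q,Q') = -\int N^{-1}(\div_g V)(\div_g Q')\,\omega_g$ for every conformal Killing field $Q'$; the right-hand side, viewed as a linear functional of $Q'$, vanishes on the Killing fields because $\div_g Q' = 0$ there, and so it descends to a functional on the quotient $\mathrm{CKF}(g)/\mathrm{Killing}(g)$, on which $B$ is positive-definite. Finite dimensionality then produces a unique $Q$ modulo Killing fields, as claimed.

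Finally I would verify the claimed independence of the solution set from the choice of $V^a$ and $Q^a$. By Lemma~\ref{lem:drift_id} any other representative of $\mathbf V$ has the form $\widehat V^a = V^a + E^a + Q_1^a$ with $E$ divergence-free and $Q_1$ a conformal Killing field, so $\div_g \widehat V = \div_g V + \div_g Q_1$. The solvability equation for the associated $\widehat Q$ reads $B(\widehat Q, Q') = B(Q, Q') - B(Q_1, Q')$ for every $Q'$, and uniqueness modulo Killing fields therefore forces $\widehat Q \equiv Q - Q_1$ modulo Killing. Consequently $N^{-1}\div_g(\widehat V + \widehat Q)$ and $N^{-1}\div_g(V + Q)$ differ only by the divergence of a divergence-free vector field and a Killing field, and so coincide. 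The same calculation shows that replacing $Q$ by $Q + K$ for a Killing field $K$ leaves the right-hand side of \eqref{eq:drifteq} untouched, so the solution set for $W$ is indeed intrinsic.

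The main obstacle I expect is the linear algebra on the quotient $\mathrm{CKF}(g)/\mathrm{Killing}(g)$: once $B$ is shown to be positive-definite there, the Fredholm step and the independence calculation are essentially bookkeeping. A secondary technical point is confirming that elliptic regularity upgrades the $L^2$-solution of $L_N W = F$ to a smooth one, but this is automatic because $L_N$ is a smooth-coefficient elliptic operator on a smooth compact manifold and $F$ is smooth.
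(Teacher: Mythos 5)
Your proposal is correct and follows essentially the same route as the paper: the Fredholm alternative for the self-adjoint elliptic operator $W\mapsto\div_g\bigl(\tfrac{1}{2N}\ck_g W\bigr)$ reduces solvability to orthogonality against conformal Killing fields, and the existence of the correcting field $Q^a$ is settled by positive-definiteness of the form $\int N^{-1}\div_g Q_1\,\div_g Q_2\,\omega_g$ on the quotient of conformal Killing fields by proper Killing fields. The only cosmetic difference is that the paper obtains $Q^a$ by minimizing the quadratic functional $Q\mapsto\int N^{-1}\div_g(V+Q)^2\,\omega_g$, whereas you solve the equivalent linear system $B(Q,\cdot)=\ell$ directly; the independence argument at the end is likewise the same.
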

\begin{proof}
Let $V^a$ be any representative of $\mathbf V$ and for brevity let $N=N_{g,\alpha}$.
From elliptic theory the equation 
\begin{equation}
\div_g\left( \frac{1}{2N}\ck_g W\right) = \kappa \extd\left( \frac{1}{N}\div_g(V)\right)
\end{equation}
admits a solution $W^a$ if and only if
\begin{equation}\label{eq:ortho}
\int_M \frac{1}{N} \div_g(V)\div_g(Q)\;\omega_g = 0
\end{equation}
for all conformal Killing fields $Q^a$,
in which case the solution $W^a$ is unique up to addition of a conformal Killing field.
Although $V^a$ need not satisfy condition \eqref{eq:ortho}, we claim that there
is a conformal Killing field $\widehat Q^a$ such that 
\begin{equation}\label{eq:ortho2}
\int_M \frac{1}{N} \div_g(V+\widehat Q)\div_g(Q)\;\omega_g = 0
\end{equation}
for all conformal Killing fields $Q^a$, and that 
$Q^a$ is unique up to addition of a proper Killing field.  Since proper
Killing fields are divergence-free, the right-hand side of \eqref{eq:drift}
is independent of the choice of admissible conformal Killing fields, 
as is the set of solutions of $\eqref{eq:drift}$.

Consider the functional
\begin{equation}
F(Q^a) = \int_M \frac1N \div_g(V+Q)^2\;\omega_g 
\end{equation}
on the finite-dimensional space $T_e \calQ_g$ of conformal Killing fields,
and observe that $\widehat Q^a$ is stationary for $F$ if and only if equation \eqref{eq:ortho}
holds. Moreover, since the highest order term of $F$ is quadratic and non-negative definite, the 
stationary points of $F$ correspond with its minimizers.

First suppose that $g_{ab}$ does not admit any (nontrivial) proper Killing fields.
Then every nontrivial conformal Killing field satisfies $\div Q \not\equiv 0$ and 
the quadratic term of $F$ is positive definite.  Hence $F$ admits a unique minimizer.  
On the other hand, if $g_{ab}$ admits a nontrivial space $\calX$ of proper Killing fields,
then $F$ descends to a functional on the quotient $T_e\calQ/\calX$ and its quadratic
order term is again positive definite.  Hence we pick up a minimizer of $F$ over
the conformal Killing fields, and it is unique up
to addition of a proper Killing field.

This establishes the main result up to independence of the solution set
with respect to the choice of representative of $\mathbf V$. 
Let $V^a$ and $\widetilde V^a$ be two representatives, so Lemma \ref{lem:drift_id}
implies that
\begin{equation}
V^a = \widetilde V^a + \widetilde Q^a + \widetilde E^a
\end{equation}
for some conformal Killing field $\widetilde Q^a$ and some divergence-free vector field $\widetilde E^a$.
Let $Q^a$ be a conformal Killing field  and let $W^a$ be a vector field such that
\begin{equation}
\div_g\left( \frac{1}{2N}\ck_g W\right) = \kappa\, \mathbf{d}\left( \frac{1}{N}\div_g( V+ Q)\right).
\end{equation}
We wish to show that there is a conformal Killing field $\ol Q^a$ such that
\begin{equation}\label{eq:drifthat}
\div_g\left( \frac{1}{2N}\ck_g W\right) = \kappa\, \mathbf{d}\left( \frac{1}{N}\div_g( \widetilde V+ \ol Q)\right)
\end{equation}
as well. Since
\begin{equation}
\div_g( V+ Q) = \div_g( \widetilde V+ \widetilde E + \widetilde Q + Q) = 
\div_g( \widetilde V + \widetilde Q + \widehat Q)
\end{equation}
we conclude that equation \eqref{eq:drifthat} holds with $\ol Q^a = \widetilde Q^a + Q^a$.
\end{proof}

Theorem \ref{thm:drift} provides a map $j^{\Drift}_\alpha$ from $\Drift_g$ to $T_g^*\calM/\calD_0$ as
follows. Given a drift $\mathbf V$, let $V^a$ be a representative and let  $Q^a$ and $W^a$ be a conformal Killing
field and vector field respectively solving \eqref{eq:drift}. We define 
\begin{equation}\label{eq:jalphadrift}
j^{\Drift}_\alpha (\mathbf V ) =  \left(g_{ab};\; \frac{1}{2N_{g,\alpha}}\ck_g W_{ab}, -\frac{2\kappa}{N_{g,\alpha}}\div_g(V+Q)\right)^*.
\end{equation}
Note that $j^{\Drift}_\alpha$ is well defined since $\div_g(V+Q)$ and $\ck_g W_{ab}$ are uniquely determined
by $\{g;\; 0, \div_g V\}$ even though $V^a$, $Q^a$ and $W^a$ need not be, and that 
equation \eqref{eq:drift} ensures that $j^{\Drift}_\alpha$ maps into $T^*_g\calM/\calD_0$,
not just $T^*_g\calM$.  The map $j^{\Drift}_\alpha$ is injective for if
\begin{equation}
\left(g_{ab};\; \frac{1}{2N_{g,\alpha}}\ck_g W_{ab}, -\frac{2\kappa}{N_{g,\alpha}}\div_g(V+Q)\right)^* = 0
\end{equation}
then $\div_g(V+Q)=0$ and hence the source drift 
$\mathbf V=\{g_{ab};\; 0,\div_g V\}$ satisfies
\begin{equation}
\{g_{ab};\; 0, \div_g V\} = \{g_{ab};\; \ck_g Q, \div_g(V+Q)\} =  \{g_{ab};\; 0, \div_g (V+Q)\} = 0.
\end{equation}
So $\Im j^{\Drift}_\alpha$ is isomorphic to $\Drift_g$.  The following result
shows that $\Im j^{\Drift}_\alpha$ complements the conformal and volumetric momenta,
which formalizes our earlier claim that
solutions of the momentum constraint can be parameterized by the selection
of a conformal momentum, a volumetric momentum, and a drift.
\begin{proposition}\label{prop:tstar}
$$\label{eq:tstar}
T^*_g\calM/\calD_0 = T^*_{[g]} \calC/\calD_0 \; \oplus\; T^*_{\omega_g} \calV/\calD_0 \;\oplus\; \Im(j^{\Drift}_\alpha).
$$
\end{proposition}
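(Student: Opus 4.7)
The plan is to verify the splitting by combining conformal York splitting (Proposition \ref{prop:yorksplit}), volumetric York splitting (Lemma \ref{lem:yorkvol}), and the observation that the momentum constraint, when written in York-split variables, is exactly the drift equation \eqref{eq:drift} with trivial conformal Killing correction. The injectivity of $j^{\Drift}_\alpha$ has already been recorded immediately before the proposition, and uniqueness clauses in the two York splittings together with Lemma \ref{lem:drift_id} will handle directness.

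First I would address existence. Any $\mathbf F\in T^*_g\calM/\calD_0$ has the form $(g_{ab};\; A_{ab}, -2\kappa\tau)^*$ with $A_{ab}$ trace-free, $\tau$ satisfying the momentum constraint \eqref{eq:momTTF}. Set $N=N_{g,\alpha}$ and apply the two York splittings to write $A_{ab}=\sigma_{ab}+\tfrac{1}{2N}\ck_g W_{ab}$ and $\tau=\tau^*+\tfrac{1}{N}\div_g V$. Substituting into \eqref{eq:momTTF} and using $\nabla^a\sigma_{ab}=0$ and $\nabla_b\tau^*=0$, the constraint collapses to
\begin{equation*}
\div_g\!\left(\tfrac{1}{2N}\ck_g W\right)=\kappa\,\mathbf{d}\!\left(\tfrac{1}{N}\div_g V\right),
\end{equation*}
which is the drift equation \eqref{eq:drift} with $Q^a=0$. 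Therefore $\mathbf V=\{g_{ab};\;0,\div_g V\}$ is a drift, and $W^a$ already witnesses solvability, so by the uniqueness clause of Theorem \ref{thm:drift} we may take $Q^a=0$ in the definition \eqref{eq:jalphadrift} of $j^{\Drift}_\alpha(\mathbf V)$. This yields
\begin{equation*}
\mathbf F=(g_{ab};\;\sigma_{ab},0)^*+(g_{ab};\;0,-2\kappa\tau^*)^*+j^{\Drift}_\alpha(\mathbf V),
\end{equation*}
the first summand being the pullback of $\{g_{ab};\;\sigma_{ab}\}^*\in T^*_{[g]}\calC/\calD_0$ via \eqref{eq:MCpull} and the second the pullback of the constant $-2\kappa\tau^*\in T^*_{\omega_g}\calV/\calD_0$ via \eqref{eq:volpull}.

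For directness, suppose
\begin{equation*}
(g_{ab};\;\sigma_{ab},0)^*+(g_{ab};\;0,-2\kappa\tau^*)^*+j^{\Drift}_\alpha(\mathbf V)=0
\end{equation*}
with $\sigma_{ab}$ a TT-tensor, $\tau^*$ a constant, and $\mathbf V$ a drift represented (per Theorem \ref{thm:drift}) by $V^a$, $Q^a$, $W^a$. Equating trace-free and trace parts gives $\sigma_{ab}+\tfrac{1}{2N}\ck_g W_{ab}=0$ and $\tau^*+\tfrac{1}{N}\div_g(V+Q)=0$. The uniqueness clause of Proposition \ref{prop:yorksplit} applied to $A_{ab}\equiv0$ forces $\sigma_{ab}=0$ and $W^a$ to be a conformal Killing field; the uniqueness clause of Lemma \ref{lem:yorkvol} applied to $\tau\equiv0$ forces $\tau^*=0$ and $\div_g(V+Q)=0$. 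Thus $V^a=-(Q^a)+(V+Q)^a$ is the sum of a conformal Killing field and a divergence-free field, so Lemma \ref{lem:drift_id} gives $\mathbf V=0$. Hence all three summands vanish.

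The only non-routine step is the identification of the drift piece as an element of $\Im j^{\Drift}_\alpha$, which is the main place where the momentum constraint enters: one must notice that the constraint \eqref{eq:momTTF} is exactly what guarantees that the $W^a$ produced by conformal York splitting of $A_{ab}$ already solves the drift equation with $V^a$ from volumetric York splitting of $\tau$, so no further conformal Killing adjustment is needed. Once this identification is in place, the rest is a direct appeal to the uniqueness statements of the two York splittings and Lemma \ref{lem:drift_id}.
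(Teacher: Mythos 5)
Your proposal is correct and follows essentially the same route as the paper: existence via the two York splittings together with the observation that the momentum constraint in York-split variables is precisely the drift equation (so the remainder lies in $\Im j^{\Drift}_\alpha$ with no conformal Killing correction needed), and directness from the uniqueness clauses. The only cosmetic difference is in the directness step, where the paper invokes a short integration-by-parts transversality remark while you appeal directly to the uniqueness clauses of Proposition \ref{prop:yorksplit}, Lemma \ref{lem:yorkvol} and Lemma \ref{lem:drift_id}; both amount to the same computation.
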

\begin{proof}
Suppose $(g_{ab};\; A_{ab}, -2\kappa\tau)^*\in T^*_{g}\calM/\calD_0$.
From York splitting there are vector fields $W^a$ and $V^a$ solving equation \eqref{eq:drift} as well as
a TT-tensor $\sigma_{ab}$ and a constant $\tau^*$ such that
\begin{equation}
\begin{aligned}
A_{ab} &= \sigma_{ab} + \frac{1}{2N_{g,\alpha}} \ck_g W_{ab}\\
\tau &= \tau^* + \frac{1}{N_{g,\alpha}} \div_g V.
\end{aligned}
\end{equation}
So
\begin{equation}
(g_{ab};\; A_{ab}, -2\kappa\tau)^* =
(g_{ab};\; \sigma_{ab},0)^*+(g_{ab};\; 0,\tau^*)^*+
(g_{ab};\; (1/2N_{g,\alpha})\ck_g W_{ab},(1/N_{g,\alpha})\div_g V)^*.
\label{eq:ttdecomp}
\end{equation}
Since $(g_{ab};\;A_{ab},-2\kappa\tau)^*$ solves the momentum constraint \eqref{eq:momTTF},
$W^a$ and $V^a$ solve equation \eqref{eq:drift} and hence
\begin{equation}
(g_{ab};\; (1/2N_{g,\alpha})\ck_g W_{ab},(1/N_{g,\alpha})\div_g V)^* \in \Im j^{\Drift}_\alpha.
\end{equation}
Equation \eqref{eq:ttdecomp} therefore exhibits $(g_{ab};\; A_{ab}, -2\kappa\tau)^*$
as the sum of a conformal momentum, a volumetric momentum, and term in
the image of $j^{\Drift}_\alpha$.

To establish the direct sum decomposition \eqref{eq:tstar} we need only show that
the summands are mutually transverse.  Now if $W^a$ and $V^a$ solve equation \eqref{eq:drift}
and either of $\ck_g W_{ab}$ or $\div_g V$ vanishes, an integration by parts 
exercise shows the other must as well. Hence $\Im j^{\Drift}_\alpha$ is transverse
to $T^*_{[g]}\calC/\calD_0$ and $T^*_{\omega_g}\calV/\calD_0$, which are obviously
transverse to each other.
\end{proof}

We now show that Theorem \ref{thm:drift} can be understood as describing a map $R_\alpha$ 
from drifts to drifts.
Given a drift $\mathbf V$, let $V^a$ be any representative and let $W^a$ be a solution 
of equation \eqref{eq:drifteq}.  We define
\begin{equation}\label{eq:Jalpha}
R_\alpha(\mathbf V) = \{g_{ab};\; \ck_g W_{ab}, 0\} = \{g_{ab};\;0, -\div_g W\} \in \Drift_g
\end{equation}
and note that $R_\alpha$ is well-defined since $W^a$ is uniquely determined up to
adding a conformal Killing field.
Proposition \ref{prop:tstar} shows that solutions of the momentum
constraint are parameterized by a conformal momentum, a volumetric
momentum, and a pair $(\mathbf W, \mathbf V)$ of drifts that are joined at
the hip by $\mathbf W = R_\alpha(\mathbf V)$.  

It turns out that $R_\alpha$ is invertible, and $\mathbf W$
determines $\mathbf V$ as well. The reverse process proceeds as follows: 
let $W^a$ be a vector field such that $\mathbf W = \{g_{ab};\; \ck_g W_{ab},0\}$
and attempt to solve
\begin{equation}\label{eq:driftrev}
\kappa\, \mathbf{d} \left( \frac{1}{N_{g,\alpha}} \div_g V \right) = \div_g\left(\frac{1}{2N_{g,\alpha}}\ck_g W\right)
\end{equation}
for $V^a$.  Now if equation \eqref{eq:driftrev} admits a solution, we can multiply the equation
by an arbitrary divergence free vector field $E^a$ and integrate by parts to find
\begin{equation}\label{eq:compat}
\int_M \frac{1}{2N_{g,\alpha}} \ip<\ck_g W,\ck_g E>\;\omega_g =0,
\end{equation}
which poses a compatibility condition on $W^a$.
We will show that equation \eqref{eq:driftrev} admits a solution $V^a$ if and
only if the compatibility equation is satisfied,
and the solution is unique up to adding a divergence free vector field.
Hence $\div_g V$ and the drift
\begin{equation}
\mathbf V = \{g_{ab};\; 0,\div_g V\}
\end{equation}
is uniquely determined by $\mathbf W$ once the compatibility condition $\eqref{eq:compat}$ is met.  
In general an arbitrary representative $W^a$ of $\mathbf W$ will fail the compatibility condition, 
but we will show that we can adjust $W^a$ by adding a divergence-free vector field to remedy this
deficiency. Note that adding a divergence-free vector field does not change the drift represented by $W^a$.
The following three propositions contain the technical tools needed to carry out this procedure;
we start by showing that
equation \eqref{eq:driftrev} is solvable if the compatibility condition is met.

\begin{proposition}\label{prop:revsolve}
Suppose $g_{ab}$ is a metric, $N$ is a positive function and $\eta_a$ is a 1-form.
The equation
\begin{equation}\label{eq:driftrevform}
\kappa\, \mathbf{d}\left( \frac{1}{2N} \div_g V\right) = \eta
\end{equation}
admits a smooth solution $V^a$ if and only if
\begin{equation}\label{eq:compat2}
\int_M \ip<\eta,E>\;\omega_g =0
\end{equation}
for all $g$-divergence-free vector fields $E^a$, in which case
$V^a$ is determined up to addition of a (smooth) divergence-free vector field.
\end{proposition}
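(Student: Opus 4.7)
The plan is to prove necessity by integration by parts, then establish sufficiency via Hodge theory by reducing to two scalar equations.

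First, I would establish necessity. If $V^a$ is a smooth solution of \eqref{eq:driftrevform} and $E^a$ is any smooth divergence-free vector field, pairing both sides against $E^a$ in $L^2$ and integrating by parts (using $\nabla_a E^a=0$ together with compactness of $M$) gives
\begin{equation}
\int_M \langle\eta,E\rangle\,\omega_g = \kappa\int_M E^a\nabla_a\!\left(\tfrac{1}{2N}\div_g V\right)\omega_g = -\kappa\int_M \tfrac{1}{2N}(\div_g V)(\div_g E)\,\omega_g = 0,
\end{equation}
so \eqref{eq:compat2} is necessary.

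For sufficiency, the key observation is that \eqref{eq:compat2} forces $\eta$ to be exact. Indeed, the map $X^a\mapsto X_a=g_{ab}X^b$ identifies smooth divergence-free vector fields with coclosed smooth 1-forms, and the standard Hodge decomposition on the compact manifold $(M,g_{ab})$ writes any smooth 1-form as $\eta = \mathbf d u + \eta_0$ with $\eta_0$ coclosed (i.e.\ its sharp is divergence-free) and $u$ smooth. Testing \eqref{eq:compat2} against $E^a=\eta_0^{\sharp a}$ and noting that $\int_M \langle \mathbf d u,\eta_0^\sharp\rangle\,\omega_g=-\int_M u\,\div_g(\eta_0^\sharp)\,\omega_g=0$, one obtains $\|\eta_0\|_{L^2}^2=0$, so $\eta=\mathbf d u$.

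Given $\eta=\mathbf d u$, the equation \eqref{eq:driftrevform} becomes $\mathbf d\bigl(\kappa\tfrac{1}{2N}\div_g V - u\bigr)=0$, and since $M$ is connected this forces
\begin{equation}
\div_g V = \frac{2N}{\kappa}(u+c)
\end{equation}
for some constant $c$. The right-hand side integrates to zero against $\omega_g$ if and only if
\begin{equation}
c = -\frac{\int_M Nu\,\omega_g}{\int_M N\,\omega_g},
\end{equation}
which is a well-defined constant since $N>0$. With this choice of $c$, Lemma \ref{lem:voldot} (applied exactly as in its proof, via a Poisson equation $\Lap_g f = (2N/\kappa)(u+c)$ and setting $V^a=\nabla^a f$) produces a smooth $V^a$ solving the required divergence equation, and hence \eqref{eq:driftrevform}.

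Finally, for uniqueness, if $V_1^a$ and $V_2^a$ are two solutions then their difference $V^a$ satisfies $\mathbf d\bigl(\tfrac{1}{2N}\div_g V\bigr)=0$, so $\tfrac{1}{2N}\div_g V$ is a constant; integrating $\div_g V$ against $\omega_g$ shows that constant is zero, so $V^a$ is divergence-free. No step should present serious difficulty once the Hodge decomposition is invoked; the only thing to watch is the determination of $c$, which mirrors the analogous orthogonality bookkeeping carried out in the proof of Theorem \ref{thm:drift}.
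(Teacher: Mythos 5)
Your proof is correct and follows essentially the same route as the paper's: necessity by integration by parts against divergence-free fields, sufficiency by a Helmholtz--Hodge decomposition of $\eta$ whose coclosed part is killed by the compatibility condition, reduction to $\div_g V = \tfrac{2N}{\kappa}(u+c)$ with $c$ fixed by the zero-mean requirement, and solution via a Poisson equation with $V^a$ a gradient. The only cosmetic difference is that you write the constant $c$ explicitly where the paper merely stipulates its defining property.
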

\begin{proof}
Multiplying equation \eqref{eq:compat2} by a divergence free vector field and integrating by parts
shows that equation \eqref{eq:compat2} is necessary for a solution to exist, and we henceforth
assume $\eta_a$ satisfies this condition.  Applying the Helmholtz-Hodge decomposition
we can write
\begin{equation}\label{eq:helm}
\eta_a = \nabla_a f + \mu_a
\end{equation}
where $f$ is a function, $\mu_a$ is divergence-free, and both are smooth.  
Multiplying equation \eqref{eq:helm}
by $\mu^a$, integrating, and using the compatibility condition we find that 
$\mu_a\equiv 0$ and hence
\begin{equation}
\eta_a = \nabla_a f.
\end{equation}

Since $\eta = \mathbf d f$,
to solve equation \eqref{eq:driftrevform} it suffices to find a smooth vector field $V^a$ 
and a constant $c$ such that
\begin{equation}\label{eq:prelap}
\div_g V = \frac{2N}{\kappa} (f+c).
\end{equation}
We pick $c$ so that
\begin{equation}
\int_M \frac{2N}{\kappa} (f+c)\omega_g = 0
\end{equation}
and find a function $u$ so that
\begin{equation}
\Delta_g u = \frac{2N}{\kappa} (f+c).
\end{equation}
Equation \eqref{eq:prelap} is then solved taking $V^a = \nabla^a u$,
and we see that $V^a$ is smooth.  If
we add a smooth divergence-free vector field to $V^a$ we obtain another
solution, and we now show that all smooth solutions are obtained this way.

Suppose that $V^a$ and $\widehat V^a$ are two solutions. It follows that
\begin{equation}
\mathbf d \left( \frac{1}{N} \div_g(V-V^a)\right) = 0
\end{equation}
and hence
\begin{equation}
\div_g(V-\widehat V) = c N
\end{equation}
for some constant $c$.  Integrating over the manifold we conclude $c=0$ and hence
$V^a$ and $\widehat V^a$ differ by a divergence-free vector field.  And if
$V^a$ and $\widehat V^a$ are both smooth, so is the difference.
\end{proof}

Adjusting the right-hand of equation \eqref{eq:driftrev} to meet
the compatibility condition involves adding a suitable divergence-free
vector field $E^a$, and we will see that $E^a$ is the solution
of a certain Stokes-like PDE. Let $\Lie_g$ be the Killing operator
of $g_{ab}$, so $\Lie_g X_{ab} = \nabla_a X_b + \nabla_b X_a$,
let $\Lie^*_g=-2\div_g$ be its adjoint, and let 
\begin{equation}
\stokeslap_{g,N} = \Lie^*_g\;\; 1/(2N)\; \Lie_g.
\end{equation}
Given forcing terms $\eta_a$ and $h$ we form the Stokes equations
\begin{subequations}\label{eq:stokes}
\begin{alignat}{2}
\stokeslap_{g,N} E &= \eta + \extd p \label{eq:stokes1}\\
\div_g E &= h
\end{alignat}
\end{subequations}
where the unknowns are $E^a$ and the pressure $p$.  In practice
we will usually take $h=0$ so that $E^a$ is divergence-free, but it will
aid a regularity bootstrap to consider a non-homogeneous term here.

Each of $\eta_a$ and $h$ must satisfy a compatibility condition in
order for system \eqref{eq:stokes} to be solvable.
Multiplying the first equation of system \eqref{eq:stokes}
by a Killing field $K^a$ we find 
\begin{equation}
\int_M \eta_a K^a \;\omega_g=0,
\end{equation}
and integrating the second equation of system \eqref{eq:stokes}
we have
\begin{equation}
\int_M h\;\omega_g = 0.
\end{equation}
These compatibility conditions are sufficient for there to
exist a solution of the Stokes system.

\begin{proposition}\label{prop:revcompat}
Let $g_{ab}$ be a smooth metric and let $N$ be a smooth positive function.
Let $\eta_a$ be a 1-form in $W^{-1,2}$ that is $L^2$ orthogonal
to the proper Killing fields and let $h$ be a function in $L^2$
that is $L^2$ orthogonal to the constants.  Then there
exists a vector field $E^a\in W^{1,2}$ and a function $p\in L^2$
solving the Stokes system \eqref{eq:stokes} in the sense of
distributions, and the solution is unique up to adding
a Killing field to $E^a$  and a constant to $p$.
\end{proposition}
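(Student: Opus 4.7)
The plan is to reduce the Stokes system \eqref{eq:stokes} to a coercive variational problem on divergence-free fields and then recover the pressure via the de Rham theorem. First I would eliminate the inhomogeneity in the divergence equation: since $h \in L^2$ has zero mean and $M$ is connected, there exists $u \in W^{2,2}(M)$ with $\Lap_g u = h$, and $E_0^a = \nabla^a u \in W^{1,2}$ satisfies $\div_g E_0 = h$. Replacing $\eta$ by $\tilde\eta = \eta - \stokeslap_{g,N} E_0 \in W^{-1,2}$, the compatibility with Killing fields is preserved, because for any Killing field $K^a$ one has $\Lie_g K = 0$, so $\ip<\stokeslap_{g,N} E_0,K> = \int_M \tfrac{1}{2N}\ip<\Lie_g E_0,\Lie_g K>_g\,\omega_g = 0$. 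It therefore suffices to solve the problem with $h = 0$.

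Next I would set up a Lax--Milgram problem on the Hilbert space of divergence-free vector fields in $W^{1,2}(M, TM)$ quotiented by the finite-dimensional subspace of Killing fields, equipped with the bilinear form
$$a(E, F) = \int_M \frac{1}{2N}\ip<\Lie_g E, \Lie_g F>_g\, \omega_g.$$
The classical Korn inequality on the compact Riemannian manifold $(M, g)$ gives $\|E\|_{W^{1,2}}^2 \leq C(\|\Lie_g E\|_{L^2}^2 + \|E\|_{L^2}^2)$, which combined with the finite dimensionality of the Killing fields implies that $a$ is coercive on the quotient. The linear functional $F \mapsto \ip<\tilde\eta, F>$ is continuous on $W^{1,2}$ and, by hypothesis, descends to the quotient. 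Lax--Milgram then produces a unique $E \in W^{1,2}$ (modulo Killing fields), with $\div_g E = 0$, such that $a(E, F) = \ip<\tilde\eta, F>$ for every divergence-free test field $F \in W^{1,2}$.

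The main obstacle is to recover the pressure and establish its $L^2$ regularity. Integrating the first equation of \eqref{eq:stokes} against $F$ by parts, the variational identity above is precisely the statement that the distributional 1-form $\stokeslap_{g,N} E - \tilde\eta \in W^{-1,2}$ annihilates every divergence-free $F \in W^{1,2}$. The de Rham theorem, in the form due to Ne\v cas for distributions on a compact manifold, asserts that such a 1-form must be the exterior derivative of a distribution $p$; and the quantitative Ne\v cas--Bogovski\u i lemma, which produces a bounded right inverse to $\div_g : W^{1,2}(M, TM) \to L^2_0(M)$, dualizes to say that $\extd$ has closed range from $L^2$ into $W^{-1,2}$. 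This forces $p \in L^2$, uniquely determined up to an additive constant.

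Uniqueness of $E$ up to Killing fields follows by testing: for any homogeneous solution ($\eta = 0$, $h = 0$), pairing the first equation with $E$ and integrating by parts gives $\int_M \tfrac{1}{2N}|\Lie_g E|_g^2\,\omega_g = -\int_M p\, \div_g E\,\omega_g = 0$, so $\Lie_g E = 0$, whence $\extd p = 0$ and $p$ is constant.
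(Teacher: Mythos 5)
Your proposal is correct and follows essentially the same route as the paper: reduce to $h=0$ by a gradient lift, apply Lax--Milgram to the form $\int_M \frac{1}{2N}\ip<\Lie_g E,\Lie_g F>_g\,\omega_g$ on divergence-free fields modulo Killing fields using Korn's inequality plus a compactness argument for coercivity, and recover the $L^2$ pressure from the fact that the residual functional annihilates divergence-free test fields (your de Rham/Ne\v{c}as step is just the Helmholtz--Hodge decomposition of $W^{-1,2}$ that the paper invokes). The only place you are terser than the paper is the passage from Korn plus finite-dimensional kernel to coercivity on the quotient, which the paper carries out via an explicit contradiction/weak-compactness argument, but the ingredients you cite are exactly the right ones.
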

\begin{proof}
We can reduce to the case where $h=0$ by solving
\begin{equation}
\Lap f = h
\end{equation}
for a function $f\in W^{2,2}$, which is possible since $h$ is orthogonal to the constants,
and writing $E^a = \widehat E^a + \nabla^a f$ where $\widehat E^a$ is an
unknown divergence-free function.  Since $\nabla f\in W^{1,2}$
we have $\stokeslap_{g,N}\nabla f\in W^{-1,2}$ and we see that $(E^a,p)$
solves the original system if and only if $(\widehat E^a,p)$ solves
the system with $\eta$ replaced with $\eta - \stokeslap_{g,N} \nabla f$
and $h$ replaced with $0$.  Henceforth we assume $h=0$, and
we seek a divergence-free vector field $E^a$ and a pressure $p$
solving \eqref{eq:stokes1}.

First suppose $g_{ab}$ has no nontrivial proper Killing fields, and let
$J^{1,2}$ be subspace of divergence-free $W^{1,2}$ vector fields.  
For $E^a$ and $F^a\in J^{1,2}$, define
\begin{equation}
A(E^a,F^a) = \int_M \frac{1}{2N} \ip<\Lie_g E, \Lie_g F>_g\; \omega_g.
\end{equation}
We claim that there is a constant $c$ such that $A(E^a,E^a)\ge c \int_M |E|_g^2\;\omega_g$
for all $E^a\in J^{1,2}$.  Suppose not.  Then we can find a sequence of divergence-free vector fields
$\{E^a_{(k)}\}_k$ such that $A(E^a_{(k)},E^a_{(k)})\ra 0$ 
and such that each $E^a_{(k)}$ has norm 1 in in $L^2$.
Recall Korn's inequality \cite{Chen:2002cz}, which implies
that there is a constant $C$ such that
\begin{equation}\label{eq:kornmod}
||E||_{W^{1,2}}^2 \le C\left[ \int_M \ip<\Lie_g E, \Lie_g E>_g\; \omega_g + ||E||_{L^2}^2\right].
\end{equation}
Since $N$ is bounded above, a similar inequality holds replacing 
$\int_M \ip<\Lie_g \cdot, \Lie_g \cdot>_g\omega_g$ with $A$ and hence the sequence $\{E^a_{(k)}\}_k$ 
is bounded in $W^{1,2}$.  So a subsequence converges weakly
in $W^{1,2}$ and strongly in $L^2$ to a limit $\widetilde E^a$.  The quadratic
form $E\mapsto A(E,E)$ is non-negative definite, so it is weakly lower semicontinuous.
Hence the weak limit satisfies $A(\widetilde E,\widetilde E)=0$ and is a Killing field.
Since $||\widetilde E||_{L^2} = 1$ as well, $g_{ab}$ admits a nontrivial Killing field, 
which is a contradiction.

We have now established that $||E||_{L^2}^2$ is controlled by $A(E,E)$, and
it then follows from inequality \eqref{eq:kornmod}
that there is a constant $c$ such that
\begin{equation}
A(E^a,E^a) \ge c||E||_{W^{1,2}}^2
\end{equation}
for all $E\in J^{1,2}$.  So $A$ is coercive over $J^{1,2}$ and the Lax-Milgram
theorem implies there is a unique $E^a\in J^{1,2}$ such that
\begin{equation}\label{eq:stokesweak}
\int_M \frac{1}{2N} \ip<D E, D F>_g\;\omega_g = \int_M \ip<\eta, F>\;\omega_g
\end{equation}
for all $F^a\in J^{1,2}$.  Now
\begin{equation}\label{eq:isap}
F^a\mapsto \int_M \frac{1}{2N} \ip<D E, D F>_g\;\omega_g -\int_M \ip<\eta, F>\;\omega_g
\end{equation}
is a continuous functional on $W^{1,2}$ that vanishes on $J^{1,2}$.  From
the Helmholtz-Hodge decomposition of $W^{-1,2}$ there is a unique weakly divergence free 
$G^a$ in $W^{-1,2}$ and function $p$ in $L^2$, uniquely determined up to a constant,
such that functional \eqref{eq:isap} is equal to
\begin{equation}
G^a + \nabla^a p.
\end{equation}
But since this functional vanishes on $J^{1,2}$ we conclude that $G^a=0$
and hence
\begin{equation}\label{eq:pressure}
\int_M \frac{1}{2N_{g,\alpha}} \ip<D E, D F>_g\;\omega_g -\int_M \ip<\eta, F>\;\omega_g=
\int_M p \div_g F\; \omega_g
\end{equation}
for all $F^a\in W^{1,2}$, which proves existence if $g_{ab}$ admits no nontrivial Killing fields.
Moreover, if $(\widehat E^a, \widehat p)$ is any solution of the Stokes system, we see that $\widehat E^a$
also satisfies \eqref{eq:stokesweak} and therefore equals $E^a$.  But then $\widehat p$ satisfies
the equation \eqref{eq:pressure} for the pressure and is therefore equal to $p$ plus a constant.

If $g_{ab}$ admits nontrivial Killing fields, we replace the space $J^{1,2}$ in the proof above with
the $L^2$ orthogonal complement in $J^{1,2}$ of the Killing fields; call this new space $\widehat J^{1,2}$.  
The proof above
then finds $E^a\in \widehat J^{1,2}$ such that equation \eqref{eq:stokesweak} holds for all $F^a$ in 
$\widehat J^{1,2}$. Since $\eta^a$ is $L^2$ orthogonal to the proper Killing fields, 
equation \eqref{eq:stokesweak} holds
for all $F^a\in J^{1,2}$ and the remainder of the proof continues without change.
\end{proof}
Proposition \ref{prop:revcompat} establishes existence of weak solutions 
of the Stokes system, and we now show that when the forcing terms are smooth,
so are the solutions.
\begin{proposition}\label{prop:stokesreg}
In Proposition \ref{prop:revcompat}, if $(\eta_a,h)\in W^{k-2,2}\times W^{k-1,2}$
for some integer $k\ge 2$, then $(E^a,p)\in W^{k,2}\times W^{k-1,2}$.  In particular,
if $\eta_a$ and $h$ are smooth, so are $E^a$ and $p$.
\end{proposition}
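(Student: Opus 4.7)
The plan is to run a standard elliptic-regularity bootstrap, exploiting the fact that the pair $(\stokeslap_{g,N}, \div_g)$ forms an Agmon--Douglis--Nirenberg elliptic system on $(E^a, p)$. The principal symbol of $\stokeslap_{g,N}$ at a nonzero covector $\xi$ sends $v^a$ to $(1/N)(|\xi|^2 v^a + \xi^a \xi_b v^b)$, and together with the symbol $\xi_a v^a$ of the divergence constraint this gives an invertible principal map on $T_xM \oplus \Reals$, confirming ellipticity of the coupled system.

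Since $M$ is closed, I would first localize using a finite atlas and subordinate partition of unity, so that no boundary estimates are required. In each chart I would employ the Nirenberg translation method: form the difference quotient $\Delta_i^h$ of both equations in the system \eqref{eq:stokes} and observe that the translated pair $(\Delta_i^h E, \Delta_i^h p)$ solves a Stokes system of the same form, with forcing
\begin{equation*}
\bigl(\Delta_i^h \eta + [\Delta_i^h, \stokeslap_{g,N}] E,\; \Delta_i^h h + [\Delta_i^h, \div_g] E\bigr).
\end{equation*}
The commutator terms involve smooth derivatives of $g_{ab}$ and $N$ times a single difference quotient of $E^a$, and are controlled in $W^{-1,2}\times L^2$ by $\|E\|_{W^{1,2}}$. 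Feeding this into the coercivity estimate that underlies Proposition \ref{prop:revcompat} (with the Killing fields and constants factored out to eliminate the kernel) yields a bound on $\|\Delta_i^h E\|_{W^{1,2}} + \|\Delta_i^h p\|_{L^2}$ uniform in $h$, hence $\partial_i E \in W^{1,2}$ and $\partial_i p \in L^2$ for every coordinate direction. This upgrades $(E^a, p)$ to $W^{2,2}\times W^{1,2}$ and establishes the base case $k=2$. The induction is now mechanical: assuming $(E, p) \in W^{k,2}\times W^{k-1,2}$ and $(\eta, h) \in W^{k-1,2}\times W^{k,2}$, apply another difference quotient; smoothness of the coefficients preserves the principal part, the commutators remain controlled by the prior regularity, and the same argument gives $(E, p) \in W^{k+1,2}\times W^{k,2}$. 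Sobolev embedding then delivers the smoothness claim when $(\eta, h)$ is smooth.

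The main obstacle is inheritance of regularity by the pressure: scalar elliptic theory for $\stokeslap_{g,N}$ alone would recover only $E^a$, since the first equation of \eqref{eq:stokes} carries $\extd p$ as an unknown source. What rescues the argument is precisely the coupled ellipticity of the Stokes system, which forces the pressure to gain regularity in step with the velocity; equivalently, one may extract $\extd p = \stokeslap_{g,N} E - \eta$ and invoke the Helmholtz--Hodge decomposition already used in the proof of Proposition \ref{prop:revcompat} to recover $p$ up to a constant. A minor but necessary technical point is ensuring the kernel (Killing fields for $E^a$ and constants for $p$) never contaminates the uniform difference-quotient estimate; this is handled by restricting to the relevant $L^2$ orthogonal complements at each stage, exactly as in Proposition \ref{prop:revcompat}.
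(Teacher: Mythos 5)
Your proof is correct in outline, but it follows a genuinely different route from the paper's. You treat $(\stokeslap_{g,N},\div_g)$ as a coupled Agmon--Douglis--Nirenberg elliptic system and run a localized Nirenberg difference-quotient bootstrap, recycling the coercivity and Helmholtz--Hodge estimates from the existence proof to get a uniform a priori bound on $(\Delta_i^h E,\Delta_i^h p)$; the regularity of velocity and pressure then advances in lockstep. The paper instead decouples the system: applying $\div_g$ to the first Stokes equation and using that $\div_g E=h$ is prescribed data (so the would-be third-order term $\div_g\stokeslap_{g,N}E$ collapses to $\stokeslap_{g,N}h$ plus a second-order commutator in $E$), one obtains a scalar Poisson equation for $p$ alone whose right-hand side lies in $W^{-1,2}$; this gives $p\in W^{1,2}$ first, after which the first equation has an $L^2$ right-hand side and scalar elliptic regularity for the strongly elliptic operator $\stokeslap_{g,N}$ yields $E\in W^{2,2}$; higher regularity follows by differentiating the whole system and reapplying the base case. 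The paper's pressure-first trick buys economy: it needs only classical elliptic regularity for two scalar-type operators and no localization, partition of unity, or system-ellipticity framework. Your approach buys self-containedness: it derives the gain of regularity directly from the energy estimate already established in Proposition \ref{prop:revcompat} rather than quoting elliptic regularity for $\stokeslap_{g,N}$ as a black box, and it makes the role of the coupled Stokes symbol explicit. The technical points you flag (commutators with cutoffs and difference quotients, projecting off Killing fields and constants to keep the a priori estimate uniform, and the fact that $\Delta_i^h E$ is only divergence-free up to a controlled error) are all standard and handled correctly by your reductions, so I see no gap.
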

\begin{proof}
Suppose $(\eta_a,h)\in L^2\times W^{1,2}$.  Applying the divergence to
equation \eqref{eq:stokes1} we find that $p$ is a weak solution of
\begin{equation}\label{eq:lapP}
\begin{aligned}
\Lap p &= [\div_g,\stokeslap_{g,N}] E + \stokeslap_{g,N} \div_g E - \div_g \eta \\
&= [\div_g,\stokeslap_{g,N}] E + \stokeslap_{g,N} h - \div_g \eta.
\end{aligned}
\end{equation}
Since $[\div_g,\stokeslap_{g,N}]$ is a second-order operator and since $E^a\in W^{1,2}$,
the first term on the right-hand side of equation \eqref{eq:lapP} belongs to $W^{-1,2}$.
It is easy to see that the remaining terms on the right-hand side of equation \eqref{eq:lapP}
also belong to $W^{-1,2}$ as well and hence $p\in W^{1,2}$.  But then the right-hand side of
equation \eqref{eq:stokes1} is in $L^2$ and since $\stokeslap_{g,N}$ is elliptic, we conclude that
$E^a\in W^{2,2}$.

To obtain higher regularity, we proceed by a bootstrap.  For example, suppose $\eta_a\in W^{1,2}$
and $h\in W^{2,2}$.  Let $\partial$ be a first order operator.  Then $\widehat E^a=\partial E^a$ and 
$\widehat p=\partial p$ belong to $W^{1,2}$ and $L^2$ respectively and satisfy
\begin{equation}\label{eq:stokeshat}
\begin{aligned}
\stokeslap_{g,N} \widehat E &= \partial \eta +[\stokeslap_{g,N},\partial]\, E -[\extd,\partial]\, p+ \extd \widehat p \\
\div_g \widehat E &= [\div_g, \partial]\, E + \partial h
\end{aligned}
\end{equation}
where $[\cdot,\cdot]$ is the commutator.
Since $(E^a,p)\in W^{2,2}\times W^{1,2}$, and since $(\eta,h)\in W^{1,2}\times W^{2,2}$,
we see that the right-hand sides of equations \eqref{eq:stokeshat} belong
to $L^2$ and $W^{1,2}$ respectively.  Hence by our previous result, 
$(\partial E^a,\partial p) \in W^{2,2}\times W^{1,2}$.  
So $(E^a,p)\in W^{3,2}\times W^{2,2}$,
and the remainder of the bootstrap continues similarly.
\end{proof}

From Propositions \ref{prop:revsolve} and \ref{prop:revcompat} we obtain the following
analogue of Theorem \ref{thm:drift}.
\begin{theorem}\label{thm:driftrev}
Suppose $g_{ab}$ is a metric, $\alpha$ is a lapse form, and $\mathbf W\in \Drift_g$.
Let $W^a$ be any vector field such that
\begin{equation}
\mathbf W  = \{g_{ab};\; \ck_g W_{ab}, 0 \} = \{g_{ab};\; 0, -\div_g W\}.
\end{equation}
Then there is a divergence-free vector field $E^a$, 
unique up to addition of a proper Killing field, such that
\begin{equation}\label{eq:drifteqrev}
\kappa\, \mathbf{d}\left( \frac{1}{N_{g,\alpha}}\div_g(V)\right) = 
\div_g\left( \frac{1}{2N_{g,\alpha}}\ck_g (W+E)\right).
\end{equation}
admits a solution $V^a$.  The solution $V^a$ is unique up to addition of a divergence-free
vector field, and this space of solutions does not depend on the choice of $W^a$
or on the choice of divergence free vector field $E^a$ such that equation \eqref{eq:drifteqrev} is solvable.
\end{theorem}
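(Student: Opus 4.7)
The plan is to mirror the proof of Theorem \ref{thm:drift}, with the roles of conformal Killing fields and divergence-free fields exchanged. The key structural difference is that the space of divergence-free vector fields is infinite dimensional, so the finite-dimensional quadratic minimization used to adjust $V^a$ in Theorem \ref{thm:drift} must be replaced by an elliptic PDE, which is precisely the Stokes-type system analyzed in Propositions \ref{prop:revcompat} and \ref{prop:stokesreg}.

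First, fix a representative $W^a$ of $\mathbf W$ and write $N = N_{g,\alpha}$. By Proposition \ref{prop:revsolve} applied to the 1-form $\eta = \div_g\bigl(\tfrac{1}{2N}\ck_g(W+E)\bigr)$, equation \eqref{eq:drifteqrev} admits a smooth $V^a$, unique up to a smooth divergence-free field, precisely when
\begin{equation*}
\int_M \ip<\div_g\bigl(\tfrac{1}{2N}\ck_g(W+E)\bigr),F>_g\,\omega_g = 0
\end{equation*}
for every smooth divergence-free $F^a$. Integrating by parts, and using that $\ck_g(W+E)$ is trace-free symmetric (so contraction with $\nabla F$ selects only $\Lie_g F$, which coincides with $\ck_g F$ when $\div_g F = 0$), this reduces to the compatibility
\begin{equation*}
\int_M \tfrac{1}{2N}\ip<\ck_g(W+E),\ck_g F>_g\,\omega_g = 0.
\end{equation*}
So the problem becomes: find a divergence-free $E^a$ so that the bilinear form $(X,Y)\mapsto \int_M \tfrac{1}{2N}\ip<\ck_g X,\ck_g Y>_g\omega_g$ pairs $E+W$ trivially with every divergence-free vector field.

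This is exactly the weak formulation of the Stokes system \eqref{eq:stokes} with forcing $\eta = -\stokeslap_{g,N} W$ and $h = 0$. The compatibility condition of Proposition \ref{prop:revcompat} requires $\int_M \ip<\eta,K>\,\omega_g = 0$ for every proper Killing field $K^a$; integration by parts converts this to $\int_M \tfrac{1}{2N}\ip<\ck_g W,\Lie_g K>_g\omega_g$, which vanishes automatically since $\Lie_g K = 0$. Proposition \ref{prop:revcompat} then supplies $(E^a,p)\in W^{1,2}\times L^2$ with $E^a$ unique modulo proper Killing fields, and Proposition \ref{prop:stokesreg} promotes the regularity to smoothness since $W$ and hence $\eta$ is smooth. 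Feeding this $E^a$ back into Proposition \ref{prop:revsolve} produces the desired $V^a$.

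Finally, I verify the independence claims. If $\widehat W^a$ is another representative of $\mathbf W$, Lemma \ref{lem:drift_id} gives $\widehat W = W + \widetilde E + \widetilde Q$ with $\widetilde E$ divergence-free and $\widetilde Q$ conformal Killing. Since $\ck_g \widetilde Q = 0$ and the divergence-free summand can be absorbed into the choice of adjustment $E^a$, both the Stokes compatibility and the solution set for $V^a$ are unchanged. Similarly, the residual freedom in $E^a$ (addition of a proper Killing field, which is divergence-free and satisfies $\ck_g = 0$) does not alter $\ck_g(W+E)$. The principal obstacle is that the adjustment now lives in an infinite-dimensional space, so an elliptic Stokes existence theory is unavoidable; the algebraic observation that makes the whole scheme work is that the Killing-field compatibility is automatic, which is exactly where the asymmetry with Theorem \ref{thm:drift} pays off.
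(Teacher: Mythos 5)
Your proposal is correct and follows essentially the same route as the paper: reduce via Proposition \ref{prop:revsolve} to the orthogonality condition against divergence-free fields, recognize the required adjustment $E^a$ as a weak solution of the Stokes system handled by Propositions \ref{prop:revcompat} and \ref{prop:stokesreg}, and settle the independence claims via Lemma \ref{lem:drift_id}. Your explicit check that the forcing is automatically $L^2$-orthogonal to proper Killing fields (since $\Lie_g K=0$) is a detail the paper leaves implicit, but it is the same argument.
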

\begin{proof}
From Proposition \ref{prop:revsolve} we know that equation \eqref{eq:drifteqrev} admits a solution 
so long as
\begin{equation}
\int_M \ip< \div_g\left( \frac{1}{2N}\ck_g (W+E)\right), F>_g\;\omega_g=0
\end{equation}
for all divergence-free vector fields $F^a$.  Thinking of $W^a$ as fixed and $E^a$
as an unknown vector field we see that $E^a$ satisfies
\begin{equation}
\int_M \ip< \div_g\left( \frac{1}{2N}\ck_g E\right), F>_g\;\omega_g=
-\int_M \ip< \div_g\left( \frac{1}{2N}\ck_g W\right), F>_g\;\omega_g=0
\end{equation}
for all divergence-free vector fields.  Since $E^a$ is divergence-free,
$\ck_g E_{ab} = \Lie_g E$ and $E^a$ is a weak solution of the Stokes equation
\begin{equation}\label{eq:Ecompat}
D^*_g\left(\frac{1}{2N}\Lie_g E\right) = -\div_g\left(\frac{1}{2N} \ck_g W\right) + \nabla p
\end{equation}
for some function $p$.  Proposition \ref{prop:revcompat} shows that there is a solution
of \eqref{eq:Ecompat}, and that it is unique up to addition of a proper Killing field.

Thus we have shown there is a divergence-free vector field $E^a$, unique up to addition of
a proper Killing field, such that equation \eqref{eq:drifteqrev} admits a solution $V^a$, 
and Proposition \ref{prop:revsolve} shows that the solution is unique up to addition
of a divergence-free vector field.  The proof that this space of solutions
is independent of the choice of $W^a$ is analogous to the same step of Theorem \ref{thm:drift}.
\end{proof}

\section{Drift Velocity, Drift Momentum, and Drift Kinetic Energy}\label{sec:drift-vmke}

We saw in Section \ref{sec:drift-mom} that the map 
$k_\alpha: T^*_g\calM/\calD_0 \ra T_g\calM/\calD_0$ described in  
diagram \ref{diag:dl-legendre-extended-M} can fail to be an isomorphism.
In terms of the decompositions
\begin{equation}
T^*_g\calM/\calD_0 = T^*_{[g]} \calC/\calD_0\; \oplus\; T^*_{\omega_g} \calV/\calD_0\; \oplus\; \Im(j^{\Drift}_\alpha)
\end{equation}
and 
\begin{equation}
T_g\calM/\calD_0 \approx
T_{[g]}\calC/\calD_0 \;\oplus\; T_{\omega_g}\calV/\calD_0\; \oplus\; \Drift_g.
\end{equation}
given by Propositions \ref{prop:tstar} and \ref{prop:CVD} respectively, 
a computation shows
\begin{equation}
k_\alpha(\bfsigma, -2\kappa\tau^*, j^{\Drift}_\alpha(\mathbf V)) = ((j_\alpha^\expspace\calC)^{-1}(\bfsigma), (j_\alpha^\expspace\calV)^{-1}(-2\kappa\tau^*), \mathbf V-\mathbf W)
\end{equation}
where $\mathbf W = R_\alpha(\mathbf V)$ and $R_\alpha$ is the map described 
in equation \eqref{eq:Jalpha}.  Since the Legendre transformations
$j_\alpha^\expspace\calC$ and $j_\alpha^\expspace\calV$ are isomorphisms, we see that $k_\alpha$ fails to be an isomorphism
precisely when $\mathbf V \mapsto \mathbf V - R_\alpha(\mathbf V)$ fails
to be an isomorphism.  We address this difficulty 
by treating the pair $(\mathbf W, \mathbf V)$, linked by the equation $\mathbf W=R_\alpha(\mathbf V)$,
as the drift component of motion rather than the
difference $\mathbf V -\mathbf W$.  Since $R_\alpha$ is invertible, we can parameterize 
the pairs $(\mathbf W, \mathbf V)$ in terms of either component, 
and we will refer to $\mathbf W$ as \define{conformal drift} and $\mathbf V$ as \define{volumetric drift}.

Suppose that we parameterize drift pairs in terms of their volumetric component.
To this end, 
given a lapse form $\alpha$ and a shift $X^a$ we define a projection
\begin{equation}\label{eq:piaXD1}
\pi_{\alpha,X^a}^{\Drift} T_g \calM \ra \Drift_g
\end{equation}
as follows.  Given $(g_{ab};\; u_{ab},\beta)\in T_g\calM$ we
apply volumetric York decomposition to write
\begin{equation}\label{eq:piaXD2}
\beta = N_{g,\alpha} \tau^* + \div_g(V+X)
\end{equation}\label{eq:piaXD3}
for a unique constant $\tau^*$ and a vector field $V^a$ that is unique up to
adding a divergence-free vector field.  Then
\begin{equation}
\pi_{\alpha,X^a}^{\Drift}( (g_{ab};\; u_{ab},\beta) ) = \{g_{ab};\; 0, \div_g V\}.
\end{equation}
We claim that the diagram
\begin{equation}\label{diag:dl-legendre-drift}
\begin{gathered}
\xymatrix{
 & \calK \ar@{<->}[dr] \ar@{<->}[dl]_{(\alpha,X^a)} \\
 T_g\calM 
\ar@<2pt>@{<-}[rr]
\ar@<-2pt>@{->}[rr]_{i_{\alpha,X^a} }
& & T^*_g \calM\\
\Drift_g \ar@{<-}[u]^{\pi^{\Drift}_{\alpha,X^a}}
\ar@<2pt>@{<-}[rr]
\ar@<-2pt>@{->}[rr]_{j^{\Drift}_{\alpha} }
& & \Im(j^{\Drift}_{\alpha})
\ar@<0pt>[u]
}
\end{gathered}
\end{equation}
commutes, except that traversing the bottom loop starting from
the middle row is a projection.  It is enough to establish the following.
\begin{proposition}
The map obtained from diagram \eqref{diag:dl-legendre-drift} starting
at $\Drift_g$ and traversing the bottom loop back to $\Drift_g$ is the identity.
\end{proposition}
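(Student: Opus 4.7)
The plan is to start with an arbitrary $\mathbf V\in\Drift_g$ and simply compute what happens as we traverse the loop, using the explicit formulas already in place for each of the four maps. The claim is essentially a bookkeeping check that the conformal Killing field ambiguity built into $j^{\Drift}_\alpha$ and the divergence-free ambiguity built into $\pi^{\Drift}_{\alpha,X^a}$ together collapse back to the starting drift, with the shift $X^a$ playing no role.

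First I would choose a representative $V^a$ of $\mathbf V$, so that $\mathbf V=\{g_{ab};\;0,\div_g V\}$, and then invoke Theorem \ref{thm:drift} to pick a conformal Killing field $Q^a$ and a vector field $W^a$ such that $\widetilde V^a:=V^a+Q^a$ and $W^a$ solve the drift equation \eqref{eq:drifteq}. By \eqref{eq:jalphadrift} we then have
\begin{equation}
j^{\Drift}_\alpha(\mathbf V)=\left(g_{ab};\;\tfrac{1}{2N_{g,\alpha}}\ck_g W_{ab},\;-\tfrac{2\kappa}{N_{g,\alpha}}\div_g\widetilde V\right)^{\!*}.
\end{equation}
Applying the natural embedding $\Im(j^{\Drift}_\alpha)\hookrightarrow T^*_g\calM$ and then $i^{-1}_{\alpha,X^a}$ via \eqref{eq:i_inv}, a direct substitution collapses the factors of $N_{g,\alpha}$ and yields
\begin{equation}
i^{-1}_{\alpha,X^a}\bigl(j^{\Drift}_\alpha(\mathbf V)\bigr)=\bigl(g_{ab};\;\ck_g(W+X)_{ab},\;\div_g(\widetilde V+X)\bigr)\in T_g\calM.
\end{equation}

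Next I apply $\pi^{\Drift}_{\alpha,X^a}$ as defined by \eqref{eq:piaXD2}. The volumetric York decomposition of $\beta:=\div_g(\widetilde V+X)$ determines a constant $\tau^*$ by integration: since $\int_M \div_g(\widetilde V+X)\,\omega_g=0$, we get $\tau^*=0$, and the corresponding vector field is $\widetilde V^a$ up to adding a divergence-free field. Hence
\begin{equation}
\pi^{\Drift}_{\alpha,X^a}\bigl(g_{ab};\;\ck_g(W+X)_{ab},\;\div_g(\widetilde V+X)\bigr)=\{g_{ab};\;0,\div_g\widetilde V\}.
\end{equation}

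Finally, since $\widetilde V^a = V^a + Q^a$ differs from $V^a$ by a conformal Killing field, Lemma \ref{lem:drift_id} (with $E^a=0$) gives $\{g_{ab};\;0,\div_g\widetilde V\}=\{g_{ab};\;0,\div_g V\}=\mathbf V$, which closes the loop. There is no serious obstacle here; the only thing to watch is that the freedom to add a conformal Killing field to $V^a$ in Theorem \ref{thm:drift} is precisely what disappears under the drift equivalence relation, and the freedom to add a divergence-free field in Lemma \ref{lem:yorkvol} is precisely what $\pi^{\Drift}_{\alpha,X^a}$ quotients out, so the composition is forced to be the identity independent of the choices of $X^a$, $V^a$, $Q^a$ and $W^a$.
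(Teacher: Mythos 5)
Your proof is correct and follows essentially the same route as the paper's: push $j^{\Drift}_\alpha(\mathbf V)$ through $i^{-1}_{\alpha,X^a}$ to land at $(g_{ab};\;\ck_g(W+X)_{ab},\div_g(V+Q+X))$, apply $\pi^{\Drift}_{\alpha,X^a}$ to get $\{g_{ab};\;0,\div_g(V+Q)\}$, and observe that the conformal Killing field $Q^a$ is absorbed by the drift equivalence relation. Your version is if anything slightly more careful, since you carry the $-2\kappa$ factor from the definition of $j^{\Drift}_\alpha$ consistently through $i^{-1}_{\alpha,X^a}$ and explicitly verify that the volumetric York constant vanishes.
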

\begin{proof}
Let $\mathbf V =\{g_{ab};\;0,\div_g V\}\in \Drift_g$.  From the definition of $j_\alpha^{\Drift}$
in equation \eqref{eq:jalphadrift} we find
\begin{equation}
j^{\Drift}_\alpha (\mathbf V ) =  
\left(g_{ab};\; \frac{1}{2N_{g,\alpha}}\ck_g W_{ab}, \frac{1}{N_{g,\alpha}}\div_g(V+Q)\right)
\end{equation}
where the vector field $W^a$ and the conformal Killing field 
$Q^a$ are determined by Theorem \ref{thm:drift}. Applying $i^{-1}_{\alpha,X^a}$
from equation \eqref{eq:i_inv} we arrive at
\begin{equation}
(g_{ab};\ck_g(W+X)_{ab}, \div_g(V+Q+X))
\end{equation}
and applying $\pi^{\Drift}_{\alpha,X^a}$ we complete the loop at
\begin{equation}
\{g_{ab};\;0,\div_g(V+Q)\}.
\end{equation}
Since $Q^a$ is a conformal Killing field,
\begin{equation}
\{g_{ab};\;0,\div_g(Q)\}=\{g_{ab};\;\ck_g Q,\div_g(Q)\} = 0
\end{equation}
and hence
\begin{equation}
\{g_{ab};\;0,\div_g(V+Q)\}= \{g_{ab};\;0,\div_g(V)\} = \mathbf V.
\end{equation}
Thus traversing the loop is the identity.
\end{proof}

Drift velocity and momentum are defined following the pattern seen previously
for conformal and volumetric quantities.  Given $(g_{ab},K_{ab})\in \calM\times \calK$
we form diagram \eqref{diag:dl-legendre-drift} and descend the left hand side 
from $\calK$ to $\Drift_g$.  If $K_{ab}$ has mean curvature $\tau$, we apply
volumetric York splitting to write
\begin{equation}\label{eq:tausplit}
\tau =  \tau^* + \frac{1}{N_{g,\alpha}}\div V,
\end{equation}
and equations \eqref{eq:ADMmomTTF}, \eqref{eq:piaXD2} and \eqref{eq:piaXD3}
show that the drift velocity is
\begin{equation}
\{g_{ab};\; 0, \div_g V\}.
\end{equation}
Note that although both maps on the left-hand side of diagram \eqref{diag:dl-legendre-drift}
depend on the shift $X^a$, their composition does not and only depends  on the lapse form $\alpha$.  
Drift momentum is obtained from drift velocity by applying $j_\alpha^{\Drift}$.

\begin{definition} Let $(g_{ab},K_{ab})\in \calM\times \calK$ and let $\alpha$
be a lapse form.  The \define{drift velocity} of $(g_{ab},K_{ab})$, 
as measured by $\alpha$, is 
\begin{equation}
v^{\Drift}_\alpha (g_{ab}, K_{ab}) = \{g_{ab};\; 0, \div_g V\}
\end{equation}
where $V^a$ is obtained by the splitting \eqref{eq:tausplit}
of $\tau=g^{ab}K_{ab}$.  The \define{drift momentum}
of $(g_{ab},K_{ab})$,  as measured by $\alpha$, is 
\begin{equation}
m^{\Drift}_\alpha (g_{ab}, K_{ab}) = \left\{g_{ab};\; \frac{1}{2N_{g,\alpha}}\ck_g W_{ab}, -\frac{2\kappa}{N_{g,\alpha}}\div_g (V+Q) \right\}^*
\end{equation}
where $W^a$ and $Q^a$ are the vector field and conformal Killing field 
obtained from Theorem \ref{thm:drift}.
\end{definition}

Since $\Drift_g\subseteq T_g \calM/\calD_0$, every element of $T_g^* \calM/\calD_0$
defines a functional on $\Drift_g$.  Since $j_\alpha^{\Drift}: \Drift_g \ra  \calM/\calD_0$,
we can therefore consider $j_\alpha^{\Drift}$ as a map into $(\Drift_g)^*$ and it is then
natural to identify a Lagrangian such that $j_\alpha^{\Drift}$ is its Legendre transformation.

\begin{definition}
Given $\mathbf V \in \Drift_g$ the \define{drift kinetic energy} of $\mathbf V$, as measured
by $\alpha$, is
\begin{equation}
\begin{aligned}
K^{\Drift}_\alpha(\mathbf V) = -\int \kappa(\div_g V)^2\; \alpha \\
\end{aligned}
\end{equation}
where $V^a$ is any representative of $\mathbf V$ 
such that
\begin{equation}\label{eq:driftcompat2}
\int \div_g(V)\div_g Q\;\alpha=0
\end{equation}
for all conformal Killing fields $Q^a$; note that
Theorem \ref{thm:drift} ensures that $\div_g V$ (and hence drift kinetic energy)
is uniquely determined by $\mathbf V$.
\end{definition}

To show that $K^{\Drift}_\alpha$ is a Lagrangian, one ought to demonstrate
a configuration space such that $K^{\Drift}_\alpha$ is a function on its tangent bundle.
Clearly each tangent space must be isomorphic to $\Drift_g$, but the right choice of 
base space is not clear. So we content ourselves by restricting our attention 
to each fibre $\Drift_g$ of the presumed total space and show that $j_\alpha^{\Drift}$
is the Legendre transformation of $K^{\Drift}_\alpha$ on that fibre.
Consider a path $\mathbf V(t) = \{g_{ab};\; 0, \div_g V(t)\}$ 
of drifts where $V^a(t)$ is a path of vector fields satisfying 
condition \eqref{eq:driftcompat2}.  Since $V^a(t)$ satisfies the compatibility
condition, there exists a path of vector fields $W^a(t)$ with
\begin{equation}
\div_g\left(\frac{1}{2N_{g,\alpha}}\ck_g W\right) = \kappa \extd\left(\frac1{N_{g,\alpha}}\div_g V\right).
\end{equation}
Then, recalling equations \eqref{eq:MD0duality} and \eqref{eq:Mduality}, we find
\begin{equation}\label{eq:driftleg}
\begin{aligned}
\frac{d}{dt}K^{\Drift}_\alpha(\mathbf V) &= -2\kappa\int_M \div_g V \div_g \dot V\;\alpha\\
&= \int_M \left(\frac{-2\kappa}{N_{g,\alpha}} \div_g V \right) \div_g \dot V\;\omega_g\\
&= \left< 
\left\{g_{ab};\;  \frac{1}{2N_{g,\alpha}}\ck_g W_{ab},
-2\kappa\frac{1}{N_{g,\alpha}} \div_g V \right\}^*,
\{g_{ab};\; 0, \div_g \dot V\} \right>\\
&= \ip< j^{\Drift}_\alpha(\mathbf V), \dot{\mathbf V}>
\end{aligned}
\end{equation}
where the various dependencies on $t$ in equation \eqref{eq:driftleg} have been suppressed.
Thus $K_\alpha^{\Drift}$ is the desired Lagrangian.

The preceding discussion was based on parameterizing pairs $(\mathbf W, \mathbf V)$
with $\mathbf W = R_{\alpha} (\mathbf V)$ in terms of volumetric drift
$\mathbf V$.  If we use conformal drift $\mathbf W$
instead we obtain a dual notion of drift velocity and kinetic energy
which we now summarize briefly.
The drift velocity of $(g_{ab},K_{ab})$ is 
\begin{equation}
\widehat v^{\Drift}_{\alpha}(g_{ab},K_{ab}) =  \{g_{ab};\; \ck_g W_{ab}, 0\}
\end{equation}
where $W^a$ is any vector field arising from the conformal York decomposition
of the trace free part $A_{ab}$ of $K_{ab}$:
\begin{equation}
A_{ab} = \sigma_{ab}+\frac{1}{2N_{g,\alpha}} \ck_g W_{ab}.
\end{equation}
The drift momentum is
\begin{equation}
\widehat m^{\Drift}_{\alpha}(g_{ab},K_{ab}) =  
\left\{g_{ab};\; \frac{1}{2N_{g,\alpha}}\ck_g (W+E)_{ab}, -\frac{2\kappa}{N_{g,\alpha}}\div_g (V) \right\}^*
\end{equation}
where the vector field $V^a$ and the divergence-free vector field $E^a$
are provided by Theorem \ref{thm:driftrev}.  
The drift kinetic energy of $\mathbf W$ is
\begin{equation}
\widehat K_{\alpha}^{\Drift}(\mathbf W) = \int_M \frac{1}{4}|\ck_g W|^2 \;\alpha.
\end{equation}
where $W^a$ is any representative of $\mathbf W$ satisfying the compatibility condition \eqref{eq:compat}
It is easy to see that
\begin{equation}
m_\alpha^{\Drift}(g_{ab},K_{ab}) = \widehat m_\alpha^{\Drift}(g_{ab},K_{ab})
\end{equation}
if and only if the pair $(g_{ab},K_{ab})$ satisfies the momentum constraint, so the drift momentum
of a solution of the constraint equations is well-defined regardless of which factor $\mathbf W$ or $\mathbf V$
we use to  parameterize drift velocity.  
The choice of using $\mathbf W$ or $\mathbf V$ is one of emphasis
between conformal and volumetric motion: we can apparently measure drift of the conformal class
relative to the normal direction, or we can measure drift of the volume form relative to the normal direction, 
but these two motions are linked by the momentum constraint and are not independent.  This class of linked motion can be parameterized in terms of $\Drift_g$, but we have two distinct natural parameterizations.

Continuing with our choice to parameterize drift motion by its volumetric component
we have the following.
\begin{theorem} \label{thm:mc}
Let $g_{ab}\in\calM$, let $\alpha$ be a lapse form, and let $X^a$ be a shift.
The map 
\begin{equation}
j_\alpha: T_{[g]}\calC/\calD_0\;\oplus\; T_{\omega_g}\calV/\calD_0\;\oplus\; \Drift_g\ra T^*_g \calM/\calD_0
\end{equation}
defined by 
\begin{equation}
j_\alpha = j_\alpha^\expspace\calC \oplus j_\alpha^\expspace\calV \oplus j_\alpha^{\Drift}
\end{equation}
is an isomorphism.  Moreover, consider the diagram
\begin{equation}\label{diag-j}
\begin{gathered}
\xymatrix{
  &\calK \ar@{<->}[drr] \ar@{<->}[dl]_{(\alpha,X^a)} \\
 T_g \calM \ar[d]_{\pi_*}
\ar@<2pt>@{<-}[rrr]
\ar@<-2pt>@{->}[rrr]_{i_{\alpha,X^a} }
& & & T^*_g \calM \\
T_{[g]}\calC/\calD_0\;\oplus\; T_{\omega_g}\calV/\calD_0\;\oplus\; \Drift_g
\ar@<2pt>@{<-}[rrr]
\ar@<-2pt>@{->}[rrr]_{j_{\alpha} }
& & &
T^*_g \calM/\calD_0 \ar[u]
}
\end{gathered}
\end{equation}
where the first two components of $\pi_*$ are the natural pushforwards
and the third component is $\pi_{\alpha,X^a}^{\Drift}$.  
Traversing the bottom loop of diagram \eqref{diag-j} starting
on the bottom row is the identity, and traversing the outermost loop starting at $\calK$ is
a projection onto second fundamental forms $K_{ab}$ such that $(g_{ab},K_{ab})$ solves
the momentum constraint.  
\end{theorem}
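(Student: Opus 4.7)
\noindent\textbf{Proof proposal for Theorem \ref{thm:mc}.}

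The plan is to build the proof in three stages: invertibility of $j_\alpha$, commutativity of the bottom loop, and the projection property of the outer loop. The first stage is essentially bookkeeping. Proposition \ref{prop:tstar} already gives the direct sum decomposition $T^*_g\calM/\calD_0 = T^*_{[g]}\calC/\calD_0 \oplus T^*_{\omega_g}\calV/\calD_0 \oplus \Im(j^{\Drift}_\alpha)$. Lemma \ref{lem:jinv2} shows $j_\alpha^\expspace\calC$ is an isomorphism onto $T^*_{[g]}\calC/\calD_0$, the formula \eqref{eq:jvol} shows $j_\alpha^\expspace\calV$ is an isomorphism onto $T^*_{\omega_g}\calV/\calD_0$, and the discussion following equation \eqref{eq:jalphadrift} shows $j_\alpha^{\Drift}$ is injective onto $\Im(j^{\Drift}_\alpha)$. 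Hence $j_\alpha = j_\alpha^\expspace\calC \oplus j_\alpha^\expspace\calV \oplus j_\alpha^{\Drift}$ is an isomorphism.

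For the bottom loop I will simply trace a general triple $(\{\mathbf u\}, v, \mathbf V)$ through the diagram. Writing $\{\mathbf u\}=\{g_{ab};\; u_{ab}\}$ with $u_{ab} = 2N_{g,\alpha}\sigma_{ab} + \ck_g W_{ab}$ via York splitting, representing $v$ via a $\beta$ with $\int \beta\,\omega_g = v$, and representing $\mathbf V = \{g_{ab};\;0,\div_g V\}$, applying $j_\alpha$ and summing the three pullbacks into $T^*_g\calM$ gives
\begin{equation}
\left(g_{ab};\; \sigma_{ab} + \frac{1}{2N_{g,\alpha}}\ck_g W'_{ab},\; -2\kappa\tau^* - \frac{2\kappa}{N_{g,\alpha}}\div_g(V+Q)\right)^*
\end{equation}
where $\tau^* = v/\int N_{g,\alpha}\omega_g$ and $W'$, $Q$ are supplied by Theorem \ref{thm:drift}. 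Applying $i_{\alpha,X^a}^{-1}$ via equation \eqref{eq:i_inv} and then $\pi_*$ componentwise, the $\ck_g(W'+X)$ and $\div_g X$ contributions are killed in the conformal and volumetric pushforwards, recovering $\{\mathbf u\}$ and $v$. For the drift component, the uniqueness clause in volumetric York splitting produces $\div_g V^{**} = \div_g(V+Q)$, and the key observation is that $\{g_{ab};\;0,\div_g Q\} = 0$ in $T_g\calM/\calD_0$ since $\Lie_g Q = (g_{ab};\;0,\div_g Q)$ when $Q$ is a conformal Killing field. Hence $\pi_{\alpha,X^a}^{\Drift}$ returns $\mathbf V$.

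For the outer loop, given $K_{ab}$ with trace/trace-free decomposition $K_{ab} = A_{ab} + (\tau/n)g_{ab}$, the ADM velocity is $(g_{ab};\;2N_{g,\alpha}A_{ab} + \ck_g X_{ab},\; N_{g,\alpha}\tau + \div_g X)$. Pushing forward along $\pi_*$, applying $j_\alpha$, and pulling back produces a new momentum $(g_{ab};\; A'_{ab}, -2\kappa\tau')^*$ whose corresponding second fundamental form $K'_{ab}$ satisfies the momentum constraint precisely because $W'$ was constructed in Theorem \ref{thm:drift} to solve the drift equation with right-hand side $\kappa\mathbf d[(1/N_{g,\alpha})\div_g(V^{**}+Q)]$, and this is exactly equation \eqref{eq:momTTF} for $(A',\tau')$. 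Conversely, when $K_{ab}$ already satisfies the momentum constraint, the pair $(W,V)$ extracted from York splitting already solves the drift equation, so Theorem \ref{thm:drift} allows the choice $Q = 0$; this forces $\ck_g W' = \ck_g W$ and $\div_g V^{**} = \div_g V$, which recovers $A'_{ab}=A_{ab}$ and $\tau'=\tau$. Idempotency and identification of the image then follow.

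The main technical pitfall is handling $Q$ cleanly: it is a genuine feature of $j_\alpha^{\Drift}$ and cannot generally be taken to be zero, but it must disappear in both the drift pushforward of the bottom loop and the momentum-constraint check of the outer loop. In both cases the mechanism is the same, namely that $\{g_{ab};\;0,\div_g Q\}$ vanishes in $T_g\calM/\calD_0$ and that $Q$ contributes nothing to the divergence of the corresponding tensor when $K_{ab}$ already satisfies the constraint. Once this pattern is isolated, the rest of the verification is routine substitution.
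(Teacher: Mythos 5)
Your proposal is correct and follows essentially the same route as the paper: the isomorphism claim rests on Proposition \ref{prop:tstar} together with the invertibility of the three component maps, and the loop properties reduce to the component-wise facts for the conformal, volumetric, and drift diagrams (which you re-derive by direct substitution rather than citing, with the cancellation of $Q$ via $\{g_{ab};\,0,\div_g Q\}=0$ handled exactly as in the paper's Section \ref{sec:drift-vmke}). The only cosmetic difference is in the outer loop, where the paper deduces the projection property abstractly from surjectivity of $i_{\alpha,X^a}^{-1}$, $\pi_*$, and $j_\alpha$, while you verify directly that momentum-constraint solutions are fixed points; both arguments rest on the same ingredients.
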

\begin{proof}
That $j_\alpha$ is an isomorphism follows from the fact that $j_\alpha^\expspace\calC$, $j_\alpha^\expspace\calV$, and
$j_\alpha^{\Drift}$ are isomorphisms, along with Proposition \ref{prop:tstar}.  That traversing
the bottom loop starting from the bottom row is the identity follows from the same fact
for diagrams \eqref{diag:dl-legendre-conf}, \eqref{diag:dl-legendre-conf-vol} and \eqref{diag:dl-legendre-drift}.
As a consequence, traversing the bottom loop starting from the middle row must be a projection.
Since the maps $i_{\alpha,X^a}^{-1}$, $\pi_*$, and $j_\alpha$ are surjective, the image of
$T_g^*\calM$ after traversing the bottom loop is the image of $T_g^*\calM/\calD_0$ under
the natural pullback, i.e., the divergence-free elements.  Hence traversing the outermost loop starting at 
$\calK$ is a projection onto the second fundamental forms with divergence-free ADM momenta, 
i.e., the solutions of the momentum constraint.
\end{proof}

Note that although Proposition \ref{prop:CVD} implies
\begin{equation}
T_g \calM/\calD_0 \approx T_{[g]}\calC/\calD_0\;\oplus\; T_{\omega_g}\calV/\calD_0\;\oplus\; \Drift_g,
\end{equation}
the map $\pi_*$ from Theorem \ref{thm:mc} is \textit{not}
the pushforward from $T_g\calM$ to $T_g\calM/\calD_0$.
Indeed, if $W^a$ and $V^a$ are vector fields, the pushforward of $(g_{ab};\; \ck_g W_{ab}, \div_g V)$ is the drift
$\{ g_{ab};\;0,\div_g(V-W)\}$, but the drift component of $\pi_*((g_{ab};\; \ck_g W_{ab}, \div_g V))$
is $\{ g_{ab};\;0,\div_g(V)\}$. This is the key
distinction between diagrams \eqref{diag:dl-legendre-extended-M} and \eqref{diag:dl-legendre-drift}
and is what ensures that $j_\alpha$ is always an isomorphism even though $k_\alpha$
from diagram \eqref{diag:dl-legendre-extended-M} can fail to be one.  As always, the choice
to make the drift component of $\pi_*((g_{ab};\; \ck_g W_{ab}, \div_g V))$
equal to $\{ g_{ab};\;0,\div_g(V)\}$ rather than 
$\{ g_{ab};\;\ck_g W,0\}=\{ g_{ab};\;0,-\div_g W\}$ is arbitrary, and a 
result analogous to Theorem \ref{thm:mc} holds when using conformal drift.

On the other hand, if we
do identify $T_{[g]}\calC/\calD_0\,\oplus\, T_{\omega_g}\calV/\calD_0\,\oplus\, \Drift_g$
with $T_g\calM/\calD_0$ (thinking of $\pi_*$ as a projection into $T_g\calM/\calD_0$
that is not the pushforward),  we can 
interpret $j_\alpha$ as being the Legendre transformation of the total kinetic energy
\begin{equation}\label{eq:KE}
\calK_\alpha(\mathbf u, v, \mathbf V) = \calK^\expspace\calC_\alpha(\mathbf u) + 
\calK^\expspace\calV_\alpha(v) + \calK^{\Drift}_\alpha(\mathbf V)
\end{equation}
where $(\mathbf u, v, \mathbf V)\in T_{[g]}\calC/\calD_0\;\oplus\; T_{\omega_g}\calV/\calD_0\;\oplus\; \Drift_g
\approx T_g\calM/\calD_0$.  
Equation \eqref{eq:KE} can be obtained from the ADM kinetic energy, but we
must account for the fact that we are representing drift velocity in terms of $\mathbf V$
not $\mathbf W$. Recall that if $(g_{ab};\; u_{ab},\beta)\in T_{g}\calM$, the ADM kinetic energy is
\begin{equation}
\int \frac{1}{4}|u-\ck_g X|^2_g -\kappa (\beta-\div_g X)^2\; \alpha
\end{equation}
Decomposing $u_{ab}$ and $\beta$ according to equations \eqref{eq:cKEsplit} and \eqref{eq:vKEsplit}
we can rewrite the kinetic energy as
\begin{equation}\label{eq:ADMKEsplit}
\int N_{g,\alpha}^2|\sigma|^2_g - N_{g,\alpha}^2(\tau^*)^2 +\kappa\frac{1}{4}|\ck_g W|_g^2 -\kappa (\div_g V)^2\;\alpha
\end{equation}
The first two terms are the conformal and volumetric kinetic energies.  If the momentum constraint
is satisfied, then $V^a$ will satisfy the compatibility condition \eqref{eq:driftcompat2}
and hence the term involving $\div_g V$ in expression \eqref{eq:ADMKEsplit} 
is the drift kinetic energy. So the total kinetic energy
\eqref{eq:KE} is obtained from the ADM kinetic energy by dropping the $\ck_g W$ term.  
In the dual treatment of drift velocity, the total kinetic energy would be obtained by
keeping the $\ck_g W$ term and dropping the $\div_g V$ term instead.

\section{Drifts and the Conformal Method}\label{sec:driftcm}

Theorem \ref{thm:mc} and Proposition \ref{prop:tstar} show that given a choice of lapse form $\alpha$,
solutions of the momentum constraint can be parameterized in terms
of their conformal, volumetric, and drift momenta. Hence drift momentum naturally
complements the candidate parameters for conformal-like methods discussed at the end
of Section \ref{sec:volmom}, and we consider in this section variations of the conformal method
that incorporate drift as a parameter.

Suppose $(\ol g_{ab}, \ol K_{ab})$ is a solution of the momentum constraint with
conformal momentum $\bfsigma = \{\ol g_{ab};\; \ol \sigma_{ab}\}^*$,
volumetric momentum $-2\kappa\tau^*$, and drift momentum
\begin{equation}
\left\{g_{ab};\; 
\frac{1}{2N_{\ol g}}\ck_{\ol g}W_{ab}, 
-\frac{2\kappa}{N_{\ol g, \alpha}} \div_{\ol g} V
\right\}
\end{equation}
where $W^a$ and $V^a$ are vector fields solving the drift equation
\begin{equation}\label{eq:driftconf-a}
\div_{\ol g}\left[ \frac{1}{2N_{\ol g, \alpha}} \ck_{\ol g} W\right] = \kappa\, \mathbf{d} \left[ \frac{1}{N_{\ol g,\alpha}} \div_{\ol g} V\right].
\end{equation}
Then 
\begin{equation}
\ol K_{ab} = \ol \sigma_{ab} + \frac{1}{2N_{\ol g,\alpha} }\ck_{\ol g} W + \frac{1}{n}\left(\tau^* + \frac{1}{N_{\ol g, \alpha}} \div_{\ol g}V \right)
\end{equation}
and $K_{ab}$ has mean curvature
\begin{equation}\label{eq:taudrift}
\tau = \ol g^{ab}\ol K_{ab} = \tau^* + \frac{1}{N_{\ol g, \alpha}} \div_{\ol g} V.
\end{equation}

Now suppose $\ol g_{ab} = \phi^{q-2}g_{ab}$ for some conformally related metric $g_{ab}$.  The conformal momentum
$\bfsigma$ is represented at $g_{ab}$ by $\sigma_{ab}=\phi^2\ol \sigma_{ab}$ and the volumetric momentum 
is still $-2\kappa\tau^*$.  Using the transformation law $N_{\ol g, \alpha} = \phi^q N_{g,\alpha}$
and the conformal transformation laws for divergences and for the conformal Killing operator, 
equation \eqref{eq:driftconf-a} can be written in terms of $g_{ab}$ as 
\begin{equation}\label{eq:driftconf}
\div_{g} \left[ \frac{1}{2N_{g,\alpha}}\ck_g W\right] = \kappa \phi^q \extd \left[ \frac{\phi^{-2q}}{N_{g,\alpha}} \div_g(\phi^q V) \right].
\end{equation}
Note that since
\begin{equation}\label{eq:tauconf}
\tau =  \tau^* + \frac{\phi^{-2q}}{N_{ g, \alpha}} \div_{g} (\phi^q V),
\end{equation}
equation \eqref{eq:driftconf} is simply the 
CTS-H momentum constraint after substituting
equation \eqref{eq:tauconf}.  The Hamiltonian constraint for $(\ol g_{ab}, \ol K_{ab})$
can also be written in terms of $g_{ab}$ making this same substitution and
we arrive at two conformal methods depending on whether we specify $V^a$ or $W^a$
in equation \eqref{eq:driftconf}.
First, using Theorem \ref{thm:drift} we can specify $V^a$ up to a conformal Killing field
$Q^a$ and we obtain the following modification of the CTS-H method.
\begin{problem}[CTS-H with Volumetric Drift] \label{prob:CTS-H-V}
Let $g_{ab}$ be a metric, $\sigma_{ab}$ a transverse traceless
tensor with respect to $g_{ab}$, $\tau^*$ a constant, $V^a$ a vector field, and
$\alpha$ a lapse form.  Setting $N=\omega_g/\alpha$,
find a conformal factor $\phi$, a vector field $W^a$ and a conformal Killing field
$Q^a$ such that
\begin{equation}
\begin{aligned}\label{eq:CTS-H-V}
-\dimk\Lap_g\phi + R_g\phi - \left|\sigma +\frac{1}{2N} \ck_g W\right|^2_g\phi^{-q-1} + \kappa\left(\tau^*+ \frac{\phi^{-2q}}{N}\div(\phi^q (V+Q)\right)^2\phi^{q-1} &= 0 \\
\div_g\left(\frac{1}{2N}\ck_g W\right) -\kappa \phi^q \extd \left( \frac{\phi^{-2q}}{2N} \div_g(\phi^q(V+Q))\right)&=0.
\end{aligned}
\end{equation}
\end{problem}
Alternatively, we can apply Theorem \ref{thm:driftrev} and specify $W^a$ up to a $\ol g_{ab}$
divergence-free vector field.  Since $\phi^{-q} E^a$ is divergence-free with respect to $\ol g_{ab}$
if and only if $E^a$ is divergence-free with respect to $g_{ab}$ we obtain the following.
\begin{problem}[CTS-H with Conformal Drift]  \label{prob:CTS-H-C}
Let $g_{ab}$ be a metric, $\sigma_{ab}$ a transverse traceless
tensor with respect to $g_{ab}$, $\tau^*$ a constant, $W^a$ a vector field, and
$\alpha$ a lapse form.  Setting $N=\omega_g/\alpha$,
find a conformal factor $\phi$, a vector field $V^a$ and a divergence-free vector field
$E^a$ such that
\begin{equation}\label{eq:CTS-H-C}
\begin{aligned}
-\dimk\Lap_g\phi + R_g\phi - \left|\sigma +\frac{1}{2N} \ck_g (W+\phi^{-q} E)\right|_g^2\phi^{-q-1} + \kappa\left(\tau^*+ \frac{\phi^{-2q}}{N}\div_g(\phi^q V)\right)^2\phi^{q-1} &= 0 \\
\div_g\left(\frac{1}{2N}\ck_g (W+\phi^{-q}E)\right) -\kappa \phi^q \extd \left( \frac{\phi^{-2q}}{2N} \div_g(\phi^q(V))\right) &=0.
\end{aligned}
\end{equation}
\end{problem}

The drift parameterizations in systems \eqref{eq:CTS-H-V} and \eqref{eq:CTS-H-C}
pose significant analytical challenges beyond those of the standard
conformal method.  For example, both equations of both systems are second order in $\phi$, and the Hamiltonian constraint is no longer semilinear in $\phi$.  Although the equations
for the standard conformal method are technically simpler, and therefore more attractive
at first glance, it may be that more sophisticated 
equations are required to effectively parameterize non-CMC solutions of the constraint equations.
We will return to the analysis systems \eqref{eq:CTS-H-V} and \eqref{eq:CTS-H-C} in future work. 
For now, we make some observations
to suggest that these systems are not intractable.  First, for fixed $\phi$,
the problem for the momentum constraint is equivalent
to solving one of the variations \eqref{eq:drifteq}
or \eqref{eq:driftrev} of the drift equation \eqref{eq:drift}
with respect to the metric $\phi^{q-2} g_{ab}$.  These are 
well-posed problems and hence it is natural to consider iteration schemes, not
unlike those for the standard conformal method, that alternate between solving 
the Hamiltonian and momentum constraints.  Semilinearity of the Hamiltonian constraint
could be restored in such an iteration scheme by constructing a sequence of mean 
curvatures according to equation \eqref{eq:taudrift}.  Moreover, since the CMC conformal method arises as the 
special case $V^a=0$ in system \eqref{eq:CTS-H-V} or $W^a=0$ in system \eqref{eq:CTS-H-C},
a natural first step is to consider the near-CMC case where
$V^a$ or $W^a$ is small.  It seems likely that near-CMC results similar to those
of the standard conformal method are feasible, and the harder work will be
determining the extent to which the geometric and physical structures that motivate the
drift parameterizations 
are sufficient to address the shortcomings of the standard conformal method for
non-constant mean curvature.

There is also the possibility that Problems \ref{prob:CTS-H-V} and \ref{prob:CTS-H-C}
require further refinement. We are representing drifts by vector fields, and this
introduces a degeneracy not present in the standard conformal method.
A solution of the constraint equations uniquely determines a conformal class and,
after selecting a lapse form, a conformal, volumetric and drift momentum.  But the
drift momentum determines a subspace of vector fields:
if $(g_{ab},\sigma_{ab},\tau^*,V^a,\alpha)$ is a tuple of conformal parameters for system \eqref{eq:CTS-H-V}
generating a solution $(\ol g_{ab}, \ol K_{ab})$ of the constraints, this same solution 
will be generated by $(g_{ab},\sigma_{ab},\tau^*,V^a+\ol E^a+\ol Q^a,\alpha)$ whenever $\ol Q^a$
is a conformal Killing field for $\ol g_{ab}$ and $\ol E^a$ is divergence-free with respect to $\ol g_{ab}$.
The conformal Killing field is not problematic since the set of conformal Killing fields is a 
conformal invariant, but the divergence-free vector fields for $\ol g_{ab}$ are not known 
\textit{a priori}, and
this poses a difficulty in determining
if two tuples of conformal parameters determine the same solution of the constraints. A successful
analysis system \eqref{eq:CTS-H-V} should exhibit an identifiable subset of vector 
fields $V^a$ such that solutions of the constraint equations determine only one vector field
from the subset, with a similar requirement holding for system \eqref{eq:CTS-H-C}.  The main difficulty is that
of uniquely representing drifts at $\ol g_{ab}$ using the conformally related
metric $g_{ab}$, but without knowing the connecting conformal factor.  

If $g_{ab}$ does not admit nontrivial conformal Killing fields, there is 
a way to uniquely identify the drifts at $g_{ab}$ with the drifts the conformally
related metric $\widehat g_{ab} = \phi^{q-2} g_{ab}$, and this leads to third, alternative,
parameterization. Let $\mathbf V\in \Drift_g$
and let $V^a$ be any representative.  We then send $\mathbf V$ to
\begin{equation}
\widehat {\mathbf V} = \{\widehat g_{ab};\; 0, \div_{\widehat g} (\phi^{-q} V)\} \in \Drift_{\widehat g}.
\end{equation}
The map is well defined, for if $U^a$ is another representative of $\mathbf V$,
there is a divergence-free vector field $E^a$ such that $U^a=V^a+E^a$; this uses
the fact that there are no nonzero conformal Killing fields.  But then
$\phi^{-q} U^a = \phi^{-q} V^a + \phi^{-q} E^a$, and since $\phi^{-q} E^a$
is divergence-free with respect to $\widehat g_{ab}$, 
\begin{equation}
\{\widehat g_{ab};\; 0, \div_{\widehat g} \phi^{-q} U\} =\{\widehat g_{ab};\; 0, \div_{\widehat g} \phi^{-q} V\}.
\end{equation}
Hence the map is well defined, and since it has an analogous inverse we have established an 
identification of $\Drift_{ g}$
with $\Drift_{\widehat g}$.  Using this identification we make the substitution $V^a\mapsto \phi^{-q} V^a$
into equation \eqref{eq:taudrift} to obtain
\begin{equation}
\tau = \tau^* + \frac{\phi^{-2q}}{2N_{g,\alpha}}\div_g V
\end{equation}
and then substitute this mean curvature into the CTS-H equations.  Note, however, that $\div_g V$ is
a zero-mean function with respect to $g_{ab}$ and one can dispense with the vector field entirely.
\begin{problem}[CTS-H with Lapse-Scaled Mean Curvature]  
Let $g_{ab}$ be a metric with no nontrivial conformal Killing fields, $\sigma_{ab}$ a transverse traceless
tensor with respect to $g_{ab}$, $\tau^*$ a constant, $\xi$ a zero-mean function, and
$\alpha$ a lapse form.  Setting $N=\omega_g/\alpha$,
find a conformal factor $\phi$ and a vector field $W^a$ such that
\begin{equation}
\begin{aligned}\label{eq:CTS-H-lapsetau}
-\dimk\Lap_g\phi + R_g\phi - \left|\sigma +\frac{1}{2N} \ck_g W\right|^2_g\phi^{-q-1} + \kappa\left(\tau^*+ \frac{\phi^{-2q}}{N}\xi\right)^2\phi^{q-1} &= 0 \\
\div_g\left(\frac{1}{2N}\ck_g W\right) - \kappa \phi^q d\; \left( \frac{\phi^{-2q}}{2N} \xi \right)
&= 0.
\end{aligned}
\end{equation}
\end{problem}
One could also work with the substitution $W^a\mapsto \phi^{-q} W^a$ in system \eqref{eq:CTS-H-C}
and obtain an analogous version of system \eqref{eq:CTS-H-lapsetau}, but this seems somewhat unnatural.

The drift parameterization also has the potential to inform the standard conformal method
when the background metric has nontrivial conformal Killing fields.  Very little is
known in this case: we have near-CMC existence under the very strong restriction that the mean 
curvature is constant along each
flow line of every conformal Killing field \cite{Bruhat:1992}, and we have near-CMC nonexistence
on Yamabe-non-negative manifolds if the conformal momentum is zero\cite{Isenberg:2004jd}.  Moreover, one can show
that conformal Killing fields pose a genuine obstacle for some near-CMC seed data,
but not others \cite{Maxwell:2014c}. The difficulty with conformal Killing fields
arises since the CTS-H momentum constraint is not always solvable when conformal Killing fields 
are present. Using the ideas that led to system \eqref{prob:CTS-H-V} 
one can adjust the standard conformal method to include a correction term involving 
a conformal Killing field to restore solvability of the momentum constraint, and
we will address this modification of the CTS-H equations in future work.

\section{Conclusion}

In hindsight, York's original CMC conformal method can be thought of as having three parameters: 
\begin{itemize}
\itemsep 0pt
\item 
a conformal class $\mathbf g$ in $\calC$, 
\item 
a conformal momentum $\bfsigma$ in $T^*\calC/\calD_0$, and 
\item 
a volumetric momentum $-2\kappa\tau_0$ in $T^*\calV/\calD_0$. 
\end{itemize}
CMC data sets are special, however: their conformal and volumetric momenta
are unambiguously defined, intrinsic properties.  The extension of the
conformal method to non-CMC initial data sets employs a 
fourth parameter, a densitized lapse, which is used to measure
conformal momentum in a way that only depends on conformal properties
of the solution.  The conformal momentum is  
is compatible with the ADM Lagrangian, as seen in diagram \eqref{diag:dl-legendre-conf},
and the resulting non-CMC conformal method has four parameters:
\begin{itemize}
\itemsep 0pt
\item 
a conformal class $\mathbf g$ in $\calC$, 
\item 
a densitized lapse, represented by a lapse form $\alpha$,
\item
a conformal momentum $\bfsigma$ in $T^*\calC/\calD_0$ as measured by $\alpha$, and 
\item 
a mean curvature $\tau$.
\end{itemize}
In this formulation the mean curvature no longer directly controls the
volumetric momentum of the solution.  We saw
in Section \ref{sec:volmom}, however, that the mechanism used by the
standard conformal method to interpret conformal momentum
can be applied to the volumetric degrees of freedom, 
and volumetric momentum, as measured by a densitized lapse,
emerges as a property of a non-CMC initial data set. The parallels
between conformal and volumetric momenta are striking, and 
indeed the volumetric theory described in Section \ref{sec:volumetricLegendre}
is completely analogous to the conformal theory of Section \ref{sec:conformalLegendre}.
We have therefore considered alternatives to the conformal method where
the mean curvature is determined indirectly by specification
of a volumetric momentum and some other ingredient, and we have
identified drifts as playing a role in understanding these
alternatives.  

Indeed, every solution of the momentum constraint is a sum of a conformal momentum,
a volumetric momentum,  and a drift momentum, which is represented by
a pair of vector fields $W^a$ and $V^a$ solving the
drift equation
\begin{equation}\label{eq:concDrift}
\div_g\left(\frac{1}{2N_{g,\alpha}} \ck_g W\right) = \kappa\; \extd \left(\frac{1}{N_{g,\alpha}} \div_g V\right).
\end{equation}
Section \ref{sec:drift-mom} showed that equation \eqref{eq:concDrift} 
is not really a relationship between vector fields, but is
a relationship between a pair of drifts $(\mathbf W,\mathbf V)$.  Moreover, the relationship is symmetric:
either of $\mathbf W$ or $\mathbf V$ determines the other, and each of $\mathbf W$ or $\mathbf V$
can be taken as the velocity representing drift motion. Section \ref{sec:drift-vmke}
described dual theories, depending on the choice of using $\mathbf W$ or $\mathbf V$, 
in which the ADM kinetic energy descends to  
a kinetic energy Lagrangian without constraints on a 
tangent space decomposed into conformal, volumetric, and drift motion.
We were obligated, however, to pick either conformal or volumetric drift as representing drift velocity
because the difference $\mathbf V-\mathbf W$, which is the drift component of ADM velocity projected 
into $T\calM/\calD_0$, is not always sufficient to detect distinct solutions  of the constraint equations.

These results show that the introduction of a densitized lapse into the ADM Lagrangian
leads to a rich structure.
Although some of this structure is employed by the standard conformal method,
some of it is ignored, and in Section \ref{sec:driftcm} we saw
that there are alternative extensions of the CMC conformal method
that incorporate volumetric momentum and drift as parameters instead of
mean curvature.  Indeed there are a number of ways to do this, and it
is not yet clear how to best work with drift.  Nevertheless, 
future progress in applying the conformal method, or some variation,
in the non-CMC setting will require new ideas.  An improved understanding 
of the geometry of the conformal method,
of the type sought here, may well assist with these efforts.


\section*{Acknowledgment}
This work was supported by NSF grant 0932078 000 while I was a resident at 
the Mathematical Sciences Research Institute in Berkeley, California,
and was additionally supported by NSF grant 1263544.

\bibliographystyle{amsalpha-abbrv}
\bibliography{Drift,Drift-manual}

\end{document}